\newcommand{\be}{\begin{equation}}
\newcommand{\ee}{\end{equation}}
\newcommand{\x}{\mathbf{x}}
\newcommand{\y}{\mathbf{y}}
\newcommand{\myt}{\mathbf{v}}
\newcommand{\myu}{\mathbf{u}}
\newcommand{\myg}{\mathbf{g}}
\newcommand{\mylambda}{\boldsymbol\lambda}
\newcommand{\mybeta}{\boldsymbol{\beta}}
\newcommand{\myepsilon}{\boldsymbol{\epsilon}}
\newcommand{\act}{\sigma}
\newtheorem{theorem}{Theorem}
\newtheorem{remark}{Remark}
\newtheorem{Assumption}{Assumption}
\theoremstyle{definition}
\newtheorem{definition}{Definition}[section]
\DeclareMathOperator*{\maxf}{maximize \quad}
\newcommand{\maxdisp}[3]{\begin{align*}\maxf_{#1} & #2\\\st  & #3\end{align*}}
\title{Deep Partial Least Squares for Empirical Asset Pricing\footnote{This paper was presented as a keynote talk at the R in Finance Conference, Chicago, 2022. The authors would like to thank Peter Carl (Balyasny Asset Management), Francois Cocquemas (Florida State University), and Kris Boudt (KU Leuven), for useful discussions and input on this paper.}~\footnote{The source code and data for this paper is available at \url{https://github.com/Kem975/Deep_Partial_Least_Squares}.}}
\author{
	\makebox[.4\linewidth]{Matthew Dixon\footnote{Email: matthew.dixon@iit.edu}}\\
	\textit{\small  Department of Applied Mathematics }\\
	\textit{\small  Illinois Institute of Technology}\\	
	\and
	\makebox[.4\linewidth]{Nicholas G. Polson}\\
	\textit{\small  Booth School of Business}\\
	\textit{\small  University of Chicago}\\
	\and
	\makebox[.4\linewidth]{Kemen Goicoechea}\\
	\textit{\small  Department of Applied Mathematics}\\
	\textit{\small   Illinois Institute of Technology}\\
}
\begin{document}
\maketitle

\begin{abstract}
\noindent 
We use deep partial least squares (DPLS) to estimate an asset pricing model for individual stock returns that exploits conditioning information in a flexible and dynamic way while attributing excess returns to a small set of statistical risk factors. The novel contribution is to resolve the non-linear factor structure, thus advancing the current paradigm of deep learning in empirical asset pricing which uses linear stochastic discount factors under an assumption of Gaussian asset returns and factors. This non-linear factor structure is extracted by using projected least squares to jointly project firm characteristics and asset returns on to a subspace of latent factors and using deep learning to learn the non-linear map from the factor loadings to the asset returns. The result of capturing this non-linear risk factor structure is to characterize anomalies in asset returns by both linear risk factor exposure and interaction effects. Thus the well known ability of deep learning to capture outliers, shed lights on the role of convexity and higher order terms in the latent factor structure on the factor risk premia. On the empirical side, we implement our  DPLS factor models and exhibit superior performance to LASSO and plain vanilla deep learning models. Furthermore, our network training times are significantly reduced due to the more parsimonious architecture of DPLS. Specifically, using 3290 assets in the Russell 1000 index over a period of December 1989 to January 2018, we assess our DPLS factor model and generate information ratios that are approximately 1.2x greater than deep learning. DPLS explains variation and pricing errors and identifies the most prominent latent factors and firm characteristics. 

\end{abstract}

\noindent {\bf Key Words:}  Deep Learning, Partial Least Squares, Shrinkage, Empirical Asset Pricing, Conditional Latent Factor Models, Non-Linear Risk.

\newpage
\topmargin=0in
\textwidth=6.6in
\section{Introduction}
In this paper, we develop a deep factor methodology known as deep partial least squares (DPLS) as a projection based dimensionality reduction technique for  empirical asset pricing regressions which, combined with deep learning, provides improved predictive performance on financial panel data while also providing information on the risk factor premia. Our DPLS approach is shown to be a natural, composable, and powerful approach to projecting both the predictors and the responses onto a relatively small set of orthonormal latent variables so as to maximize the covariance between the projected predictors and the projected responses.  In contrast to principal component dimension reduction, PLS \& DPLS projections are dependent on the response and thus leverage a wider dataset of firm charactersistcs for predicting returns. \cite{Polson2021} find evidence that DPLS shows superior predictive performance on non-financial cross-sectional data.
\cite{Dixon2020} show that deep learning identifies the most important interaction effects and factors.  
 DPLS builds on this by boosting the covariance between the response and predictor, while simultaneously using the same statistical loadings generated by PLS. 
Furthermore, because DPLS is based on deep learning, it is able to automatically capture interaction effects. 

Partial Least Squares (PLS) is a wide class of methods for modeling relations
between sets of observed variables by means of latent variables. It is comprised
of regression and classification tasks as well as dimension reduction techniques
and modeling tools. The underlying assumption of all PLS methods is that the
observed data is generated by a system or process which is driven by a smaller
number of statistical latent variables.

Across hedge funds, asset management, and proprietary trading firms, it is commonplace to use supervised learning for macroeconomic and financial forecasting by performing "feature engineering" --- expanding the original set of covariates to include terms such as interactions, dummy variables or nonlinear functionals of the original predictors \citep{Hoadley2000, Dobrev2013RobustFB,DBLP:journals/corr/DixonKB16}. It's also increasingly common to aggregate multiple data sources for financial forecasting, combining traditional financial data such as stock prices with novel signals such as news sentiment \citep{arxiv.2009.07947, ALGABA2021}. Our goal is to employ supervised dimension reduction techniques which, when combined with deep learning, not only leads to a more parsimonious architecture with improved out-of-sample performance, but enable factor models to be properly seated in an equilibrium asset pricing framework in which the statistical factors serve as latent risk factors and a non-linear risk factor structure is represented by the network architecture. Our approach corresponds to the use of a non-linear stochastic discount factor (see \cite{Cochrane2001}).

The rest of the paper is outlined as follows.  Section \ref{sect:relation} outlines the general connections with previous work and highlights the contribution of this paper. Section \ref{sect:factors} surveys the related works on factor modeling works in more detail culminating in the introduction of our non-linear factor model, best viewed as a non-linear generalization of the BARRA factor model.
Section \ref{sect:regression} describes our general modeling methodology using deep learning within partial least squares for multi-index regression.
Following \cite{Polson2021}, we provide a nonlinear extension of partial least squares by using deep layers to model the relationship between the scores generated in PLS. 
Then in Section \ref{sect:DPLS_factors}, we present the DPLS factor model and some of its variants. Section \ref{sect:results} demonstrates the application of our DPLS factor model to equity data consisting of monthly returns and firm characteristics. Finally, Section \ref{sect:summary} concludes with directions for future research.

\subsection{General Connection to Previous Work} \label{sect:relation}
Within the investment management industry, asset managers seek novel predictive firm characteristics to explain anomalies which are not captured by classical capital asset pricing and factor models. Recently a number of independent empirical studies, have shown the importance of using a higher number of, often highly correlated,  predictors related to firm characteristics and other common factors \citep{moritz2016, 10.1093/rfs/hhv059, Gu2018, Feng2018}.  In particular, \cite{Gu2018} analyze a dataset of more than 30,000  individual  stocks  over a 60 year period from 1957 to 2016, and determine over 900 correlated baseline signals.

Despite being widely used in the asset management industry, it is becoming increasingly well documented that ordinary linear regression is inadequate for high dimensional factor models.  The frequently encountered problem of nearly collinear regressors can be addressed using regularization to provide shrinkage estimation \citep{hahn_partial_2013, Dobrev2013RobustFB, 10.1093/restud/rdw005}. \cite{Dobrev2013RobustFB} develops a Regularized Reduced Rank Regression model, a special case of which generalizes principal component regression by applying reduced rank rather than linear regression to the principal components of the
regressors, thereby disentangling the forecasting factors driving the outcomes from the factor structure in the predictors.

\cite{moritz2016,Gu2018, pelgery2019} highlight the inadequacies of OLS regression in variable selection over high dimensional datasets --- in particular the inability to capture outliers. In contrast, deep neural network can explain more structure in stock returns because of their ability to fit flexible functional forms with many covariates, without apriori knowledge of the interaction effects. \cite{Feng2018} demonstrates the ability of a three-layer deep neural network to effectively predict asset returns from fundamental factors. \cite{GU2021429} apply auto-encoders to fit a latent factor conditional asset pricing model and demonstrate that the approach not only out-performs other factor models but is suitable for no-arbitrage pricing. However, techniques which project on statistical factors are difficult to economically interpret and have limited hedging utility.  

\cite{Dixon2020} demonstrate a three-layer deep neural network approach for fundamental factor modeling which not only outperforms OLS regression and LASSO but also provides interpretability --specifically, the importance of the predictors and interactions can be ranked and the uncertainty in predictor sensitivities can be estimated with bootstrap sampling techniques. \cite{Dixon2022} extend their methodology to a Bayesian estimation framework to provide uncertainty quantification and a predictive distribution of excess returns. 

There are a number of advantages of DPLS methodology over traditional deep learning. First, PLS uses linear methods to extract predictors and weight matrices. Due to a fundamental estimation result of \cite{brillinger_generalized_2012} we show that these estimates can provide a stacking block for our DPLS model. Moreover, traditional diagnostics (bi-plots, scree-plots) to diagnose the depth and nature of the activation functions allow us to find a parsimonious model architecture. \cite{frank_statistical_1993} provides a Bayesian shrinkage interpretation of PLS. This permits the discrimination between unsupervised and supervised learning method. In other words, based on the value of shrinkage factors, one can deduce whether a supervised learning method such as DPLS is necessary or whether an unsupervised dimension reduction method such as PCA is adequate.

One of the remarkable discoveries by \cite{Polson2021} in fitting DPLS regression models is that the PLS projection can be entirely separated from the training of the deep network, thus enabling standard deep neural networks algorithms, such as stochastic gradient descent, to be used without modification. Drawing on a seminal result by \cite{brillinger77} which paved the way for the landmark paper by \cite{brillinger_generalized_2012} and later \cite{Naik2000}, we formalize the critical observation that under some limiting assumptions,  the projection directions in PLS are invariant to the functional form of the regression between score matrices, up to a constant of proportionality. 


Our theoretical development parallels \cite{brillinger_generalized_2012}, namely that PLS regression is a strongly consistent estimator to the regression coefficient even when there exists a unknown non-linear relationship between the predictors and the responses. We invert this line of reasoning and instead show that the linear PLS estimate is strongly consistent with the score regression coefficients when the regression on scores is a (known) non-linear function. This unlocks PLS regression as a composable method -- non-linear functional approximations can be composed with the PLS projection matrices without violating the consistency result. More precisely, one can use still use the same projections and x-loadings estimated by PLS, under a linear relationship between the input and output scores, while subsequently using deep learning to improve the predictive performance of the y-scores.

In the context of empirical asset pricing, the DPLS method is shown to deliver a key point of departure from  \cite{Gu2018, pelgery2019, KPS2019} in enabling the modeling of a \emph{non-linear factor structure} which corresponds to an assumed \emph{non-linear stochastic discount factor}.  Furthermore, we develop interpretative techniques for DPLS which show both firm characteristic and latent factor sensitivities and interaction effects are provided by the DPLS factor model.

As a further contribution, we show how our DPLS factor models not only provide a non-linear generalization of BARRA fundamental factor models but also briefly discuss how they provide a non-linear, non-stationary extension of dynamic factor models \citep{16320}.

On the empirical side, we develop a no-arbitrage DPLS factor model with 50 factors for 3290 Russell 1000 indexed stocks over an approximate 30 year period and compare performance with LASSO and pure deep learning based fundamental factor models. Furthermore, we assess the risk-reward trade-off of our factor model and identify the implications of including a non-linear factor structure.

\section{Factor Modeling} \label{sect:factors}
Our goal is to explain the differences in the cross-section of returns for individual stocks using a  latent factor no-arbitrage asset pricing model,  whereby any expected gain in asset returns over the risk-free rate is solely attributed to factor risk exposure. But rather than this returns model being linear in factors, or equivalently, assuming linear dependence of a stochastic discount factor on the factors, we seek to introduce non-linearity. The rationale for non-linearity in factor modeling is well motivated in the literature and can be reasoned from Stein's lemma, that the non-linearity in the dependency of the returns on the factors arises precisely when the asset returns and factors are non-Gaussian.

While the practice of applying non-linear parametric functions, such as deep learning, to factor models is conceptually straightforward,  the challenge becomes in how to go beyond a mere mechanical exercise in learning a non-linear function of predictor variables that maximizes the out-
of-sample explanatory power for realized excess returns of stocks,  to characterizing the risk-return tradeoff of assets built from our factor model, under the equilibrium asset pricing principle that characteristic-based predictability are determined through risk exposures to factors alone \citep{KPS2019}.
Thus, in informal terms, in order for factor models to be correctly seated in the framework of equilibrium asset pricing, it must resolve the predictability attributed to the factor exposures and not conflate this with mis-pricing. Furthermore, they must be no-arbitrage factor models and hence provide no intercept when regressed against excess returns.

The need for both explanatory power from a set of predictor variables, taken to be lagged firm characteristics, and risk-return properties inevitably leads to model design trade-offs. For example some machine learning approaches to factor modeling yield strong predictive power but yield no insight into risk factor attribution and therefore have questionable economic interpretability and practical utility. On the other hand, other approaches deploy machine learning in such a way as to guarantee that the characteristic-based predictability is attained solely through factor risk exposures while accounting for non-linearity between the factor loadings and the firm characteristics. However, those approaches that do resolve the risk factors, still rely on a linear relationship between the returns and factor loadings, or equivalently the stochastic discount factor remains linear in the factors \citep{pelgery2019}.   

Then there are questions of how best to reduce the set of predictor variables to a smaller number of latent factors to manage the risk-reward tradeoff.  Such factors should ideally carry economic interpretability and be representable by factor mimicking portfolios.  We can broadly separate the main approaches to non-linear factor models into the following three categories:

\begin{enumerate}
\item \textbf{Forecasting models} which predict the excess returns directly as a non-linear function of the lagged firm characteristics. This essentially amounts to a pure exercise in prediction of excess returns given the lagged firm characteristics, without assessing the risk/return tradeoff in terms of the latent factors (see \cite{Gu2018, Dixon2020}).
\item \textbf{Linear Conditional Latent Factors} preserve a linear relationship between the excess returns and the factors -- amounting to the use of a linear stochastic discount factor in the factors -- but introducing the non-linearity between the conditional factor loadings and the lagged firm characteristics (see, for example, \cite{KPS2019, pelgery2019, GU2021429}).
These approaches allow for latent factors and time-varying loadings which ``\emph{can explain individual stock
returns and estimate the best performing factors without taking an a prior stand on what the factors are}'' \cite{KPS2019}. This category of approaches is thus suitable for asset pricing as it provides insight into the factor risk premia by conditioning on firm characteristics.
\item \textbf{Non-linear Conditional Latent Factors} model the excess returns directly as a non-linear function of the lagged firm characteristics, but allow for the estimation of the conditional factor loadings and the latent factors under a non-linear stochastic discount factor in the factors.  Hence, the approach is suitable for asset pricing and it operates under the less restrictive assumption that the stochastic discount factor is linear in the factors. Equivalently, as per Stein’s lemma, such an approach does not assume that the asset returns and factors are Gaussian.
\end{enumerate}


Our DPLS factor model bridges these first two categories to arrive at a model in the final category in its most general form. We stop short of claiming that the modeling approach recovers the second category of approaches, as there are nuanced issues in how the non-linearity is represented, the estimation of the latent factors, and the handling of the intercepts. 
Specifically, our approach is to use deep projected least squares regression to identify \emph{non-linear factor structure}, that is the asset returns are attributed to non-linear risk factors, with non-linear conditional factor loadings in the lagged firm characteristics, rather than linear risk factors under non-linear conditional factor loadings. Moreover, the approach is suitable for no-arbitrage pricing as the intercept captures the portion of the model returns not attributed to the risk factors. We thus build on several ideas which culminate from the extensive factor model literature, some of the most closely related works we now detail here.

\cite{Gu2018} present a static predictive model for the excess returns at time $t$, $\mathbf{r}^e_t:=[\mathbf{R}^e]_t$,  which learns a flexible parameterized function, $h_{\theta}(\cdot)$ of a set of predictors -- lagged firm characteristics, $\mathbf{z}_t$, observed at time $t$:

$$r^e_{i,t+1}=h_{\theta}(z_{i,t}) + \epsilon_{i,t+1}, ~i=1,\dots, N_t, ~ t =1,\dots, T,$$
where $\boldsymbol{\epsilon}_t:=[E]_t$ are the idiosyncratic zero-mean Gaussian errors.
\cite{Dixon2020} present a time dependent parameterization of this model which is trained period-by-period. The advantage of the time dependent model is that it needn't assume stationarity in the returns and firm characteristics. Rather, at each period, a new model is fitted, which only relies on data from the current period and not over a questionably relevant history. It does however result in fewer training observations to fit the model and hence shrinkage, or other forms of model reduction, are needed to prevent over-fitting of highly parameterized models such as deep neural networks. Equivalently, Bayesian inference under a choice of prior on the parameters can reduce the number of training samples needed (see \cite{Dixon2022} for further details of a Bayesian deep fundamental factor model). 

These predictive models, however, offer no insight into the latent factor structure and therefore offer no insight into whether the characteristics/expected return  relationship is driven by compensation for exposure to latent risk factors \citep{KPS2019}. Specifically, there appears to be mechanism for identifying how the characteristics proxy for loadings on common risk factors.

\cite{GU2021429} present a conditional latent factor model for excess asset returns, in which the conditional and dynamic $K-$factor loadings $\mybeta(z_{i,t})$ are a non-linear parametric function of $p$ lagged firm characteristics, $z_{i,t}$. Additionally the excess asset returns are linear in the $K$-latent factors, $\mathbf{f}_t$, which are in turn a linear combination of all excess returns in the universe. The model is fitted period-by-period, i.e. $t=1,\dots,T$ as follows:

\begin{eqnarray}
r^e_{i,t+1}&=&\mybeta(z_{i,t})^\top\mathbf{f}_{t+1} + \epsilon_{i,t+1}, ~i=1,\dots, N_t,\\
\mybeta(z_{i,t})&:=&g_{\theta_t}(z_{i,t}), \\
f_{k,t}&:=&\mathbf{w}_{k,t}^\top\mathbf{r}^e_t, ~ k=1\dots K. 
\end{eqnarray}

The advantage of their approach is three-fold: (i) asset return predictability derives entirely from the lagged firm characteristics via the exposures to the risk factors;  (ii)  the factors aren't pre-determined heuristically by a two-step procedure, as in the Fama-French model, but rather the factor structure is determined statistically in a single-step, taking a data driven approach to factor structure derivation and not relying on heuristics for characterizing the factor-structure; and (iii) the factors can be interpreted from how the dynamical loadings map to firm characteristics. For example, in any given period, their model can characterise which firm characteristics are the most prominent across all assets in any factor loading. 


A technical development in \cite{GU2021429} is the presentation of an auto-encoder like neural network architecture which is customized to represent the structure of the model. Specifically a feed-forward architecture for the firm characteristics yields a hidden layer output representing the factor loadings and this is subsequently multiplied by latent factors which are given by a linear feedforward layer, with $N_t$ excess asset returns as inputs. While this network exhibits the signature bottleneck of an autoencoder, it is technically not an auto-encoder. This is because, like PCA, autoencoders are unsupervised learning methods which rely solely on returns to identify the latent factor structure. They do not leverage conditioning variables to identify this factor structure.

However, in seeking to modify this architecture to include firm characteristic data, they have essentially arrived at what partial least squares regression was exactly designed for, namely supervised dimension reduction by accounting for the returns and leveraging the lagged firm characteristics to identify the latent factor structure. Hence, although \cite{GU2021429} serve as the inspiration for our work,  we argue that the deep partial least squares regression architecture is the natural choice for non-linear conditional latent factor models.  

Another important distinguishing aspect of factor models is their approach to variable selection and regularization.  The literature seems to be divided in statistical variable selection and regularization approaches versus the use of financial theory or ``model based'' variable selection and regularization.  A prime example of statistical approaches would be LASSO penalized linear models which use shrinkage and variable selection to manage high dimensionality by forcing
the coefficients on most regressors near or exactly to zero. This can, however, produce sub-optimal forecasts
when predictors are highly correlated. A simple example of this problem is a case in which all of the
predictors are equal to the forecast target plus an iid noise term. In this situation, choosing a subset
of predictors via LASSO penalty is inferior to taking a simple average of the predictors and using this
as the sole predictor in a univariate regression. However, other approaches to variable selection are commonly used in factor modeling. PCA, PCR, and PLS are all examples of projection methods which rely on a shrinkage mechanism which shrinks away from the origin for certain eigen-directions. The resulting shrinkage factors along each of the eigen-directions, referred to as principle components in PCA and PCR, or scale factors in PLS, are chosen to maximize the explained covariance either in the predictors or between the predictors and the responses respectively. These static approaches appear widely in the literature, most notably the asymptotic PCA (APCA) method \citep{RePEc:eee:jfinec:v:21:y:1988:i:2:p:255-289} which is suitable when $N>T$, and the use of PCA \citep{swanson2014learning} when $N<T$. 
As a panel regression, $\mathbf{R}^e= B F + E$, \cite{16320} and \cite{BaiandNg2002} apply PCA to the covariance of asset returns:
$$\bar{\mathbf{R}}^e= \hat{P}\Lambda \hat{Q}^\top + \hat{E},$$
where $\hat{P}$ and $\hat{Q}$ are the left and right singular vectors of the covariance matrix, representing the factor loadings, and $\Lambda$ is a diagonal matrix with eigenvalues, sorted in descending order, representing the factors (i.e. principle components). The first latent factor corresponds to the direction that captures the maximum variance in the asset returns, the second factor corresponds to the direction of maximum variance under a complementary projection in the remaining orthogonal subspace, and so forth. This parsimony of factors is appealing as it can explain asset return covariance by a small number of systematic factors across the market.

These approaches suffer from many shortcomings -- aside from violating the no-arbitrage principle unless the PCA methodology is modified to use uncentered second moments of excess returns-- the main shortfalls being that the latent factors poorly explain the outliers in asset returns. As put succinctly by  \cite{KozakNagelSantosh2018JF}, ``\emph{It is perfectly possible for a
[principle component] factor to be important in explaining return variance but to play no role in pricing}''. Another point of distinction is that the factors are purely statistical and not macro-financial, hence such risks are difficult to interpret in a macro-financial model.

A key development relevant to this paper is \cite{KELLY2015294} , who apply PLS as a three-pass regression filter to successfully filter out the impact of irrelevant risk factors, a problem that the method of principal components alone cannot be guaranteed to do. However, unlike their three-step regression, our approach is a period-by-period single step model and, like \cite{GU2021429}, our betas are dynamic. So while our latent risk factors may vary period-to-period, we avoid data stationarity assumptions which are needed by \cite{KELLY2015294}. 

Another line of reasoning is pursued by \cite{10.1093/rfs/hhaa020}, who propose a new method for estimating latent asset pricing factors that fit the time-series and
cross-section of expected returns. Their estimator generalizes PCA by including a penalty on the pricing error in expected return leading to statistical risk factors which depend only on small set of firm characteristics. Their work sheds light on the ``factor zoo'' problem, namely which risk
factors are really important and which factors are subsumed by others \citep{Cochrane2001}. While our model is period-by-period cross-sectional and does not mirror the experimental design of \cite{10.1093/rfs/hhaa020}, we do seek to identify how, in any period, out-of-sample predictions of asset returns are attributed to the statistical risk factors and which firm characteristics, combined with their interaction effects, are most prominent.

Another modernized approach, which generalizes PCA to allow both dynamic loading and factor structure derived from firm characteristics, is the instrumented PCA (IPCA) factor model \cite{KPS2019}, which for ease of exposition, we show in its restricted form:
\begin{eqnarray}
r^e_{i,t+1}&=&z_{i,t}'\Gamma f_{t+1} + \epsilon_{i,t+1}, ~i=1,\dots, N_t,~ t=1,\dots,T,
\end{eqnarray}
where $\Gamma \in \mathbb{R}^{p\times K}$ projects the lagged firm characteristics on to the $K< p$ latent factors and is static. Note that the loadings are now dynamic due to the reliance on the lagged firm characteristics. This model allows for a potentially large number of characteristics to be mapped on to a smaller set of common latent risk factors that explain anomalies in asset returns. The projected characteristics, or more precisely, a linear combination of characteristics, thus proxy for loadings on each of the common risk factors and admit the intuitive interpretation that higher loadings result in higher expected returns to compensate for higher exposures, i.e. increased factor risk premia. More importantly, the most important characteristics associated with each risk factor can be estimated from observed excess returns and firm characteristics without, apriori, the use of heuristic mappings between them.




Our experimental design broadly follows the cross-sectional modeling approach used by
\citep{10.2307/1831028} and BARRA factor models (see \cite{RePEc:ucb:calbrf:44, Carvalho9}) are appealing because of their simplicity and their economic interpretability, generating tradable portfolios. The BARRA model is able to capture linear effects of macroeconomic events on individual securities. The choice of factors are microeconomic characteristics - essentially common factors, such as industry membership, financial structure, or growth orientation \citep{Nielsen2010}.

Under the assumption of homoscedasticity, factor realizations can be estimated in the BARRA model by ordinary least squares regression\footnote{The BARRA factor model is often presented in the more general form with heteroscedastic error but we do not consider the non-linear extension here.}. OLS regression exhibits poor expressability and relies on the Gaussian errors being independent of the regressors. Generalizing to non-linearities and incorporating interaction effects is a harder task.

The BARRA fundamental factor model\footnote{Specifically, the predictive form of the BARRA model is conventionally used for risk modeling, however, we shall use adopt a non-linear version of this model for a stock selection strategy based on predicted returns.} is a single-index cross-sectional model which assumes that the factor loadings are proxied by the firm characteristics. The model expresses the linear relationship between $K$ fundamental factors and the asset's returns:
\be
\mathbf{r}_{t+1} =B_{t+1}\mathbf{f}_{t+1} + \mathbf{\epsilon}_{t+1},~ t=1,\dots,T,
\ee
where $B_t= [\mathbf{1}~|~{\boldsymbol \beta}_{1}(t)~|\cdots|~{\boldsymbol \beta}_{K}(t)]= [\mathbf{1}, Z_{t-1}]$ is the $N\times K+1$ matrix of known factor loadings (betas): $\beta_{i,k}(t):=\left(\boldsymbol{\beta}_k\right)_i(t)$ is the exposure of asset $i$ to factor $k$ at time $t$ and assumed to proxy the $k^{th}$ firm characteristic. The loadings therefore correspond to asset specific attributes such as market capitalization, industry classification, style classification. $\mathbf{f}_t=[\alpha_t, f_{1,t},\dots, f_{K,t}]$  is the $K+1$ vector of unobserved factor realizations at time $t$, including $\alpha_t$. $\mathbf{r}_{t}=\frac{P_{t}}{P_{t-1}}-1$ is the $N-$vector of single-period asset returns observed at time $t$. The errors are assumed independent of the factor realizations $\rho(f_{i,t}, \epsilon_{j,t})=0, \forall i,j, t$ with heteroschedastic Gaussian error, $\mathbb{E}[\epsilon^2_{j,t}]=\sigma^2_j$. Then, in principle, factor realizations can be estimated with OLS regression. Note that there are more general estimation frameworks for estimating factors under heteroschedastic error (see e.g. \cite{Korteweg2009CorporateCS, Tsay:2010}). The BARRA model can be viewed as a special case of IPCA when the projection is the identity matrix \citep{KPS2019}.

Following \cite{Dixon2020}, we can generalize the BARRA model by considering a period-by-period non-linear cross-sectional fundamental factor model of the form
\be \label{eq:non-linear}
\mathbf{r}_{t+1}=  F_{t+1}(Z_{t}) + \mathbf{\epsilon}_{t+1}, ~ t=1,\dots,T,
\ee
where $\mathbf{R}_{t+1}$ are single-period asset returns, $F_{t+1}:\mathbf{R}^{K}\rightarrow \mathbf{R}$ is a differentiable non-linear function that maps the $i^{th}$ row of $Z_{t}$ to the $i^{th}$ asset return at time $t+1$. The map is assumed to incorporate a bias term so that $F_{t+1}(\mathbf{0})=\bf{\alpha}_{t+1}$. In the special case when $F_{t+1}(Z_{t})$ is linear, the map is $F_{t+1}(Z_{t})=Z_{t}\mathbf{f}_{t+1}$. $\mathbf{\epsilon}_{t+1}$ are zero mean i.i.d. error variates.

The forecasted returns, $\hat{r}_{t+1}$ are given by the conditional expectation, $\hat{r}_{t+1}=\mathbb{E} [r_{t+1}~ |~ Z_t ]= F_{t+1}(Z_t)$. Hence, we seek to predict returns $\mathbf{r}_{t+1}$ over the horizon $[t,t+1]$ from the lagged firm charactertistcs $Z_{t}$, which are by construction measurable at time $t$. This model can hence be mapped on to a growing body of research on returns prediction using neural networks. One appealing aspect of this approach is that stationarity of the factor realizations is not required since we only predict one period ahead and, in contrast to time series models, then retrain in each subsequent period using the latest observations only. 

In Section \ref{sect:DPLS_factors}, we show how DPLS is applied to this model to project the firm characteristics onto a smaller set of latent risk factors with resulting non-linear factor structure in the learned model and characterization of the risk factor premia.

\subsection{Stochastic discount factors}

For completeness, we comment on our modeling approach from an entirely different line of reasoning rooted in arbitrage pricing theory. \cite{Cochrane2001} shows how the one-factor CAPM model of Sharpe can be obtained as a result of consumption utility maximization with the Markowitz quadratic utility function, and under the assumption that the market is at equilibrium and all investors are identical (and the same as the representative investor) and hold the market portfolio. Furthermore, \cite{Cochrane2001} also shows how different asset pricing models can be equivalently interpreted as particular models for the stochastic discount factor (SDF) (an `index of bad times') $ m({\bf f}_t,t) $ where $ {\bf f}_t $ is a set of observed or inferred dynamic variables such as market and sector returns, macroeconomic variables etc. The SDF $ m({\bf f}_t,t) $ does \emph{not} depend on characteristics of individual assets such as stocks, because with this theory the SDF is \emph{universal} for all assets including stocks, bonds, derivatives etc. 

\begin{figure}[h!]
\centering
\begin{tabular}{cc}
\includegraphics[width=0.45\textwidth]{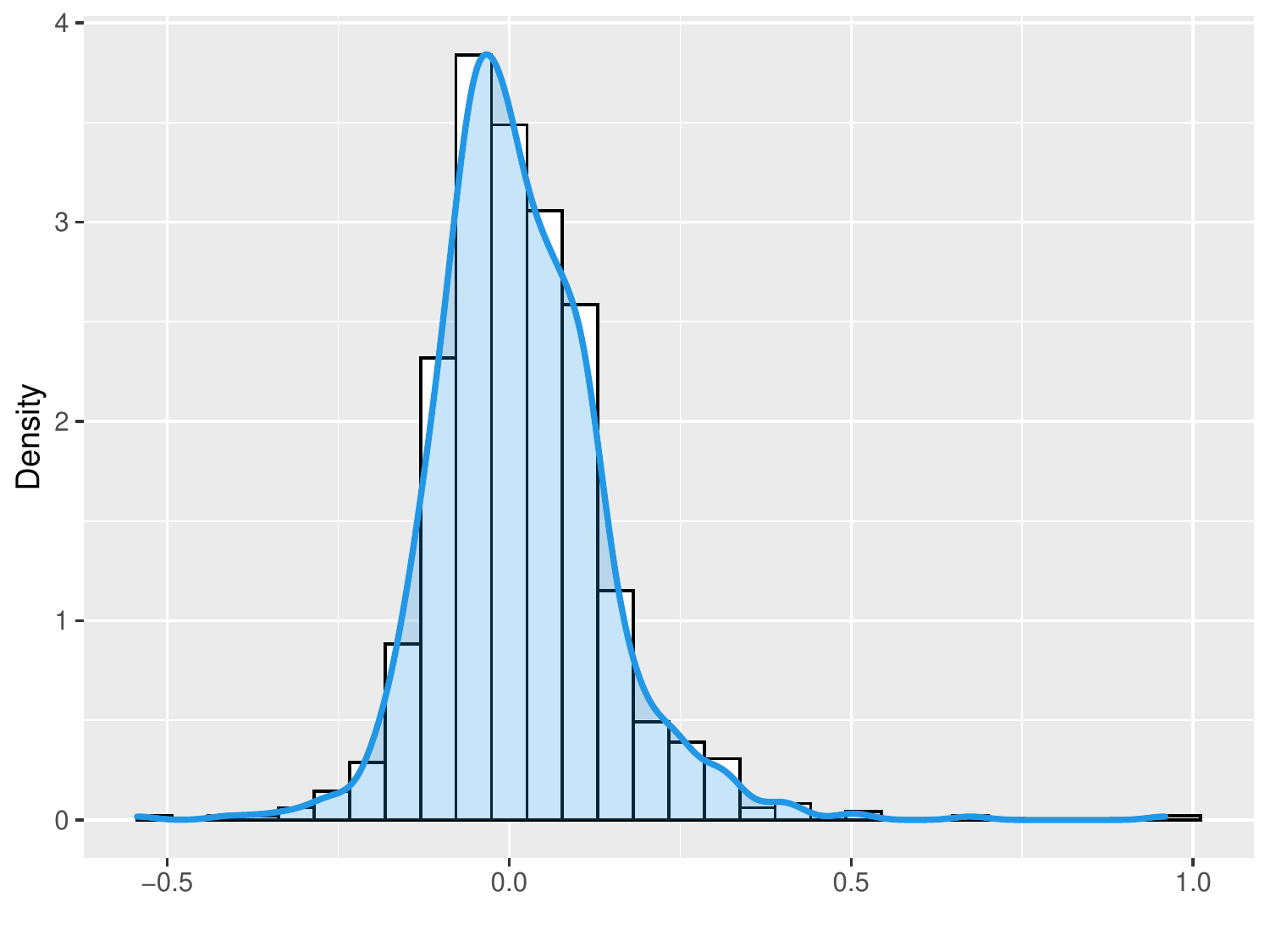}  &
\includegraphics[width=0.45\textwidth]{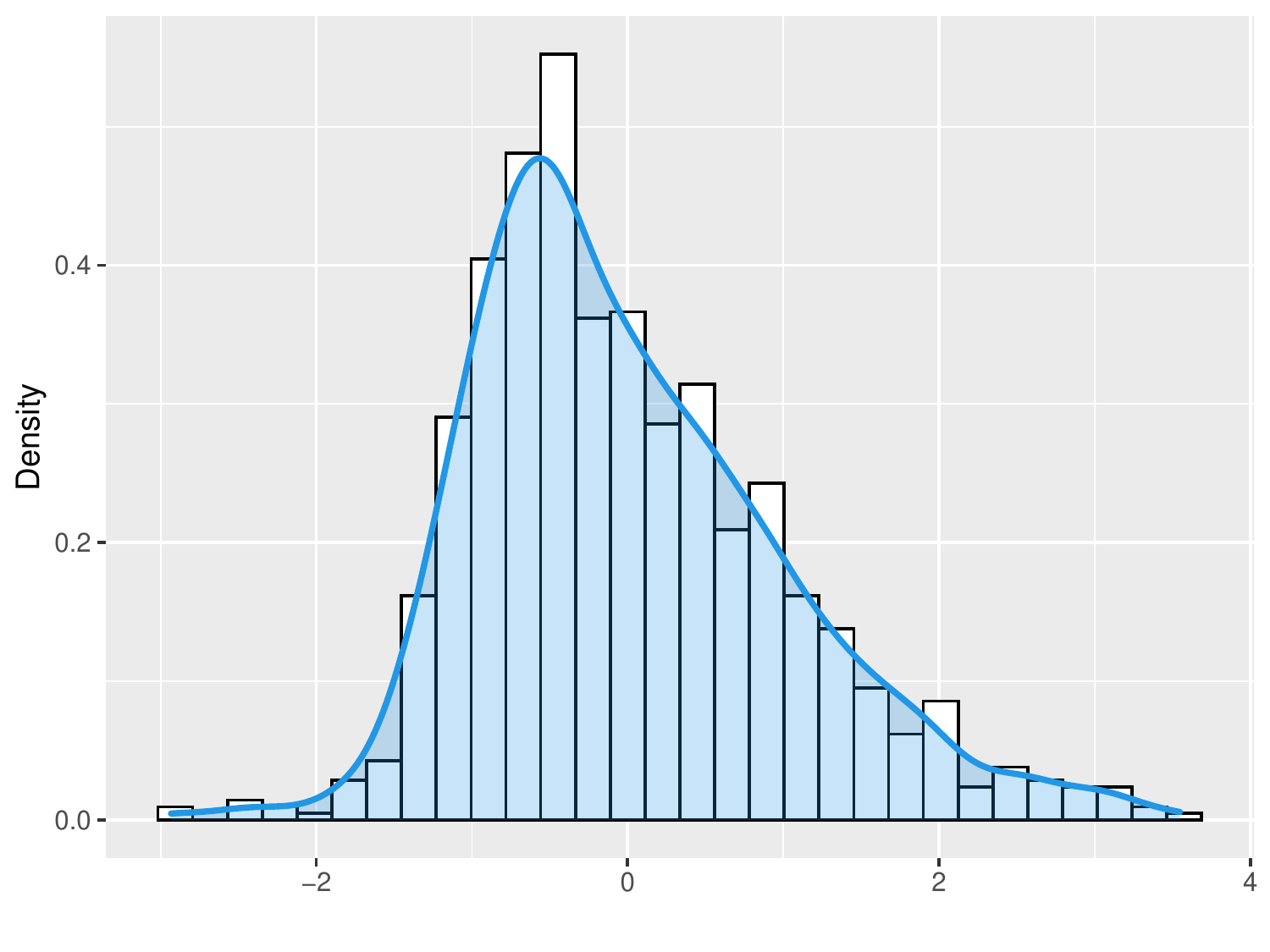} \\
(a) Excess returns & (b) Book-to-Price \\
\includegraphics[width=0.45\textwidth]{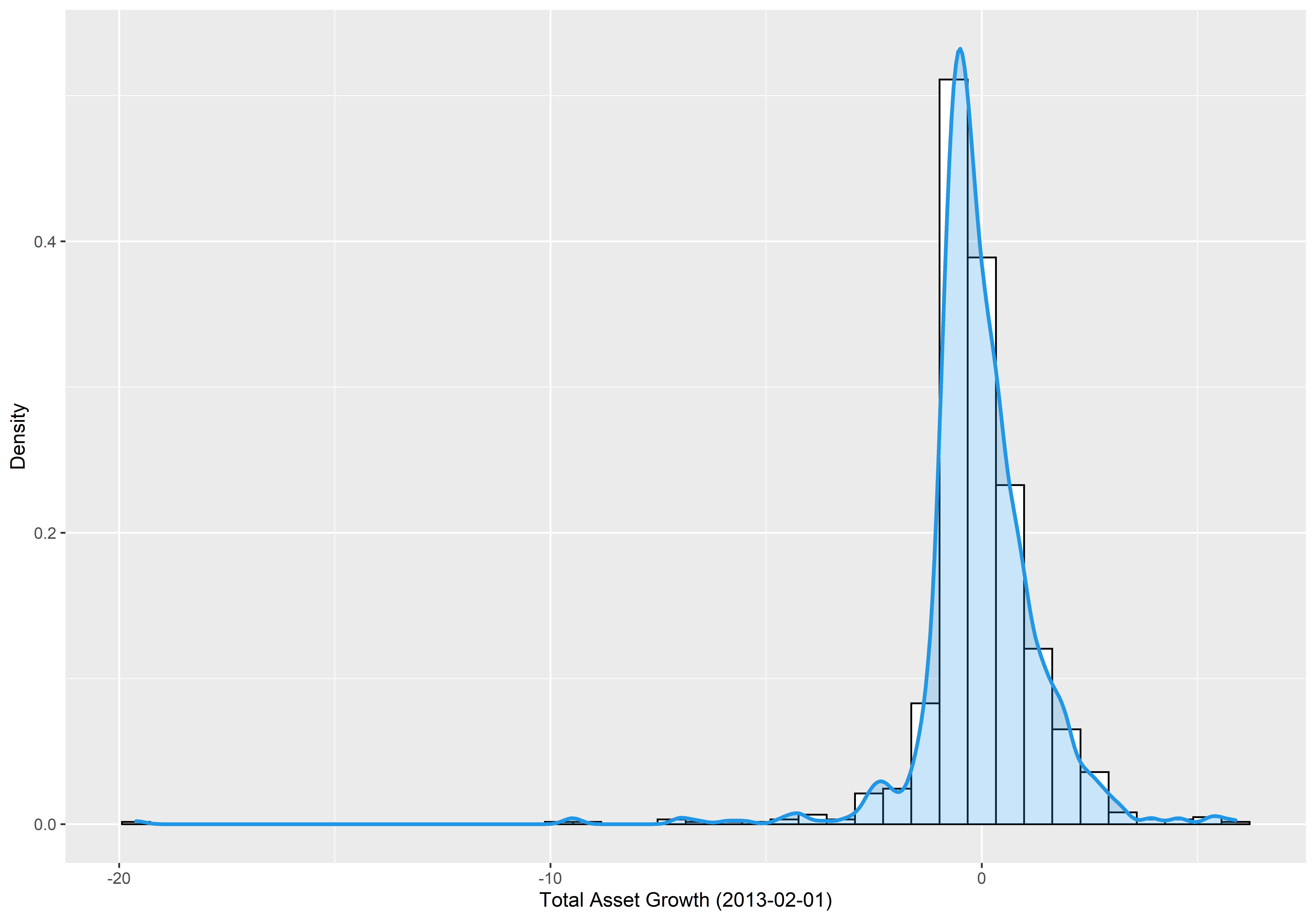}  &
\includegraphics[width=0.45\textwidth]{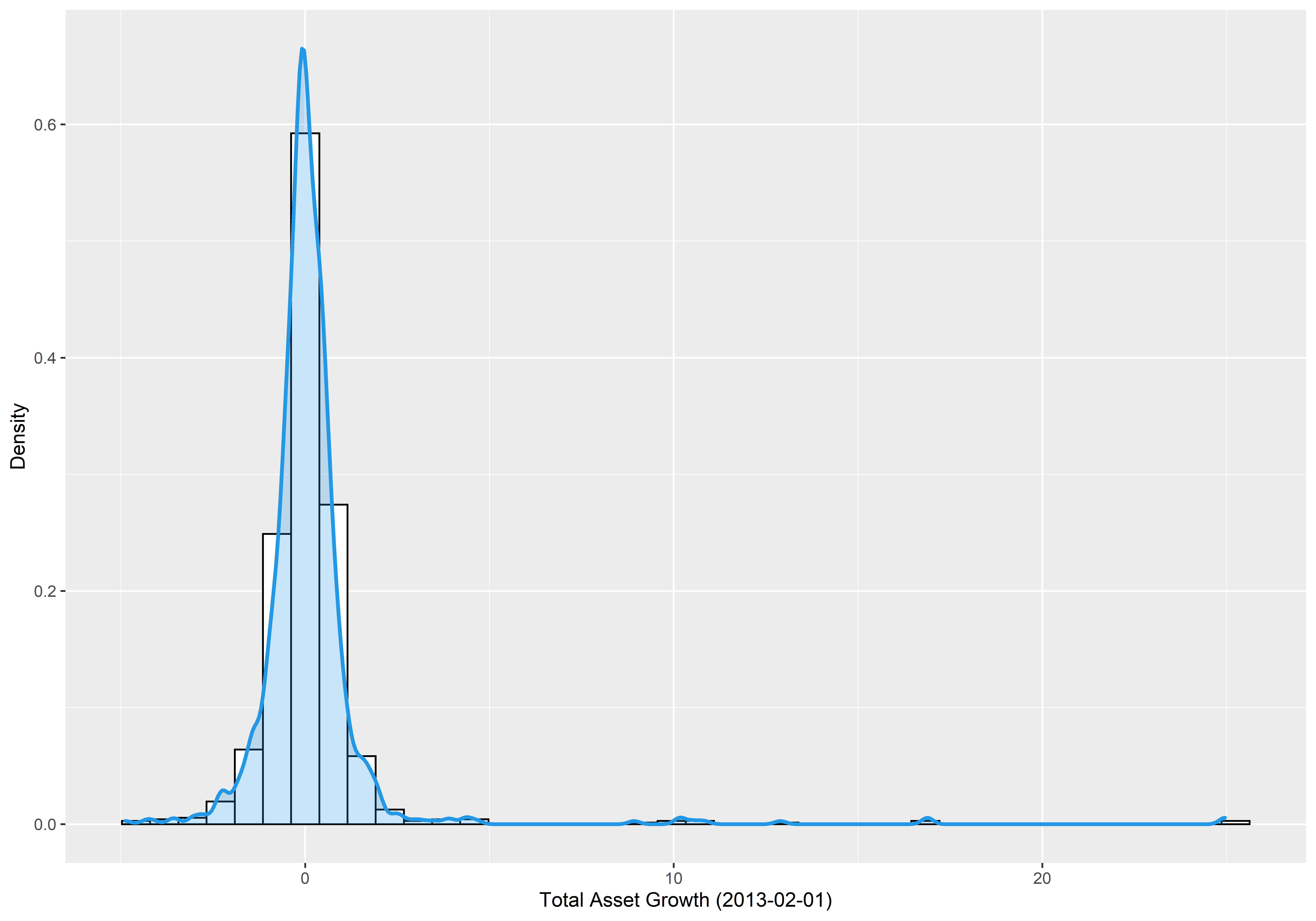} \\
(c) Total asset growth & (d) Cash flow volatility\\
\end{tabular}
\caption{\textit{Cross sectional excess returns and fundamental factors across stocks in the Russell 1000 index in February 2013. (a) The monthly excess asset returns (Skewness 0.96, Kurtosis 5.37); (b) Book-to-price ratios (Skewness 0.71, Kurtosis 0.58); (c) Total asset growths (Skewness -3.21,  Kurtosis 40.28) (d) Cash flow volatilities (skewness 7.76, Kurtosis 86.37).}}
\label{fig:histograms}
\end{figure}

While less visited than the above literature on estimating risk premia, there is yet another line of reasoning that focuses on estimation of risk prices, i.e., the extent to which the factor associated with a characteristic helps price assets by contributing to variation \citep{KozakNagelSantosh2018JF, pelgery2019}. The former authors arrive at a characteristics-sparse stochastic discount factor (SDF) representation that is linear in only a few such factors and conclude that there simply isn't enough redundancy across the factors to justify only a handful of factors, e.g. FF3, FF4, FF5 etc.  
\citep{pelgery2019}'s approach to enforcing no-arbitrage factor modeling is predicated on the notion that the mere existence of a SDF $ m({\bf f}_t,t) $ itself is critically dependent on the absence of arbitrage, once this assumption is accepted, any model building amounts to 
designing a parameterized model   $ m_{\theta}({\bf f}_t,t) $ where $ \theta $ is a set of trainable parameters, and training the model on available data. \cite{pelgery2019} introduced Generative Adversarial Networks (GANs) to 
incorporate the no-arbitrage condition as part of the neural network algorithm and estimate a linear stochastic discount factor that explains all stock returns from the conditional moment
constraints implied by no-arbitrage. For ease of exposition, here we drop the $t$ subscript on the SDF, returns, and factors.

It is well known, as a consequence of Stein's lemma, that any nonlinear model of the stochastic discount factor, $m = g(\mathbf{f})$, with a Lipschitz continuous $f$, can be turned into a linear model $m = a+\mathbf{b}^\top \mathbf{f}$ by assuming normal returns, $R$, and normal factors, $\mathbf{f}$ (see pg. 154 of \cite{Cochrane2001}). Such a linear stochastic discount factor corresponds to a linear factor beta pricing model $\mathbb{E}[R]=\alpha + \mybeta^\top\mylambda$, where $\mylambda=\mathbb{E}[\mathbf{f}]$.

However, as demonstrated by the cross-sections of all Russell index assets on February 2013 in Figure \ref{fig:histograms}, the assumption of normality on monthly excess returns and factors is too strong and we therefore seek a non-linear stochastic discount factor corresponding to a non-linear factor beta pricing model, $h(\mathbf{\mylambda})$. We show that expressing the stochastic discount factor $m$ in terms of a non-linear function, $g(\mathbf{f})$, in the factors $f$ is equivalent to the non-linear factor beta pricing model.

\begin{theorem}[Non-linear discount factors]

Given the non-linear stochastic discount factor 

$$m=g(\mathbf{f}), ~g: \mathbb{R}^K \rightarrow \mathbb{R} \in C^{\infty}(\mathbb{R}^p),~ 1=\mathbb{E}[m]$$

\noindent $\exists$ a function $h: \mathbb{R}^K \rightarrow \mathbb{R} \in C^{\infty}(\mathbb{R}^p)$ s.t. $\mathbb{E}[R]=h(\mathbf{\mylambda}).$
\end{theorem}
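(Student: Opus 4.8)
The plan is to run the classical stochastic-discount-factor calculation but replace the linear algebra that appears in the Gaussian case with a Taylor/Stein argument. Absence of arbitrage supplies the fundamental pricing identity $\mathbb{E}[mR]=1$ for the gross return $R$; substituting $m=g(\mathbf{f})$ and splitting the product into means and covariance gives
\[
1=\mathbb{E}[g(\mathbf{f})R]=\mathbb{E}[g(\mathbf{f})]\,\mathbb{E}[R]+\cov\big(g(\mathbf{f}),R\big)=\mathbb{E}[R]+\cov\big(g(\mathbf{f}),R\big),
\]
where I have used the normalization $\mathbb{E}[m]=\mathbb{E}[g(\mathbf{f})]=1$. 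Hence $\mathbb{E}[R]=1-\cov\big(g(\mathbf{f}),R\big)$, and the whole content of the theorem is to show that the risk-correction term $\cov(g(\mathbf{f}),R)$ is a $C^\infty$ function of the factor mean $\mylambda=\mathbb{E}[\mathbf{f}]$.

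For that I would write $\mathbf{f}=\mylambda+\mathbf{u}$ with $\mathbf{u}$ a mean-zero variate whose (centered) law is a fixed feature of the data-generating process as the location $\mylambda$ varies — the nonlinear analogue of the tradeable-factor normalization invoked before the statement — and Taylor-expand $g$ about $\mylambda$ (this is where $g\in C^\infty$ enters): $g(\mathbf{f})-g(\mylambda)=\sum_{|\gamma|\ge 1}\frac{1}{\gamma!}\,\partial^\gamma g(\mylambda)\,\mathbf{u}^\gamma$. Substituting into $\mathbb{E}[g(\mathbf{f})R]=1$ and collecting the $\mathbb{E}[R]$ terms (the order-zero contribution is $g(\mylambda)\mathbb{E}[R]$, and $\sum_{|\gamma|\ge 1}\frac{1}{\gamma!}\partial^\gamma g(\mylambda)\,\mathbb{E}[\mathbf{u}^\gamma]=1-g(\mylambda)$ again by the normalization) yields, on the mild regularity condition $g(\mylambda)\neq 0$ (natural since $\mathbb{E}[g(\mathbf{f})]=1$ precludes $g\equiv 0$),
\[
\mathbb{E}[R]=\frac{1}{g(\mylambda)}\Big(1-\sum_{|\gamma|\ge 1}\tfrac{1}{\gamma!}\,\partial^\gamma g(\mylambda)\,\mathbb{E}\big[\mathbf{u}^\gamma R\big]\Big)=:h(\mylambda).
\]
Each co-moment $\mathbb{E}[\mathbf{u}^\gamma R]$ is, under the location-family convention, a fixed number, so the right-hand side is a genuine function of $\mylambda$; it is $C^\infty$ because every $\partial^\gamma g$ is $C^\infty$ and the series may be differentiated termwise in $\mylambda$ once convergence of the series and of its formal derivative series is checked. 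As a sanity check, when $g$ is affine all derivatives of order $\ge 2$ vanish, the sum collapses to $\mathbb{E}[R]=1-\nabla g^\top\cov(\mathbf{f},R)$, and one recovers the linear beta-pricing model $\mathbb{E}[R]=\alpha+\mybeta^\top\mylambda$ obtained from Stein's lemma, consistent with the discussion preceding the theorem.

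An alternative I would keep in reserve, sidestepping the infinite series, is to represent the covariance through a generalized (non-Gaussian) Stein identity: under regularity on the density $p$ of $\mathbf{f}$ one has $\cov(g(\mathbf{f}),R)=\mathbb{E}\big[\nabla g(\mathbf{f})^\top s(\mathbf{f})\big]$ for a Stein-kernel / score vector field $s$ built from $p$ and from $\mathbb{E}[R\mid\mathbf{f}]$, which again exhibits the risk correction as an average of $\nabla g$ under the law of $\mathbf{f}=\mylambda+\mathbf{u}$, so that $h(\mylambda)=1-\mathbb{E}\big[\nabla g(\mylambda+\mathbf{u})^\top s(\mylambda+\mathbf{u})\big]$ is $C^\infty$ by differentiating under the integral. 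In either route the main obstacle is analytic rather than structural: $g\in C^\infty$ alone does not license interchanging $\mathbb{E}$ with the Taylor series (or with $\nabla$), so the real work is pinning down the hypotheses suppressed in the statement — finiteness of all co-moments $\mathbb{E}[\mathbf{u}^\gamma R]$ and summability of $\sum\frac{1}{\gamma!}\partial^\gamma g(\mylambda)\,\mathbb{E}[\mathbf{u}^\gamma R]$, e.g. via sub-exponential tails on $\mathbf{f}$ together with $R\in L^2$ — under which the representation and its termwise differentiation are valid; a secondary subtlety worth flagging explicitly is the modeling convention that $\mathbb{E}[R]$ depends on the factor law only through its mean $\mylambda$, the higher centered moments being held fixed.
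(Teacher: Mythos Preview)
Your argument and the paper's follow the same template --- Taylor-expand $m=g(\mathbf{f})$ about the factor mean and substitute into the pricing identity $\mathbb{E}[mR]=1$ --- but diverge in what they take $\mylambda$ to be and in how the expansion is organised. You read $\mylambda=\mathbb{E}[\mathbf{f}]$ literally (as in the discussion preceding the statement), impose a location-family convention on $(\mathbf{u},R)$ so that the co-moments $\mathbb{E}[\mathbf{u}^{\gamma}R]$ are fixed constants, and thereby produce $h(\mylambda)$ as an all-orders series in the derivatives $\partial^{\gamma}g(\mylambda)$ divided by $g(\mylambda)$; you are also explicit about the analytic side-conditions (summability, differentiation under the integral) needed for $h\in C^{\infty}$. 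The paper instead works only through second order and, after the same expansion, \emph{defines} $\mylambda:=-\Sigma\,\mathbf{b}/a$ with $a=g(\mathbb{E}[\mathbf{f}])$, $\mathbf{b}=g'(\mathbb{E}[\mathbf{f}])$ --- the classical factor risk premium --- inserting $\Sigma^{-1}\Sigma$ and using the trace trick on the quadratic term so as to factor out the standard regression beta $\mybeta=\cov(\mathbf{f},R)\Sigma^{-1}$ and obtain $\mathbb{E}[R]=\alpha+\mybeta^{\top}\mylambda+\tfrac{1}{2}\mylambda^{\top}\Gamma\mylambda+\dots$. Your route is analytically cleaner (it genuinely constructs a function of a freely varying argument and is candid about the regularity hypotheses), whereas the paper's route stays closer to the asset-pricing notation by identifying the coefficients $(\alpha,\mybeta,\Gamma)$ with familiar pricing objects, at the cost of $\mylambda$ being a derived quantity rather than a free variable and of leaving the higher-order terms implicit.
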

\noindent See Appendix \ref{sect:appendix:proof} for the proof.
\begin{remark}
Conversely, given a $h(\mathbf{\mylambda})$, $\exists$ a $g(\mathbf{f})$.
\end{remark}

\begin{remark}
Note for, avoidance of doubt, that we do not estimate the SDF in this paper as in \cite{KozakNagelSantosh2018JF, pelgery2019}. Rather we reason, on the basis of this theorem, that if we estimate risk premia via a non-linear function, $h(\mathbf{\mylambda})$, then there is a corresponding non-linear SDF. The economic implications of a corresponding utility function being non-Markowitz quadratic are beyond the scope of this paper. However, we comment in passing that many constrained utility functions used by practitioners to incorporate exposure limits, maximum drawdown limits, shorting selling limits, CVaR penalized expected returns etc,  depart from the mean-variance utility function and motivate a more flexible form of the SDF than linear in the discount factors.
\end{remark}
\section{DPLS Regression} \label{sect:regression}
Informally, our goal is to jointly project the excess asset returns and firm characteristics with projection matrices $Q_t$ and $P_t$  on to a pair of components $(U_t,V_t)$ respectively such that:

$$\mathbf{r}^e_t=U_tQ_t + \mathbf{\epsilon}_t^r, ~Z_{t-1}=V_tP_t + \mathbf{\epsilon}_t^z,$$
in a such a way that the prediction of excess asset returns  $$\hat{\mathbf{r}}^e_{t+1}=\mathbb{E}[ \mathbf{r}^e_{t+1}~|~ Z_t]=\hat{U}_tQ_t$$ is based on using the learned map $\hat{U}_t=g_{\theta}(V_t)$ in the projected space. The approach is thus using conditioning information from the lagged firm characteristics and the excess asset returns to arrive at a non-linear representation of the latent risk factors.

To formalize this method, we begin by reviewing linear PLS regression in the more general setup before introducing the DPLS method of \cite{Polson2021}. Then in Section \ref{sect:DPLS_factors} we introduce a deep PLS factor model, where the factor loadings, $V_t$, are interpreted as risk premia and the latent risk factors are given by the gradient of the fitted function, $g_{\theta}(V_t)$, in the projected space.

In general, PLS regression uses both $X\in \mathbb{R}^{N\times p}$ and $Y^{N\times q}$ to calculate the projection, further it simultaneously finds projections for both input $X$ and output $Y$, making it applicable to more general problems with multi-index output vectors as well as high dimensional input vectors. PLS searches for a set of latent vectors (i.e. components) that perform a simultaneous decomposition
of $X$ and $Y$ with the constraint that these components explain as much as
possible of the sampled covariance between $X$ and $Y$. 

PLS hence finds projection directions which maximize the sample covariance between $X$ and $Y$--- the resulting projections $U$ and $V$ for $Y$ and $X$, respectively are called  score matrices, and the projection matrices  $P$ and $Q$ are called loadings. \\

\begin{figure}
\centering
\includegraphics[width=0.6\textwidth]{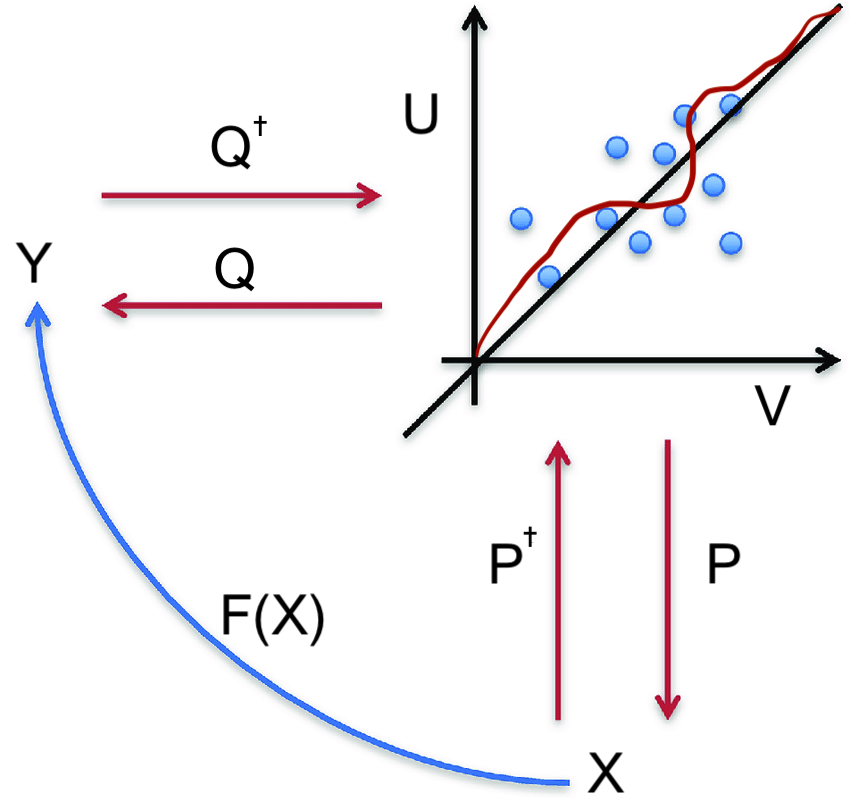} 

\caption{\textit{An overview of the PLS (DPLS) method. PLS uses a linear approximation between U and V, whereas DPLS uses a deep neural network approximation (red line in regression plot).}}
\label{fig:dpls}
\end{figure}

The input-score $V$, has $K$ columns, one for each ``feature" and $K$ is chosen via cross-validation. The key principle is that $V$ is a good predictor of $U$, the output-scores. These relations are illustrated in Figure \ref{fig:dpls} and summarized by the equations below for the general multivariate PLS model:

\begin{align*}
	Y&  =  U Q+ E^y, \\
	X& =  V P+ E^x, V^\top V=I, 
\end{align*}
Here $Q \in \mathbb{R}^{K\times q}$ and $P\in \mathbb{R}^{K\times p}$ are low-rank projection
(loading) matrices and $V, U \in \mathbb{R}^{N\times K}$ are projections of $X$ and $Y$ respectively. The columns of $V$ are orthonormal and referred to as the ``latent vectors''.

DPLS uses a feedforward neural network to regress the output-score on the input-scores:
\begin{equation*}
U =  G_{W,b}(V) + E,
\end{equation*}
where $G_{W,b}$ is a deep neural network with $L$ layers, that is, a super-position of univariate semi-affine functions, $\act^{(\ell)}_{W^{(\ell)},b^{(\ell)}}$, to give 
\begin{equation*}
U:=G_{W,b}(V)=(\act^{(L)}_{W^{(L)},b^{(L)}}\circ\dots\circ \act^{(1)}_{W^{(1)},b^{(1)}})(V),
\end{equation*}
and the unknown parameters are a set of weight matrices $W=(W^{(1)},\dots, W^{(L)})$ and a set of bias vectors $b=(b^{(1)},\dots, b^{(L)})$.  Any weight matrix $W^{(\ell)}\in \mathbb{R}^{m\times n}$, can be expressed as $n$ column m-vectors $W^{(\ell)}=[\mathbf{w}^{(\ell)}_{,1},\dots, \mathbf{w}^{(\ell)}_{,n}]$. We denote each weight as $w^{(\ell)}_{ij}:=\left[W^{(\ell)}\right]_{ij}$.

The $\ell^{th}$ semi-affine function is itself defined as the composition of the activation function, $\act^{(\ell)}(\cdot)$, and an affine map:
\be
\act^{(\ell)}_{W^{(\ell)},b^{(\ell)}}(Z^{(\ell-1)}):=\act^{(\ell)}\left(W^{(\ell)}Z^{(\ell-1)} + b^{(\ell)}\right),
\ee
where $Z^{(\ell-1)}$ is the output from the previous layer, $\ell-1$. The activation functions, $\act^{(\ell)}(\cdot)$, e.g. $\act^{(\ell)}(\cdot)=\max(\cdot,0)$, are critical to non-linear behavior of the model. Without them, $G_{W,b}$ would be a linear map and, as such, would be incapable of capturing interaction effects between the inputs. This is true even if the network has many layers.

\paragraph{Deep learning theory}
 The choice of feed-forward architecture is motivated on theoretical and empirical grounds. First, feed-forward networks are furnished with a universal approximation theorem \citep{Hornik1989}. It has further been shown that deep networks can achieve superior performance versus linear additive models, such as linear regression, while avoiding the curse of dimensionality \citep{poggio_deep_2016}.

\cite{Martin2018} show that deep networks are implicitly self-regularizing and \cite{TishbyZ15} characterizes the layers as 'statistically decoupling' the input variables.

There are additionally many recent theoretical developments which characterize the approximation behavior as a function of network depth, width and sparsity level \citep{2018arXiv180309138P}.  Recently \cite{DBLP:journals/corr/HarveyLM17} prove upper and lower bounds on the expressability of deep feedforward neural network classifiers with the piecewise linear activation function, such as ReLU activation functions. They show that the relationship between expressability and depth is determined by the degree of the activation function. 
 
 These bounds are tight for almost the entire range of parameters. Letting $n$ denote the total number of weights, they prove that the VC-dimension is $\mathcal{O}(nLlog(n))$. 
 
 They further showed the effect of network depth on VC-dimension with different non-linearities: there is no dependence for piecewise-constant, linear dependence for piecewise-linear, and no more than quadratic dependence for general piecewise-polynomial. Thus the relationship between expressability and depth is determined by the degree of the activation function. 
 
 There is further ample theoretical evidence to suggest that shallow networks can't approximate the class of non-linear functions represented by deep ReLU networks without blow-up.
 \cite{DBLP:journals/corr/Telgarsky16} shows that there is a ReLU network with $L$ layers and $U$ units such that any network approximating it with only $\mathcal{O}(L^{1/3})$ layers must have $\Omega(2^{L^{1/3}})$ units. 
 \cite{DBLP:journals/corr/MhaskarLP16} discuss the differences between composition versus additive models and show that it is possible to approximate higher polynomials much more efficiently with several hidden layers than a single hidden layer.

\subsection{PLS Estimation}

In the literature, there are multiple types of algorithms for finding the projections \citep{manne_analysis_1987}. Algorithms differ on whether they estimate the score matrix $V$ as orthonormal columns or not. The original one proposed by \cite{wold_collinearity_1984} uses the conjugate-gradient method \citep{golub2013matrix} to invert matrices. The first PLS projection $p$ and $q$  is found by maximizing the sample covariance between the $X$ and $Y$ scores \maxdisp{p,q}{\left(Xp\right)^\top\left(Yq\right)}{||p||=||q||=1.} Then the corresponding scores are
\[
v = Xp,\text{   and   } u = Yq.
\]
We can see from the definition that the directions (loadings) for $Y$ are the right singular vectors of $X^\top Y$ and loadings for $X$ are the left singular vectors. The next step is to perform regression of $V$ on $U$, namely $U = V\beta$.  The next column of the projection matrix $P$ is found by calculating the singular vectors of the residual matrices $(X - vp^\top)^\top(Y - u\beta q^\top)$.

The final regression problem is solved $Y = UQ = V\beta Q = XP^\top\beta Q $. Thus the PLS estimate is
\[
\beta_{\mathrm{PLS}} = P^\top\beta Q.
\]

\cite{helland_partial_1990} showed that PLS estimator can be calculated as 
\[
	\beta_{\mathrm{PLS}} = R(R^\top S_{xx}R)^{-1}R^\top S_{xy}
\]
where $R = (S_{xy},S_{xx}S_{xy},\ldots,S_{xx}^{q-1}S_{xy})$,
\[
	S_{xx} = \dfrac{X^\top(I-{\bf 11^\top}/n)X}{n-1},
\]
\[
	S_{xy}  = ave ( YX ),
\]
where the average is taken over training data, that is $ave (X)= \frac{1}{n} \sum_{i=1}^n \x_i$. Once the loadings $(Q,P)$ and scores $(U,V)$ have been estimated, we can then train a deep learner on $(U,V)$ and then use $U$ to predict $V$ and hence obtain an estimate of $Y$.

To understand why the deep-learner can be decoupled from the estimation of the projection matrices, we 
need to establish that the projection directions in PLS are invariant to the functional form of the regression between score matrices, up to a constant of proportionality. 

We first recall a key asymptotic result from \cite{Naik2000} which builds on an earlier result on OLS regression by \cite{brillinger_generalized_2012}, namely that PLS regression is a consistent estimator of the regression coefficient even when there exists an unknown non-linear relationship between the $\y$ and $\x$.


\begin{Assumption}
Assume $\x_1,\dots, \x_N$, where  $\x^\top = (x_1, \dots, x_p)$, are independent normals with mean $\mu_x$ and
non-singular covariance matrix $\Sigma_{p\times p}$, and that the xs are independent of $\myepsilon_1,\dots, \myepsilon_N$, where each $\myepsilon^\top =(\epsilon_1, \dots, \epsilon_q)$ has finite variance $\sigma^2$. 
\end{Assumption}

\begin{Assumption}
Let $\y = \myg + \myepsilon$, where $\y^\top= (y_1, \dots, y_q)$, $\myg^\top= (g_1, \dots, g_q),~ g_j := g_j(z_j), z_j = \beta_{0j} + \x^\top\mybeta_j, ~\beta_{0j}$ is a scalar and $\mybeta_j\in \mathbb{R}^p$. Assume that  $g(\cdot)$ is a measurable function of $z$ with $\mathbb{E}[ |g_j(z_j)| ] < \infty$ and
$\mathbb{E}[ |z_jg_j(z_j)|] < \infty$ for $j= 1, \dots, q$.
\end{Assumption}

\begin{theorem}[\cite{Naik2000}]
Assume the above two assumptions hold. Then the PLS estimator $\hat{B}_{PLS}$ is a strongly consistent estimator of $B = (\mybeta_1, \dots, \mybeta_q)$ up to constants of proportionality.
\end{theorem}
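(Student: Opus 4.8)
The plan is to read off the population limits of the two sample moments $S_{xx}$ and $S_{xy}$ that appear in Helland's closed form, pass to the limit using the strong law of large numbers together with the continuous mapping theorem, and then verify that the resulting population PLS functional returns $B$ up to a diagonal rescaling. \emph{Step 1: the population cross-covariance.} Since the centering matrix $I-\mathbf{1}\mathbf{1}^\top/n$ annihilates constant vectors, it suffices to work with $\tilde{\x}=\x-\mu_x$, so the intercepts $\beta_{0j}$ drop out. By Assumption 1 each pair $(\tilde{\x},z_j)$ with $z_j=\beta_{0j}+\x^\top\mybeta_j$ is jointly Gaussian, so $\mathbb{E}[\tilde{\x}\mid z_j]$ is linear in $z_j$; combining this with $\mathrm{cov}(\tilde{\x},z_j)=\Sigma\mybeta_j$ and the independence of the $\x$'s from the errors gives the Brillinger/Stein identity
\[
\mathrm{cov}(\tilde{\x},y_j)=\mathrm{cov}\big(\tilde{\x},g_j(z_j)\big)=\frac{\mathrm{cov}\big(z_j,g_j(z_j)\big)}{\mathrm{var}(z_j)}\,\Sigma\mybeta_j=:c_j\,\Sigma\mybeta_j,
\]
where the scalar $c_j$ is finite precisely because Assumption 2 supplies $\mathbb{E}|g_j(z_j)|<\infty$ and $\mathbb{E}|z_jg_j(z_j)|<\infty$ (this is where Gaussianity of $\x$ is indispensable). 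Stacking columns, the population cross-covariance is $\Sigma_{xy}=\Sigma BC$ with $C=\mathrm{diag}(c_1,\dots,c_q)$.

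\emph{Step 2: convergence of the sample moments and the continuous mapping step.} The $\x_i$ are i.i.d.\ with finite second moments, and using Assumption 2 with Gaussian-moment bounds and Cauchy--Schwarz one gets $\mathbb{E}|x_{ik}y_{ij}|<\infty$; hence the strong law of large numbers gives $S_{xx}\to\Sigma$ and $S_{xy}\to\Sigma_{xy}$ almost surely. Helland's formula $\hat{B}_{\mathrm{PLS}}=R(R^\top S_{xx}R)^{-1}R^\top S_{xy}$ with $R=(S_{xy},S_{xx}S_{xy},\dots)$ is a rational, hence continuous, function of $(S_{xx},S_{xy})$ on the open set where $R^\top S_{xx}R$ is nonsingular. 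Because $\Sigma\succ0$ and the population Krylov matrix $R^\infty=(\Sigma_{xy},\Sigma\Sigma_{xy},\dots)$ has full column rank, the limit point $(\Sigma,\Sigma_{xy})$ lies in that set, so by the continuous mapping theorem $\hat{B}_{\mathrm{PLS}}\to B_{\mathrm{PLS}}^{\infty}:=R^\infty(R^{\infty\top}\Sigma R^\infty)^{-1}R^{\infty\top}\Sigma_{xy}$ almost surely.

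\emph{Step 3: identifying the limit with $BC$.} Writing $\mathcal M=\mathrm{col}(R^\infty)$, the operator $v\mapsto R^\infty(R^{\infty\top}\Sigma R^\infty)^{-1}R^{\infty\top}\Sigma v$ is the $\Sigma$-orthogonal projection onto $\mathcal M$, so $B_{\mathrm{PLS}}^{\infty}$ is the column-wise $\Sigma$-projection of $\Sigma^{-1}\Sigma_{xy}=\Sigma^{-1}\Sigma BC=BC$ onto $\mathcal M$. Since the columns of $BC$ equal $\Sigma^{-1}(\Sigma_{xy})_{\cdot j}$ and Cayley--Hamilton writes $\Sigma^{-1}$ as a polynomial in $\Sigma$, each such column already lies in $\mathrm{span}\{\Sigma^k(\Sigma_{xy})_{\cdot l}\}=\mathcal M$ once the number of retained components is at least the grade of this Krylov subspace; then the projection acts as the identity and $B_{\mathrm{PLS}}^{\infty}=BC=B\,\mathrm{diag}(c_1,\dots,c_q)$, which is $B$ up to the claimed constants of proportionality. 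Equivalently, this is the statement that PLS with sufficiently many components coincides with population OLS, for which Brillinger's theorem already delivers the $BC$ identification.

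The main obstacle is Step 3: tying the number of PLS components implicit in Helland's formula to the grade of the Krylov subspace $\mathcal K(\Sigma,\mathrm{col}(\Sigma B))$ that is needed for $\Sigma^{-1}$ (via Cayley--Hamilton) to operate within $\mathcal M$ --- i.e.\ showing that PLS has effectively ``reached OLS'' on the signal subspace --- while keeping careful track of the intercepts $\beta_{0j}$ and of $\mu_x$ so that centering does not perturb the limiting moments. Step 1 is routine once Brillinger's lemma is invoked, and Step 2 is a standard SLLN-plus-continuity argument; a secondary check is that Assumption 2's two integrability conditions are exactly the ones making both $c_j$ well defined and the cross-moment SLLN applicable.
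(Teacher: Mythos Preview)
The paper does not supply its own proof of this theorem; it is quoted from \cite{Naik2000} and used as a black-box input to the subsequent composability result (Theorem~3). The only commentary the paper offers is that the statement ``is closely related to Stein's Lemma,'' without further detail. There is therefore no paper-proof to compare against line by line.

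That said, your proposal is a faithful reconstruction of the standard argument underlying the Naik--Tsai result, and it lines up with the paper's one hint: your Step~1 is exactly the Stein/Brillinger identity the paper alludes to, yielding the key structural fact $\Sigma_{xy}=\Sigma B C$ with $C=\mathrm{diag}(c_1,\dots,c_q)$. Step~2 (SLLN plus continuous mapping through Helland's closed form) is the routine part, and your Step~3 --- using Cayley--Hamilton to place each column of $BC=\Sigma^{-1}\Sigma_{xy}$ inside the Krylov space $\mathcal{M}$, so that the $\Sigma$-orthogonal projection onto $\mathcal{M}$ fixes it --- is precisely how one shows population PLS with enough components collapses to population OLS, after which Brillinger's identification finishes the job. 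Your self-flagged obstacle (matching the number of retained PLS components to the grade of the Krylov subspace) is real but is exactly the content of Helland's structural theory that the paper already imports. One small caution: the Helland formula as stated in the paper is the univariate-response version; for $q>1$ the Krylov matrix $R$ has block columns and the column-wise projection argument you sketch needs a touch more care to ensure each $\Sigma^{-1}(\Sigma_{xy})_{\cdot j}$ lands in $\mathrm{col}(R^\infty)$ rather than merely in the single-seed Krylov space generated by that column alone --- but the idea is correct.
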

The theorem is closely related to Stein's Lemma \cite{stein2012}, which states that the covariance between the Gaussian regressand and a non-linear Lipschitz function of the regressors can be written as the covariance between the regressand and the regressors themselves up to a multiplicative term in the expectation of the derivative of the non-linear function.

The practical implication of Brillinger's theorem \cite{brillinger_generalized_2012} or Naik \& Tsai theorem is that OLS or PLS respectively generalize to the case $\mathbb{E}[\y~ |~ \x]$  is not linear when $\x$ is jointly Gaussian. 

For the purpose of modifying the PLS algorithm to use a deep-learner rather than the linear PLS estimator, we would like to know that the strong consistency result still holds if we introduce a non-linear regression  to estimate the relation between the scores. Of course, the mere exercise of reintroducing non-linear regression when Naik \& Tsai's  theorem shows that a linear PLS algorithm will generalize to a non-linear function seems like a tautology.

However, our immediate motivation, before building a regression model for non-Gaussian data, is to first show that we needn't restrict the PLS algorithm to a linear estimator for the result of Naik \& Tsai to hold. Rather, the power of PLS is in its composability --- we can use the x-scores and y-scores estimated by the PLS algorithm composed with a deep learning estimate of the map between the scores, rather than needing to substantially overhaul the PLS method to embed machine learning methods. More precisely, we first seek to generalize Naik \& Tsai theorem by showing that PLS can be composable with another machine learning method while preserving its consistency property.



In more detail, we show that the strongly consistent estimator, $\hat{B}_{PLS}$, is invariant under a parameterized non-linear function in the projected space, $G(\cdot)\equiv G_{\theta}(\cdot)$,~ s.t.  $u_j = G(\beta_{0j} + \myt^\top\mybeta_j)$, up to a constant of proportionality. The x-scores, $\myt$, are jointly Gaussian if the predictors, $\x$, are  jointly Gaussian, since the projection, $P\myt$, is a linear transformation.

\begin{definition}
Given the system of equations:
\begin{eqnarray*}
\y&=& Q \myu+\myepsilon_y,\\
\x&=&P\myt + \myepsilon_x,\\
\myu&=& G(\myt^\top B),
\end{eqnarray*}
where $P\in \mathbb{R}^{p\times q}$ and $Q \in\mathbb{R}^{q\times K}$ are projection matrices referred to as loadings. $\myu,\myt\in\mathbb{R}^K$ are the y-scores and orthonormal x-scores respectively. $\myepsilon_y\in\mathbb{R}^q$ and $\myepsilon_x\in\mathbb{R}^p$ are independent and have finite variance $\Sigma_{\epsilon_y}$ and $\Sigma_{\epsilon_x}$ respectively. Let $w_j=\myt^\top \mybeta_j$.
\end{definition}

\begin{Assumption}
Assume that $\myt$ is Gaussian and is a ``good'' predictor of $\myu$ through some non-linear transformation $\mathbb{E}[\myu ~|~ \myt] =G(\myt^\top B)$, where  $G(\cdot)$ is a measurable function with $\mathbb{E}[ |G_j(w_j)| ] < \infty$ and
$\mathbb{E}[ |w_jG_j(w_j)|] < \infty$ for $j= 1, \dots, q$. 
\end{Assumption}

\begin{theorem}[Composability of PLS]
Under this assumption on $G(\cdot)$, embedded in the PLS method, the PLS estimator, $\hat{B}_{PLS}$, is a consistent estimator of $B = (\mybeta_1, \dots, \mybeta_q)$.
\end{theorem}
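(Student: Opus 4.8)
The plan is to reduce the claim to the Naik \& Tsai consistency result (Theorem 2 above) applied at the level of the latent scores, with Stein's lemma supplying the one covariance identity that does all the work. The observation that makes this possible is that the linear PLS algorithm produces its loadings, and hence $\hat B_{PLS}$, as a fixed continuous function of the sample cross-moments $S_{xx}$ and $S_{xy}$ alone --- concretely through Helland's representation $\beta_{\mathrm{PLS}} = R(R^\top S_{xx}R)^{-1}R^\top S_{xy}$ with $R=(S_{xy},S_{xx}S_{xy},\dots,S_{xx}^{q-1}S_{xy})$. By the strong law of large numbers $S_{xx}\to\Sigma_{xx}$ and $S_{xy}\to\Sigma_{xy}:=\cov(\x,\y)$ almost surely, so it is enough to analyse the population object $\beta_{\mathrm{PLS}}(\Sigma_{xx},\Sigma_{xy})$ under the generative model of the Definition and show it equals $B$ up to a per-column rescaling.

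First I would compute $\Sigma_{xy}$. Because $\myepsilon_x$ and $\myepsilon_y$ are independent of $\myt$ and of each other and are mean zero, $\cov(\x,\y)=\cov(P\myt+\myepsilon_x,\,Q\,G(\myt^\top B)+\myepsilon_y)=P\,\cov(\myt,\,G(\myt^\top B))\,Q^\top$, so the idiosyncratic noise drops out entirely and any intercepts $\beta_{0j}$ are absorbed by the centering built into $S_{xx}$ and the averaging in $S_{xy}$. Since $\myt$ is Gaussian (a linear image of the Gaussian $\x$, the projection being a linear transformation) and each coordinate map $G_j$ satisfies $\mathbb{E}|G_j(w_j)|<\infty$ and $\mathbb{E}|w_jG_j(w_j)|<\infty$ with $w_j=\myt^\top\mybeta_j$ by the Assumption, Stein's lemma \cite{stein2012} --- in the integration-by-parts form of \cite{brillinger_generalized_2012}, which needs only these moment conditions and not differentiability of $G_j$ --- gives $\cov(\myt,\,G_j(\myt^\top\mybeta_j))=c_j\,\Sigma_{\myt}\,\mybeta_j$ for scalars $c_j$, where $\Sigma_{\myt}$ is the covariance of $\myt$. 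Collecting columns, $\cov(\myt,G(\myt^\top B))=\Sigma_{\myt}B\,\mathrm{diag}(c_1,\dots,c_q)$, hence $\Sigma_{xy}=P\,\Sigma_{\myt}\,B\,\mathrm{diag}(c)\,Q^\top$, which is precisely the cross-covariance one would obtain under the \emph{linear} link $\myu=\myt^\top B$, differing only by right multiplication by the invertible diagonal matrix $\mathrm{diag}(c)$ and the loading $Q^\top$.

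Next I would substitute this into Helland's formula. Since $\Sigma_{xx}$ is the same whether the true link is $G$ or linear, the Krylov-type matrix $R$ generated from $\Sigma_{xy}$ has the same column span (under the $\Sigma_{xx}$ geometry that PLS uses) as the one generated from the linear-link cross-covariance --- the diagonal rescaling $\mathrm{diag}(c)$ and the loading factor only reparameterise within that span. Therefore $\beta_{\mathrm{PLS}}(\Sigma_{xx},\Sigma_{xy})$ coincides, up to constants of proportionality, with the population PLS estimator of the linear model $\myu=\myt^\top B$, which is consistent for $B$ by classical PLS theory (equivalently, by Theorem 2 with a linear $g$). Composing with the almost sure convergence $(S_{xx},S_{xy})\to(\Sigma_{xx},\Sigma_{xy})$ and the continuity of the Helland map on the relevant parameter set yields $\hat B_{\mathrm{PLS}}\to B$ up to proportionality, almost surely; and since nothing in this chain depended on whether one subsequently fits $G$ or a linear map between the scores, this is exactly the composability statement.

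The main obstacle is the step usually waved through: making the constants $c_j$ well defined and genuinely scalar. When $G_j$ is not differentiable one must define $c_j$ via the Gaussian integration-by-parts identity under only the $L^1$ conditions of the Assumption, following \cite{brillinger77, brillinger_generalized_2012}, and then check that these constants factor out on a strictly column-by-column basis through the $S_{xx}^{k}S_{xy}$ blocks of $R$, so that the PLS direction is perturbed by nothing more than the per-component scaling already tolerated in the statement. One must also be explicit about the substantive hypotheses hiding inside ``up to constants of proportionality'': the Gaussianity of $\myt$ (inherited from $\x$, and the sole reason the Stein identity is available at all) and the non-degeneracy $c_j\neq 0$, which can fail --- for instance for an even $G_j$ paired with a centered $\myt$ --- in which case the $j$-th direction is unrecoverable and one must either restrict to the non-degenerate regime or re-center the scores.
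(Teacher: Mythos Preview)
Your proposal is correct and follows essentially the same route as the paper: both arguments compute the population cross-covariance $\cov(\x,\y)$ and use the Stein/Brillinger identity to show that the nonlinearity $G$ enters only through a per-column scalar $c_j=\cov(G_j(w_j),w_j)/\var(w_j)$, so the PLS target is unchanged up to proportionality. The paper's proof is terser --- it writes out $\cov(y_j,\x)=\Sigma_{xx}(P^\top B)_jQ\,\kappa_j$ directly and then appeals implicitly to Naik \& Tsai --- whereas you make that last step explicit via Helland's representation and the continuity of $(S_{xx},S_{xy})\mapsto\hat B_{\mathrm{PLS}}$, and you additionally flag the degenerate case $c_j=0$; these are useful refinements but not a different argument.
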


\begin{proof}
\begin{eqnarray*}
\y&=&G(\myt^\top B)Q + \myepsilon,\\
\y&=&G((\x-\myepsilon_x)P^\top\boldsymbol{\beta})Q + \myepsilon,\\
\end{eqnarray*}
where $\myepsilon:=\myepsilon_t^\top Q + \myepsilon_y$. 

\begin{eqnarray*}
Cov(y_j, \x)&=&Cov(G_j(w_j)Q, \x)\\
&=& Cov(G_j(w_j), Q\x)\\
&=& Cov(w_j, Q\x)\frac{Cov(G_j(w_j),w_j)}{Var(w_j)}\\
&=&Cov((\x P^\top  B)_j,Q\x)\frac{Cov(G_j(w_j),w_j)}{Var(w_j)}\\
&=&Cov((\x P^\top B)_jQ,\x)\kappa_j\\
&=&Cov(\x,\x)(P^\top B)_jQ\kappa_j\\
\end{eqnarray*}
and the PLS estimator $(\hat{\mybeta}_{PLS})_j=\Sigma_x^{-1}Cov(y_j, \x)\kappa_j$ is a consistent estimator of $P^\top\mybeta_jQ, ~j=1,\dots,q$. 
\end{proof}
We discuss and speculate on this result. 

\begin{remark}
Since the PLS estimator is invariant under the choice of non-linearity in functional approximation between the x-scores and y-scores, we can use the x-scores and projection matrices estimated by the PLS method without needing to modify them. This provides the justification for composability of PLS with the deep leaner.
\end{remark}

\begin{remark}
While DPLS is intended to be a general methodology and not limited to Gaussian predictors, we make no such estimator consistency claim when $\x$ is not Gaussian. Rather, the stated goal of composability is simply to provide a better predictor of the Y-scores under the assumption that $\mathbb{E}[\y~ |~ \x]$ is non-linear as the linear estimator is no longer ``valid'' due to the non-Gaussian predictors.
\end{remark}

\begin{remark}
Under an idealized data generation process, we may use PLS as a baseline for measuring the performance of DPLS and thus if $\x$ is Gaussian, then we might expect to characterize any relative gain in performance when using DPLS in place of PLS under infinite sample sizes. After all, in such a case, a linear approximation should suffice up to a constant of proportionality.  However, we exercise some caution in over-interpreting the theorem  -- the Brillinger and Naik \& Tsai result is simply an asymptotic consistency result and says nothing about whether linear regression is ``optimal'' when there is a non-linear relationship between the response and the Gaussian predictors. In fact the asymptotic covariance matrix used for the least squares estimate of the regression coefficients depends on the unknown form of the non-linearity between $\y$ and $\x$. Needless to say, the limitations of an asymptotic result to sampling error from use of a finite sample size are part and parcel of any frequentist approach.
\end{remark}

\begin{remark}
One can estimate the weights in the first layer of the network using the PLS estimator. Put differently, we would expect the regression coefficients in the PLS method to match the weights in the first layer of the network when $\x$ is Gaussian, irrespective of the value of the weights and choice of architecture of the remaining network layers. In practice, the comparison will not be exact due to sampling noise, but also because of the constant of proportionality. However, the functional form of $G=G_{W,b}(\cdot)$ is known and the constant of proportionality can be estimated. When the data is non-Gaussian, we speculate that the weights in the first can be initialized more effectively than a random initialization under the assumption that they would be similar under mild departure from Gaussian assumptions.
\end{remark}



Our consistency result permits us to compose the loadings estimated by PLS, under the original assumption of a linear relationship between the input and output scores, with a non-linear function $G_{W,b}(\cdot)$ to improve the predictive performance of the output scores (See Figure \ref{fig:pls_reg} for an illustration).  The theorem shows that the composition of a deep-learner with the projection matrices estimated separately by the PLS algorithm does not ``break'' PLS -- the asymptotic consistent estimate result of Naik \& Tsai still holds. With this in mind, we propose the following two-step DPLS algorithm here:

\begin{enumerate}
    \item Assume that the predictors, $X$, are Gaussian and hence by the Naik \& Tsai Theorum, the PLS estimator is a consistent estimator of the coefficients in $g(\beta_0 + \beta x_j)$.
    \item Relax the assumption of Gaussian predictors: apply a deep-learner to the non-Gaussian x-scores and y-scores estimated by PLS, estimate a non-linear map to improve the prediction of the y-scores. Optionally use the PLS coefficients as initial guesses of the weights in the first layer of $G_{W,b}(\myt)$.
\end{enumerate}

In this respect, DPLS could be viewed, as an end-to-end deep learning architecture between $\x$ and $\y$: 

$$\hat{\y}=Q\circ(\act^{(L-1)}_{W^{(L-1)},b^{(L-1)}}\circ\dots\circ \act^{(2)}_{W^{(2)},b^{(2)}})(P^+\x),$$

where $P^+=$ in given by the PLS algorithm and is projection of $\x$ on to the first layer, the next layer is non-linear in the scores (with the weights $(W^{(2)},b^{(2)})$ possibly even estimated by PLS), the penultimate layer is a linear projection on to space of y-scores (i.e. the activation function $\act^{(L-1)}$ is linear and the final layer is the linear projection $Q$, estimated by PLS, from the $K$-vector space to the $q-$ vector space of $\y$.

\begin{figure}
\centering
\includegraphics[width=0.5\textwidth]{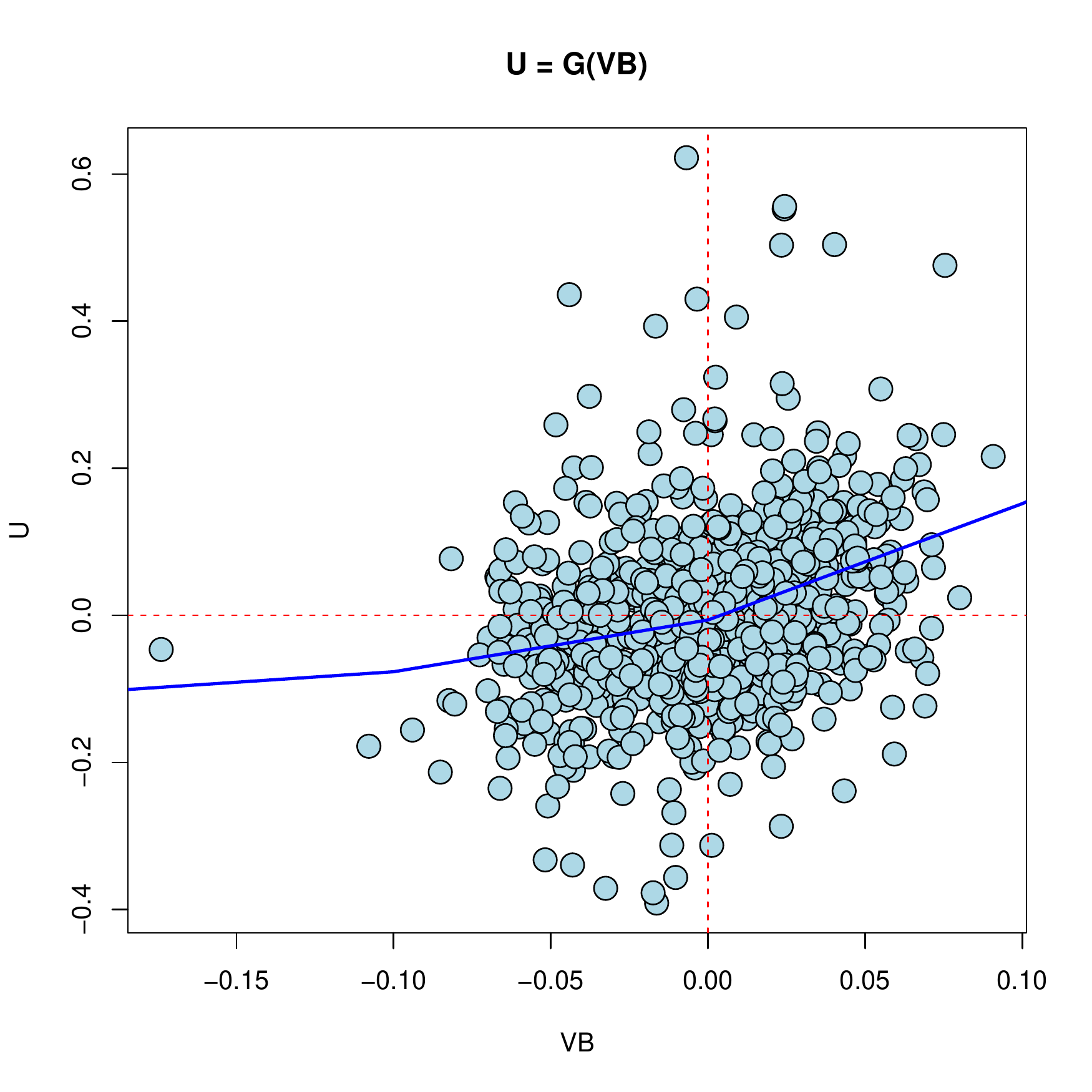} 
\caption{\textit{The y-scores ($K=1$) are shown against the PLS coefficient multiplied by the orthonormal x-scores to illustrate the non-linearity which must be captured by the deep-learner when the scores and returns are non-Gaussian. }}
\label{fig:pls_reg}
\end{figure}

More generally, PLS can be used as a layer at any stage of deep learning, not just the first. For example, in a convolutional neural network, 
$$
Z_1 = g\left(\sum_{i\in M} \x_i\star w_i + b_i\right),
$$
where $X\star w + b$ denotes the convolution over the region $M$, with input $X$, weights $w$ and bias $b$. Then we can add a PLS layer by regressing the output $Y$ on $Z_1$ and perform feature reduction, which results in a CNN-PLS model.

Our theorem and algorithm is not limited to use with deep learning either. PLS can be combined with many well-known machine learning models, such as kernel regression (\cite{rosipal2001kernel}, \cite{rosipal_kernel_2001}), Gaussian processes (\cite{gramacy2012gaussian}), tree models.  For example, \cite{higdon2008computer} uses a PCA decomposition of the $Y$-variable before applying a nonlinear regression method. Of course deep learning is attractive because of its universal representation theorem and composability with other algebraic transformations.

\subsection{PLS Algorithm}
One common problem with estimation of linear prediction rules, $ \hat{Y} = X\beta  $, on high dimensional datasets is the high variance due to the typically ill-conditioned nature of the design matrix, $ ( X^T X ) $.   Shrinkage methods are used to address this issues and work by biasing coefficient vector away from directions in which the predictors 
have low sampling variability--- or equivalently, sway from the "least important" principal components of $X$. 
 
The Bayesian shrinkage interpretation of \cite{frank_statistical_1993} provides a useful perspective for explaining the PLS algorithm. PLS, in fact, relies on a non-standard shrinkage mechanism which shrinks away from the origin for certain eigen-directions. The shrinkage factors (a.k.a. scale factors) along each of the eigen-directions are nonlinear in the response values. They also depend on the OLS solution. 

Polson and Scott \cite{polson2010,polson2012} provide a general theory of global-local shrinkage and, in particular, analyze g-prior and horseshoe shrinkage. \cite{RePEc:bla:istatr:v:88:y:2020:i:2:p:302-320} build on this work by demonstrating that horseshoe regularisation is useful for a wide class of machine learning methods.  
 
We first  center and standardise $ (Y, X )$. Then, we provide an SVD  decomposition of the sample covariance matrix:
$$ V = ave ( X X^\top ). $$  
This finds the eigenvalues $ \{e_j^2\}_{j=1}^p $ and corresponding eigenvectors $ \{\bm v_j\}_{j=1}^p $  arranged in descending order, 
so we can write the eigenvector decomposition
$$ V = \sum_{j=1}^p  e_j^2 {\bm v}_j {\bm v}_j^\top.
$$ 
This leads to a sequence of regression models $(\hat Y_0,\dots, \hat Y_R)$ with $ \hat Y_0 $ being the response mean and 
$$
\hat{Y}_K = \sum_{k=0}^K (ave ( w_k^\top X ) / e_k^2 ) \bm v_k^\top X,~~d\in \{0,\dots, R\}. 
$$
where $R$ is the rank of $V$ (number of nonzero $e_k^2$). Therefore, PLS finds  "features" $\{\bm f_k \}_{k=0}^K$= $\{\bm v_k^\top X\}_{k=0}^K $.  

\noindent The PLS estimator is of the form 
$$ 
 \hat Y^M = \sum_{j=1}^R  f_j^{PLS}  \hat{\alpha}_j \bm v_j^\top \bm x,
$$
where $f_j^{PLS}$ are PLS scale factors for the $K$ top eigenvectors  \citep{frank_statistical_1993} of the form
\begin{align*}
f_j^{PLS} &= \sum_{k=1}^{K} \theta_k e_j^{2k} \; ,  \; {\rm where} \;  \theta = w^{-1}\eta ,  \;  \eta_k = \sum_{j=1}^p \hat{\alpha}_j^2 e_j^{2(k+1)} . 
\end{align*}

\begin{figure}
\centering
\includegraphics[width=0.5\textwidth]{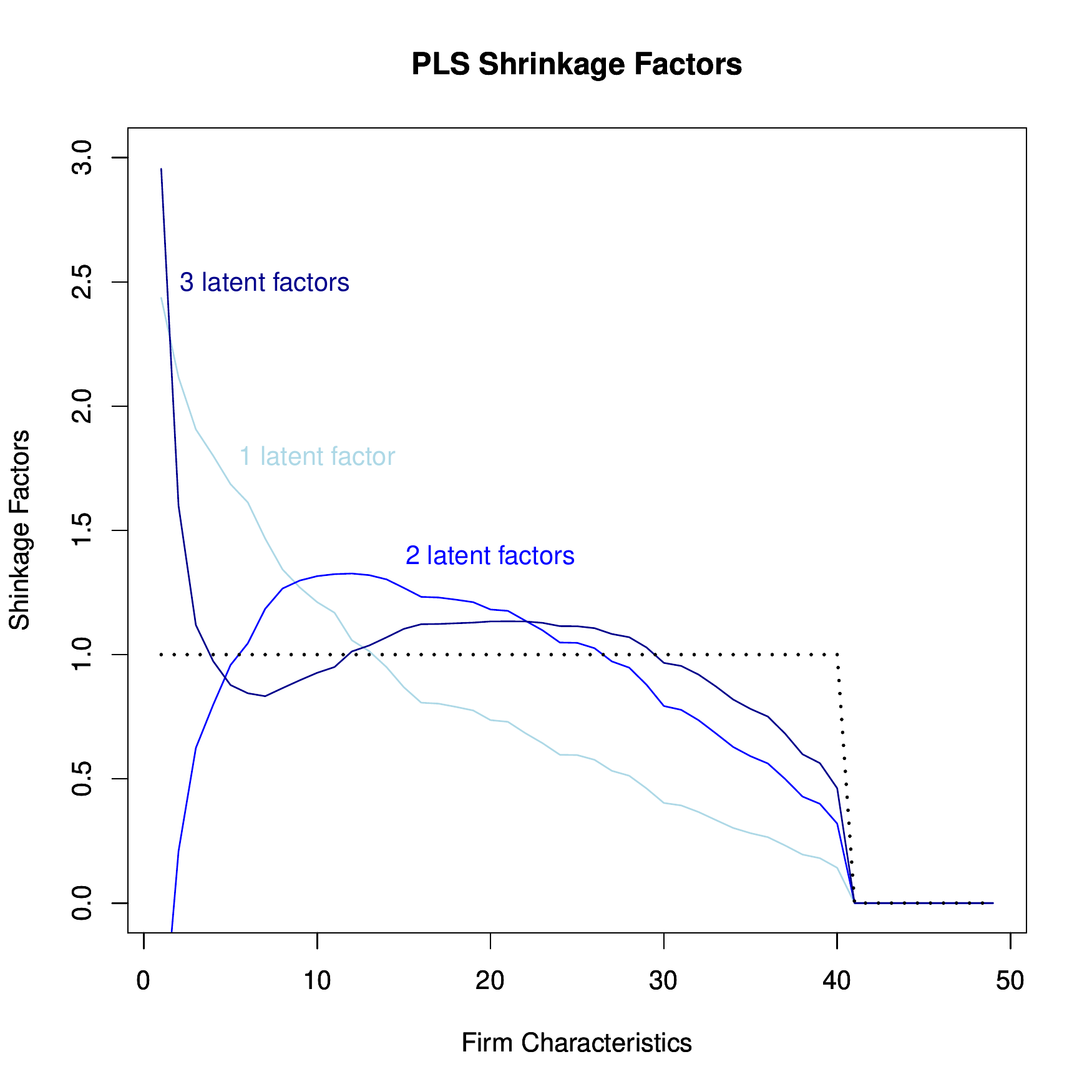} 
\caption{\textit{The scale factors are plotted for $j=1,2,3$. If any $ f_j > 1 $ then one can expect 
supervised learning  will lead to different predictions than   unsupervised learning.}}
\label{fig:shrinkage}
\end{figure}

As shown in Figure \ref{fig:shrinkage}, the scale factors provide a diagnostic plot: $f_j $.  If any $ f_j > 1 $ then one can expect 
supervised learning (a.k.a. PLS with $Y$'s influencing the scaling factors) will lead to different  predictions than   unsupervised learning.  In this  sense, PLS is an optimistic procedure in that the goal is  
to maximise the explained variability of the output in sample with the  hope of generalising well out-of-sample. For linear estimators, $f_j>1$ means that both the bias and the variance are increased. 
Once the projection matrices are estimated, a standard stochastic gradient descent algorithm, such as Adam \citep{https://doi.org/10.48550/arxiv.1412.6980}, can be used to fit the deep learning model to predict the y-scores.

\subsection{Covariate sensitivities}
We use a gradient-based technique for ranking the importance of the covariates in the fitted DPLS model. Note, for avoidance of doubt, that in contrast to PCA and autoencoders, the DPLS model sensitivities are given in terms of the observed covariates and not just the latent statistical factors. Our proposed method is consistent with how coefficients are interpreted in linear regression, i.e. as \emph{model} sensitivities and is thus appealing to practitioners accustomed to use OLS regression. In our approach, the model sensitivities are the partial derivatives of the fitted model output w.r.t. each input. This method is consistent with \cite{enguerr2019significance} who develop statistical tests for the significance of the input variables in a neural network, using their partial derivatives. The test statistic is based on a weighted distribution of the square of the neural network partial derivative, w.r.t. each input. However, the scope of their study is limited to a single-hidden layer network and treats the asymptotic distribution of the network as a Gaussian process.

Here we shall consider deep networks with an arbitrary number of layers and do not rely on asymptotic approximations, which may be limited when the network has a small number of neurons. Indeed we derive closed form expressions for the Jacobian and Hessian - the off-diagonals provide sensitivities to interaction terms. In contrast to \cite{enguerr2019significance}, our goal is to characterize the dispersion of the empirical sensitivity distribution - confidence intervals of the distributions can be found by Bootstrap sampling (see, for example, \cite{doi:10.1177/1094428105280059}). Such an approach is only useful in practice if the variance of these distributions are bounded. Our primary theoretical result then is to characterize the upper bound on the variance of the partial derivatives and show that it is bounded for finite weights.




To evaluate fitted model sensitivities analytically in the DPLS method, we require that the function $\hat{V}=G_{\hat{\theta}}(U)$ is continuous and differentiable everywhere. Furthermore, for stability of the interpretation, we shall require that $G_{\hat{\theta}}(\x)$ is a Lipschitz continuous\footnote{If Lipschitz continuity is not imposed, then a small change in one of the input values could result in an undesirable large variation in the derivative}. That is, there is a positive real constant $\kappa$ s.t. $\forall \x_1, \x_2 \in \mathbb{R}^p$, $|G_{\hat{\theta}}(\x_1) - G_{\hat{\theta}}(\x_2)| \leq \kappa|\x_1-\x_2|$. Such a constraint is necessary for the first derivative to be bounded and hence amenable to the derivatives, w.r.t. to the inputs, providing interpretability. 

Fortunately, provided that the weights and biases are finite, each semi-affine function is Lipschitz continuous everywhere. For example, the function $\tanh(x)$ is continuously differentiable with derivative, $1-\tanh^2(x)$, is globally bounded.  With finite weights, the composition of $\tanh(x)$ with an affine function is also Lipschitz. Clearly ReLU$(x):=\max(\cdot,0)$ is not continuously differentiable and one must instead use a softplus function, a smooth approximation to ReLU.  

From the definition of the DPLS model, we can write the fitted model in the compact form
$$\hat{Y}=G_{\hat{\theta}}(P^\top X)Q.$$

Taking first and second derivatives of the response w.r.t. the input variables gives
\begin{eqnarray}
\frac{\partial \hat{Y}}{\partial X}&=&\frac{\partial G_{\hat{\theta}}}{\partial V}P^\top Q\\
\frac{\partial^2 \hat{Y}}{\partial X^2}&=&P\frac{\partial^2 G_{\hat{\theta}}}{\partial V^2}P^\top Q
\end{eqnarray}
where $\frac{\partial G_{\hat{\theta}}}{\partial V}$ and $\frac{\partial^2 G_{\hat{\theta}}}{\partial V^2}$ are the Jacobian and Hessian matrices of the fitted deep learner w.r.t. to the X-scores, $V$. Thus, the covariate sensitivities in the DPLS model are linear transformations of the sensitivities of the fitted deep learner w.r.t. to the latent X-scores. See Section \ref{sect:sensitivities}, for further details of the functional form of the Jacobian and Hessian matrix.




\subsection{DPLS Factor Models} \label{sect:DPLS_factors}

To fix ideas, we begin by presenting a linear PLS factor model and then show this generalizes to our deep PLS factor model. A PLS factor model is of the form

\begin{equation}
\mathbf{r}^e_t=\hat{U}Q + \epsilon_t=\mathbf{z}_{t-1}P^+_tB_tQ_t + \epsilon_t=V_t(B_tQ_t) + \epsilon_t=V_t\mathbf{f}_t + \epsilon_t,
\end{equation}
$P^+$ is the Moore-Penrose Inverse of $P$, $B\in \mathbb{R}^{K \times K}$ is a diagonal matrix with the regression coefficients on the diagonal estimated by the PLS algorithm.

The latent risk factors, $\mathbf{f}_t=B_tQ_t$, are given by the projection of the regression coefficients and the factor loadings (the X-scores), $V_t$, are projections of the firm characteristics, $\mathbf{z}_{t-1}$ onto the K-factors.

In any time period, $t$, we can therefore attribute the expected excess returns of stock $i$ to the latent factors: 

\begin{equation}
\bar{r}^e_{i,t}:=\mathbb{E}_t[r^e_{i,t}]:=\mathbb{E}[r^e_{i,t}~|~ z_{i,t-1}] = \mathbf{v}_{i,t}^\top\mathbb{E}_t[\mathbf{f}_t],~i=1,\dots N_t. 
\end{equation}

Similarly, in any period, $t$,  we can attribute the conditional variance of the excess asset returns 
to the latent factors
\begin{equation}
\sigma^2_{i,t}:=\mathbb{V}_t[r^e_{i,t}]:=\mathbb{V}[r^e_{i,t}~|~ z_{i,t-1}] = \mathbf{v}_{i,t}^\top\Sigma_t^f\mathbf{v}_{i,t} + \tilde{\sigma}^2_{i,t},~i=1,\dots N_t,
\end{equation}
where $\Sigma^f_t$ is the covariance matrix of the latent factors at time $t$ and $\tilde{\sigma}^2_{i,t}$ is the variance of the error.

Note that PLS does not explicitly find the explained variance of the returns. Rather for a target explained covariance between the excess returns and the characteristics, PLS determines the number of factors $K$.

We now turn to a specific subset of deep fundamental factor models --- DPLS models. 
DPLS projects the response and predictors below as
\begin{align*}
	\mathbf{r}^e_{t}&  =  U_{t} Q_{t} + \boldsymbol{\epsilon}^y_t, \\
	Z_{t-1}& =  V_{t} P_{t}+ \boldsymbol{\epsilon}^x_t 
\end{align*}
and a feedforward neural network is used to predict the output scores from the input-scores:
\be
U_{t} =  G_{W_t,b_t}(V_{t}) + \mathbf{\epsilon}_t.
\ee
The primary benefit of using deep learning, is that it is able to automatically capture non-linear effects and, in particular, is robust to outliers. 

The DPLS model gives the risk premia $V_t$ on the latent risk factors by a Taylor expansion of the deep learner, $g:=G_{W,b}(V_t)$, about the origin $\mathbf{0}$:

\begin{eqnarray}
\mathbf{r}^e_t&=&\hat{U}_tQ_t + \epsilon_t\\
&=&g(V_t)Q_t +\epsilon_t=g(\mathbf{0})Q_t + V_t \nabla g(\mathbf{0})Q_t + \frac{1}{2}V_t \nabla^2 g(\mathbf{0}) V_tQ_t \\
&=& \alpha_t + V_t \mathbf{f}_t + \frac{1}{2}V_tH_0V_tQ_t,
\end{eqnarray}
where the latent risk factors are given by $\mathbf{f}_t=\nabla g(\mathbf{0})Q_t$ and the quadratic term, defining the non-linear factor structure, requires the Hessian, $H_0$, evaluated at the origin. This Hessian contains the interaction effects w.r.t. to the risk premia (x-scores).

\subsubsection{State-space DPLS fundamental Factor Models}
For completeness, we can explore other variants of the DPLS fundamental factor model above -- specifically it can be written in a more general form which includes higher order lag terms up to maximum lag $r$. Specifically we posit the non-linear state-space model:

\begin{eqnarray*}
\mathbf{r}^e_{t}&=& U_{t} Q_{t} + \boldsymbol{\epsilon}^y_{t},\\
Z_{t-1}&=& V_{t}P_t + \boldsymbol{\epsilon}^x_t
\end{eqnarray*}
with the score state-vectors defined as
\begin{equation*}
U_t=
\begin{bmatrix} 
U_{t} \\ \cdots\\ U_{t-r}
\end{bmatrix}, \qquad  V_t=
\begin{bmatrix} 
V_{t} \\ \cdots\\ V_{t-r}
\end{bmatrix},
\end{equation*}
where $Q_t\in \mathbb{R}^{q\times Kr}$ and $P_t\in \mathbb{R}^{p\times Kr}$.
Note, as a special case, that the vector autoregressive form (a.k.a. semi-supervised learning) of the DPLS model (i.e. $X=Y$) is
the system of equations

\begin{eqnarray*}
X_{t}&=& U_{t}Q_t + \boldsymbol{\epsilon}^x_{t},\\
U_t&=&G_{\boldsymbol\theta_t}(U_{t-1}) + \mathbf{e}_t
\end{eqnarray*}
where $\mathbf{e}_t^\top=(\mathbf{\epsilon}_t, 0,\dots, 0)$.

If the projection matrix and parameters, $\boldsymbol\theta$ are further fixed in time then we see that the DPLS factor model is in fact a non-linear Dynamic Factor Model (DFM) \citep{16320} with projections on to scores:

\begin{eqnarray*}
X_{t}&=& U_{t}Q + \boldsymbol{\epsilon}^x_{t},\\
U_t&=&G_{\boldsymbol\theta}(U_{t-1}) + \mathbf{e}_t,
\end{eqnarray*}
where $U_t$ is the latent state variable of factors (a.k.a. score state vector). In the \cite{16320} model, $Q$ is a linear projection onto lagged latent factors and $G_{\boldsymbol\theta}(\cdot)$ is replaced with a static state-space update matrix. Their model requires stationarity. One important advantage of our DPLS model, aside from the ability to capture non-linearity in the latent state update, is that $G_{\boldsymbol\theta}$ can be an advanced recurrent neural network architecture such as LSTMs or GRUs, thus relaxing the requirement for stationarity \citep{doi:10.1080/00401706.2021.1921035}.
\section{Experiments} \label{sect:results}

This section presents the application of our DPLS factor model to all stocks in the Russell 1000 index, which represents the top 1000 companies by market capitalization in the United States.   The historical data covers an approximately 30 year period of monthly updates between January 1989 to November 2018, with a coverage universe of 3290 stocks. In every month, each stock includes 49 BARRA model factors --- 19 of which are fundamental factors and the remainder are GICS sector dummy variables (see Section \ref{sect:appendix} for a description of the factors).

Note that the BARRA factor model includes many more explanatory variables than used in our experiments below, but the purpose, here, is to illustrate the application of our framework to a larger set of factors. Equally, the BARRA factors are designed to explain variance in asset returns rather than capture alpha, however, our results focus both on expected asset returns and explained variance of asset returns based on conditional exposures to latent risk factors. Note, for avoidance of doubt, that our goal in this section is not to shed further light on the ``factor zoo problem'', but simply to advocate for the use of DPLS as a methodology for factor modeling based on its ability to capture outliers and attribute prediction error to conditional exposures to latent risk factors.

Our experiments broadly focus on three aspects of the DPLS factor model:

\begin{enumerate}
    \item Ability to capture outliers in monthly excess returns;
    \item Ability to generate higher performing managed portfolios under a risk adjusted return metric; and
    \item Ability to explain asset returns and risk via risk premia. 
\end{enumerate}

\paragraph{Data cleaning:} All stocks with missing factor exposures are removed from the estimation universe for the date of the missing factors only. To avoid excessive turn-over in the estimation universe over each consecutive period, we include all dropped symbols in the index over the 12 consecutive monthly periods. Further details of the data preparation, including sanitization to avoid violating licensing agreements, are provided in the documentation for the source code repository referenced on the first page.

\paragraph{Rescaling:} Training and testing is alternated in each period. For example, in the first historical month of the data, the model is fitted to the factor exposures and monthly excess monthly returns over the next period. To avoid look-ahead bias, all factors in each training period are normalized using only the training data, i.e. the resulting mean and standard deviation of each factor across all stocks is 0 and 1 respectively. The same normalization parameters are then used to rescale the test set. This avoids introducing a bias into the test set while also avoiding incorporating knowledge of the test data when rescaling the training data. Of course, the test set is not perfectly normalized.

\paragraph{Hyper-parameter tuning:} Cross-validation is performed using this cross-sectional training data, with approximately 1000 symbols. Once the model is fitted and tuned, we then apply the model to the factor exposures in the next period, t+1, to predict the excess monthly returns over [t+1, t+2]. The process is repeated over all periods. Note, for computational reasons, we can avoid cross-validation over every period and instead stride the cross-validation, every other say 10 periods, relying on the optimal hyper-parameters from the last cross-validation periods for all subsequent intermediate periods. In practice we find that performing cross-validation in the first period only is adequate with the exception of the number of latent risk factors, $K$.

Three fold cross-validation over $\{50,100,200\}$ hidden units per layer, $\lambda_1\in\{0,10^{-3},10^{-2}, 10^{-1}\}$ and $\{1,2,3\}$ hidden layers is performed in the same time period. We find that the optimal architecture has two hidden layers, 100 units per layer, no $L_1$ regularization, and softplus (i.e. smooth ReLU) activation. For an interpretable model, $C^2(\mathbb{R})$ continuous activation functions are required to provide sufficient smoothness (with 50 hidden units per layer being optimal). 

All experiments are performed using \verb|Tensorflow| \citep{abadi2016tensorflow} to implement the deep feed-forward network and the \verb|pls| R package \citep{pls} for PLS. The LASSO model is implemented using the \verb|glmnet| R package \citep{glmnet}. The amount of $L_1$ regularization in the LASSO model is tuned every period\footnote{The LASSO model uses cross-validation to optimize the $L_1$ regularization parameter and iterative fitting, with 50 alphas, along the regularization path.}. The PCA results are generated using the asymptotic PCA method implemented in the \verb|covFactorModel| R package\footnote{\url{https://rdrr.io/github/dppalomar/covFactorModel}}.

\subsection{Russell 1000 stocks}

We begin by analyzing the predicted excess assets returns of the models over all constituents of the Russell 1000 index in each period\footnote{Note, for avoidance of doubt, that we do not construct index tracking portfolios -- all stocks in the index are equally weighted in our analysis.}, with the goal of identifying the ability of DPLS to capture outliers in the excess stock returns.
Figure \ref{fig:errors} compares the in-sample and out-of-sample performance of DPLS, PLS, deep feedforward network (NN), and LASSO regression using the $L_\infty$ norm of stock return errors over the coverage universe of stocks listed in the Russell 1000 index at each monthly period. This norm measures the largest absolute error across all assets, i.e. the error associated with the most mis-predicted excess asset return in each period.  The time averaged $L_\infty$ norms, over all periods, is shown in parentheses.

We observe the ability of DPLS to best capture outliers, with the $L_\infty$ norm of the error in the NN and DPLS regression respectively being on average between 8\% and 12\% smaller than the other methods out-of-sample. We also observe that the amount of variance between the in-sample and out-of-sample performance is most notable using NN regression. This is consistency with the high degree of parameters exhibited by the full neural network architecture relative to the training data size.  This, in fact, highlights a substantial defect of using deep learning for period-by-period cross-sectional regression --- the amount of training data required for deep learning required isn't compatible with factor modeling cross-sectional datasets. On the other hand, if one simply trains across all stocks and all periods, then the model loses its dynamic nature.  

We also observe periods in the out-of-sample test set where PLS exhibits the largest errors and DPLS appears to capture these. The combination of a lower time averaged $L_\infty$ error and fewer apparent outliers not captured by the PLS, provides evidence of the superiority of DPLS over PLS in capturing outliers.


\begin{figure}[H]
\centering
\includegraphics[width=\textwidth]{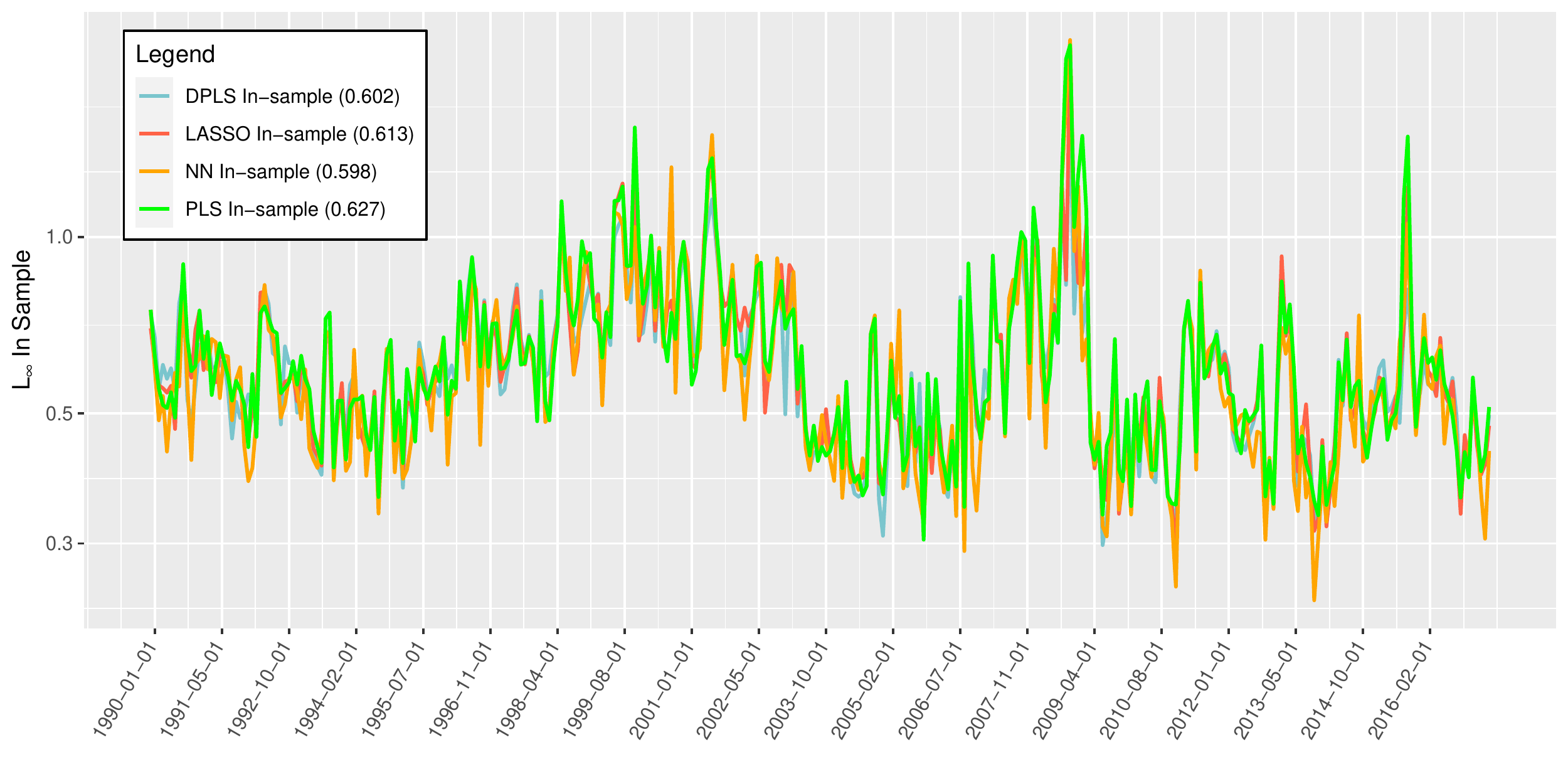}\\ 
(a) $L_\infty$ in-sample error\\
\includegraphics[width=\textwidth]{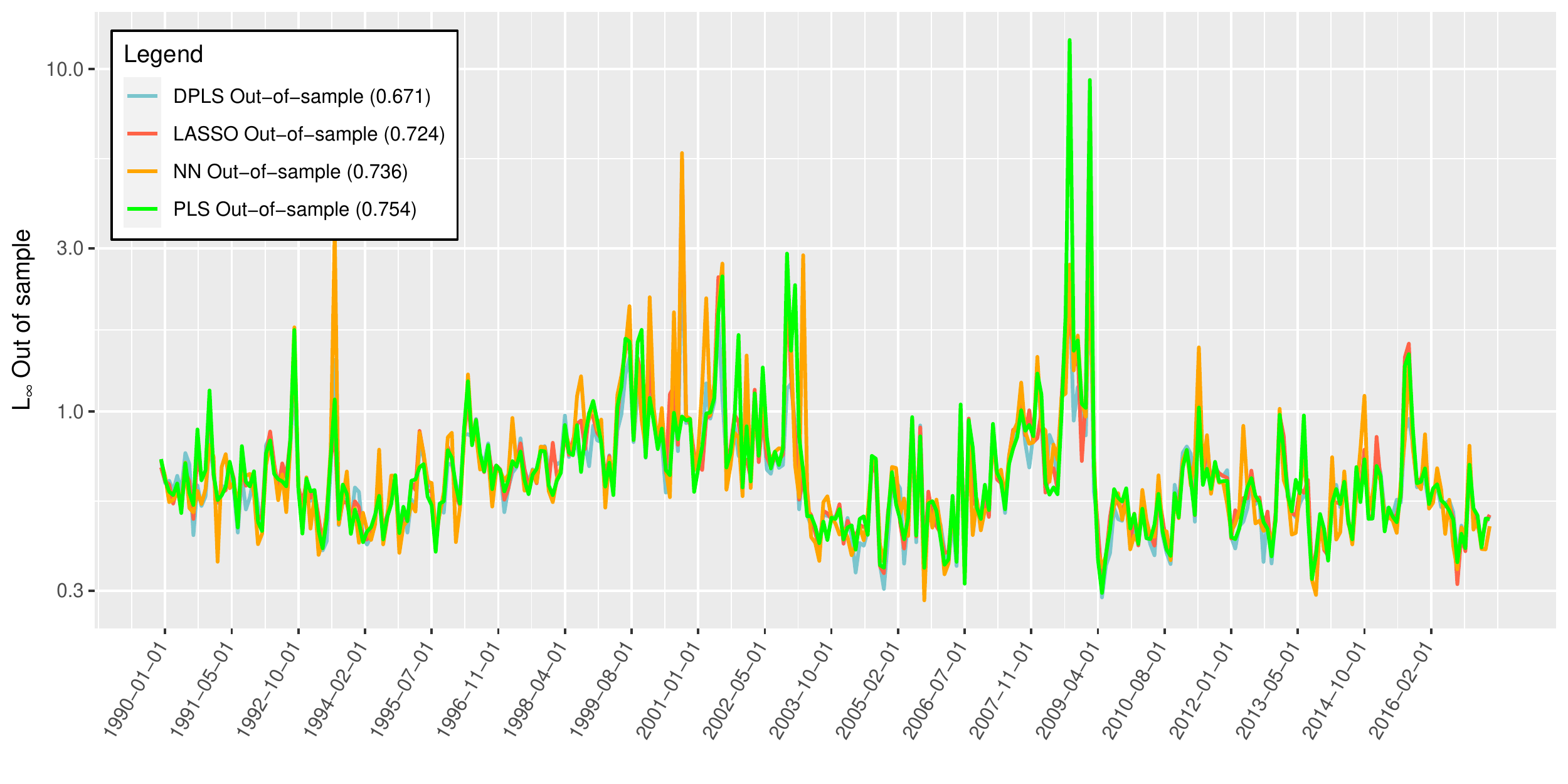}\\
 (b) $L_\infty$ out-of-sample error\\
\caption{\textit{The (a) in-sample and (b) out-of-sample stock returns error, under the $L_\infty$ norm, is compared between DPLS, PLS, NN, and LASSO regression applied to a coverage universe of 3290 stocks from the Russell 1000 index over the period from January 1990 to November 2018. The time averaged $L_\infty$ error is shown in parentheses.  }}
\label{fig:errors}
\end{figure}
Figure \ref{fig:errors_r2} compares the in-sample and out-of-sample performance of DPLS, NN, and LASSO regression using the $R^2$ of monthly asset returns. The time averaged $R^2$ over all periods is shown in parentheses. Overall, we observe that DPLS regression exhibits the best out-of-sample performance. We also find further evidence that the NN is over-fitting, on account of the low out-of-sample $R^2$.

\begin{figure}[H]
\centering
\includegraphics[width=\textwidth]{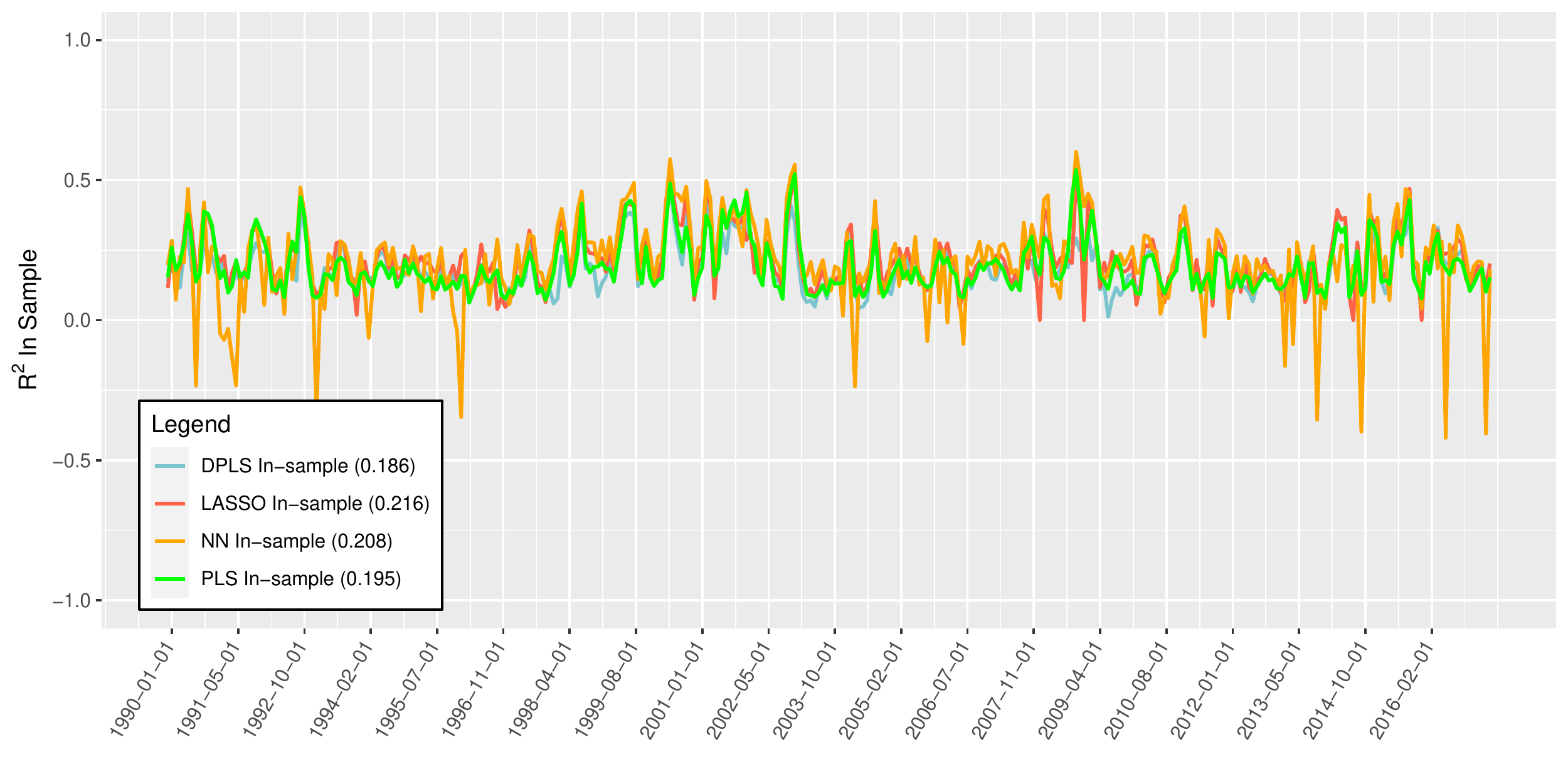} \\
(a) In sample $R^2$\\
\includegraphics[width=\textwidth]{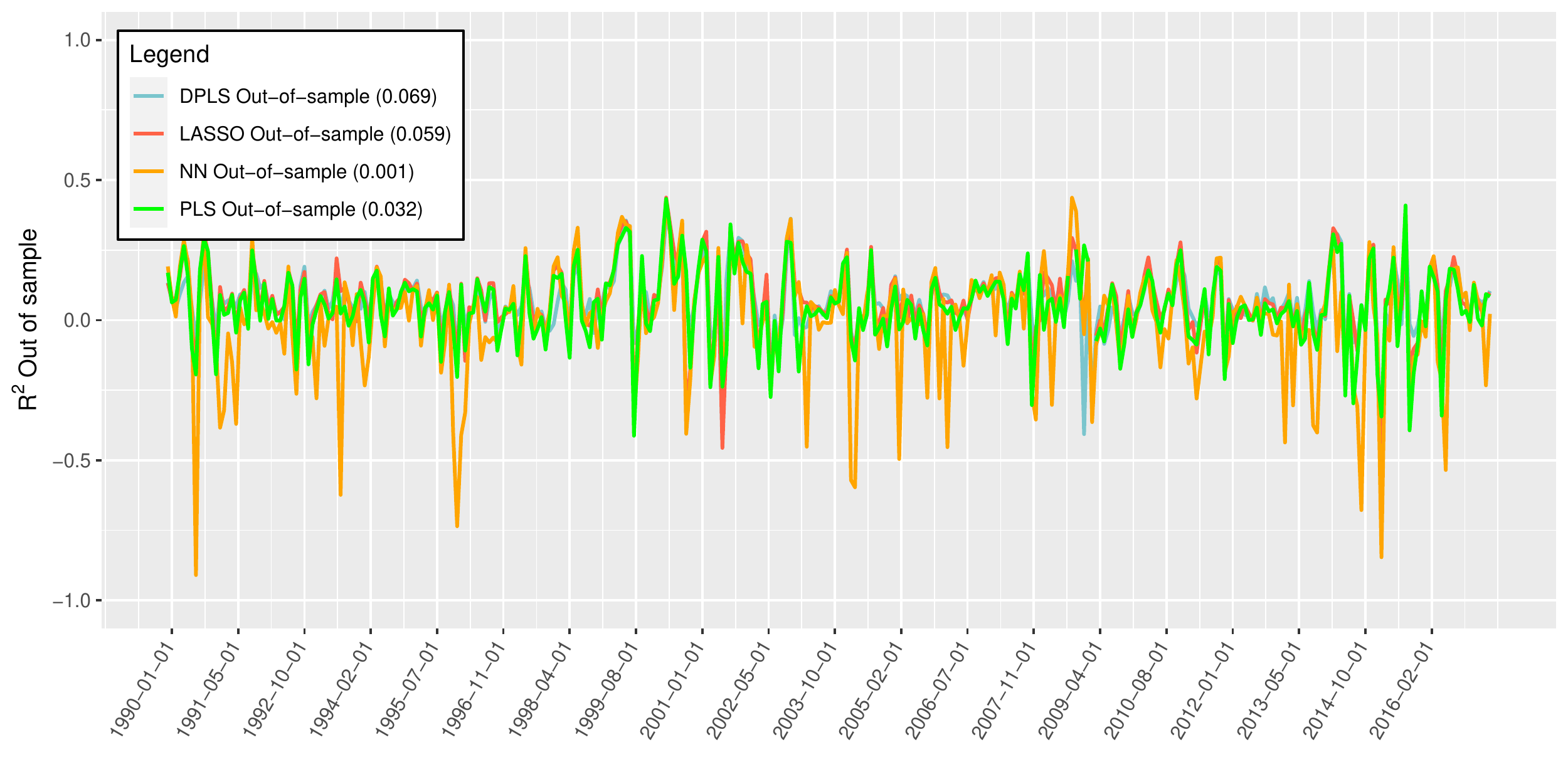}\\
(b) Out-of-sample $R^2$ \\
\caption{\textit{The (a) in-sample and (b) out-of-sample stock returns error, under the $R^2$ metric, is compared between DPLS, PLS, NN, and LASSO regression applied to a coverage universe of 3290 stocks from the Russell 1000 index over the period from January 1990 to November 2018. The time averaged $R^2$ error is shown in parentheses. }}
\label{fig:errors_r2}
\end{figure}

Figure \ref{fig:sensitivity} compares the time averaged factor model sensitivities, evaluated at $Z_{t-1}=\mathbf{0}$, over the thirty year period using (a) full NN and (b)  DPLS regression. For each method, the sensitivities are each sorted in descending order from top to bottom by their mean value over all periods. See Table \ref{tab:r3000} in Appendix \ref{sect:appendix} for a description of the factors. 
\begin{figure}[H]
\centering
\begin{tabular}{cc}
\includegraphics[width=0.48\textwidth]{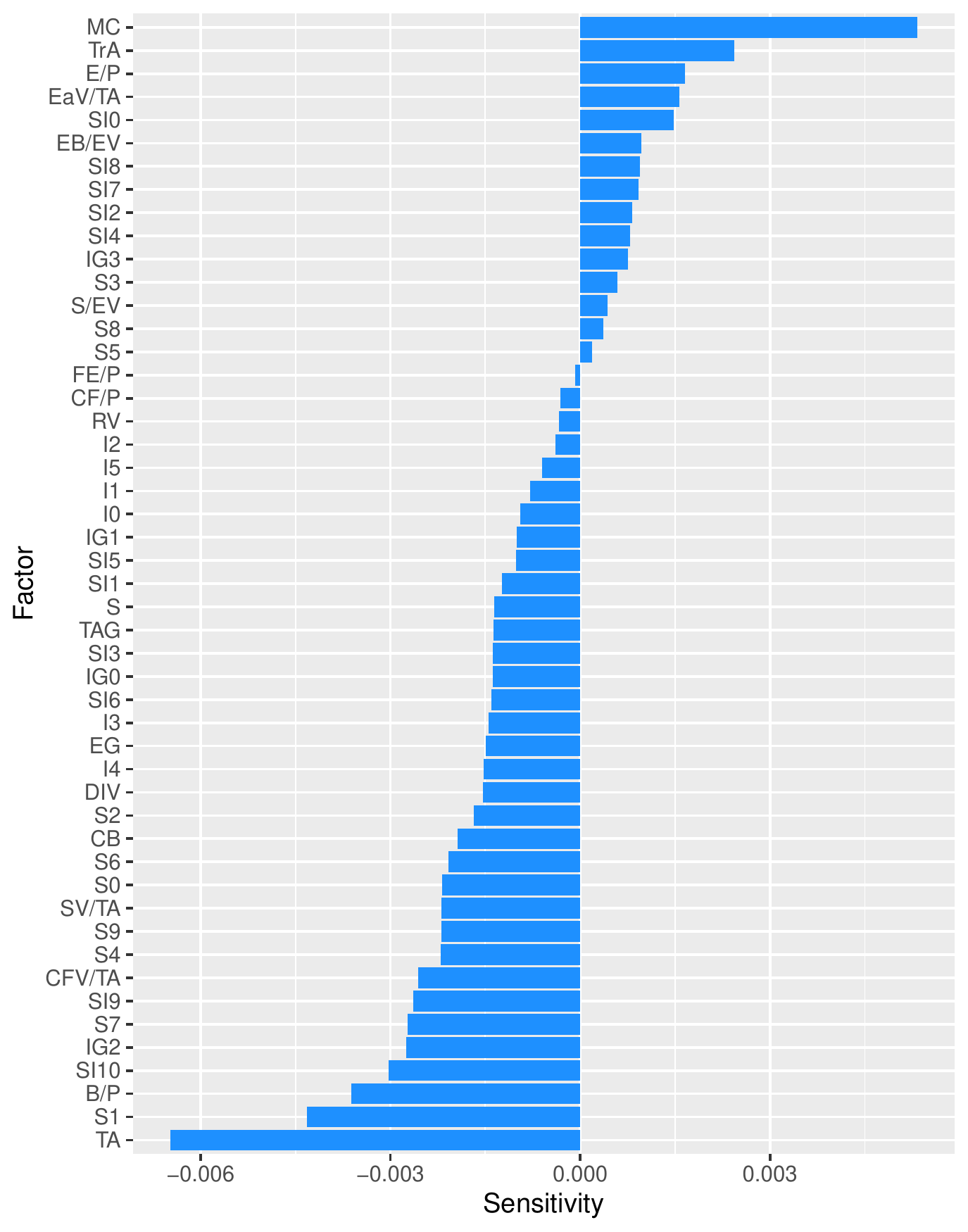}
&
\includegraphics[width=0.48\textwidth]{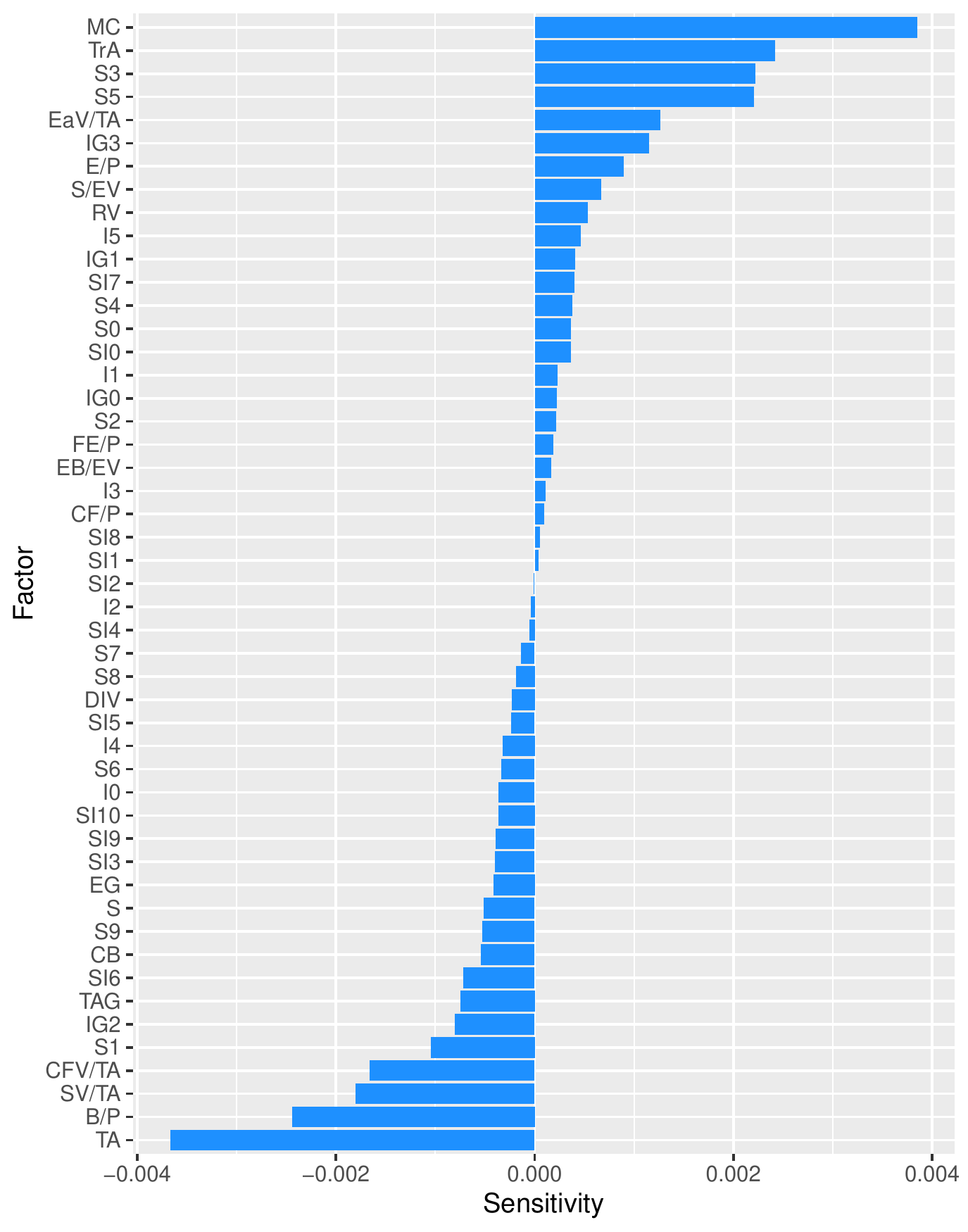}\\
(a) NN & (b) DPLS
\end{tabular}
\caption{\textit{The time averaged factor model sensitivities over the thirty year period using (a) NN and (b) DPLS regression applied to the constituents of the Russell 1000 index in each time period. Market Cap (MC), trading activity (TrA), and Earnings Volatility to Total Assets (EaV/TA) are ranked highly as positive factors by each model. Additionally, both models agree on Total Assets (TA), Book-to-Price (B/P), and Cash flow Volatility to Total Assets (CFV/TA) as the highest ranked negative factors contributing to asset returns. See Table \ref{tab:r3000} for a description of the factors.}}
\label{fig:sensitivity}
\end{figure}

All three methods share important consistencies. For example, Market Cap (MC), Trading Activity (TrA), and Earnings Volatility to Total Assets (EaV/TA) rank highly as positive factors. Conversely, Total Assets (TA),   Book-to-Price (B/P), and Cash flow Volatility to Total Assets (CFV/TA) rank highly as negative factors. Both models also agree that expected returns are, on average, largely independent of Cash Flow to Price (CF/P). 

However, there are also significant discrepancies between the methods. For example EBIDTA to Earnings Volatility (EB/EV) ranks highly as a positive factor in the NN model model, yet is weakly positive in the DPLS model. As another example, Total Assets Growth (TAG) is ranked as one of the most negative factors in the DPLS model, yet  the NN model places it as weakly negative. The NN model is observed to exhibit sensitivities which are up to approximately 50\% larger than in the DPLS model, suggesting that the higher degree of parameterization leads to factor over-sensitivity. The NN sensitivities are also, on average, skewed negative. Another apparent difference is the effect of dimension reduction in the DPLS model -- there are many more non-zero fundamental factors in the NN model, whereas the mid-section of the DPLS plot is comparatively thin, indicating fewer material factors on average.


\begin{figure}[H]
\centering
\includegraphics[width=\textwidth]{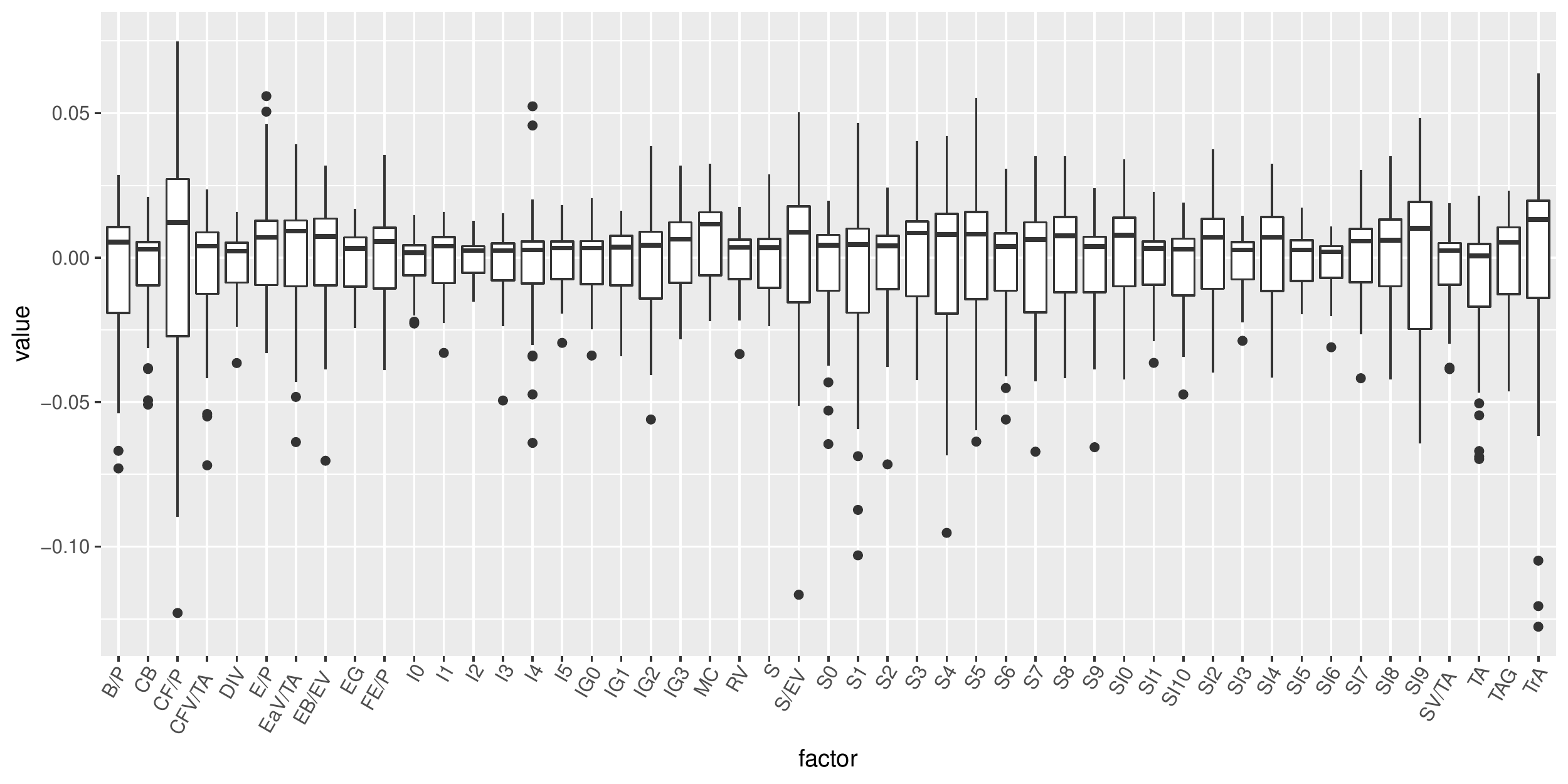}
(a) Temporal distribution of NN factor sensitivities
\\
\includegraphics[width=\textwidth]{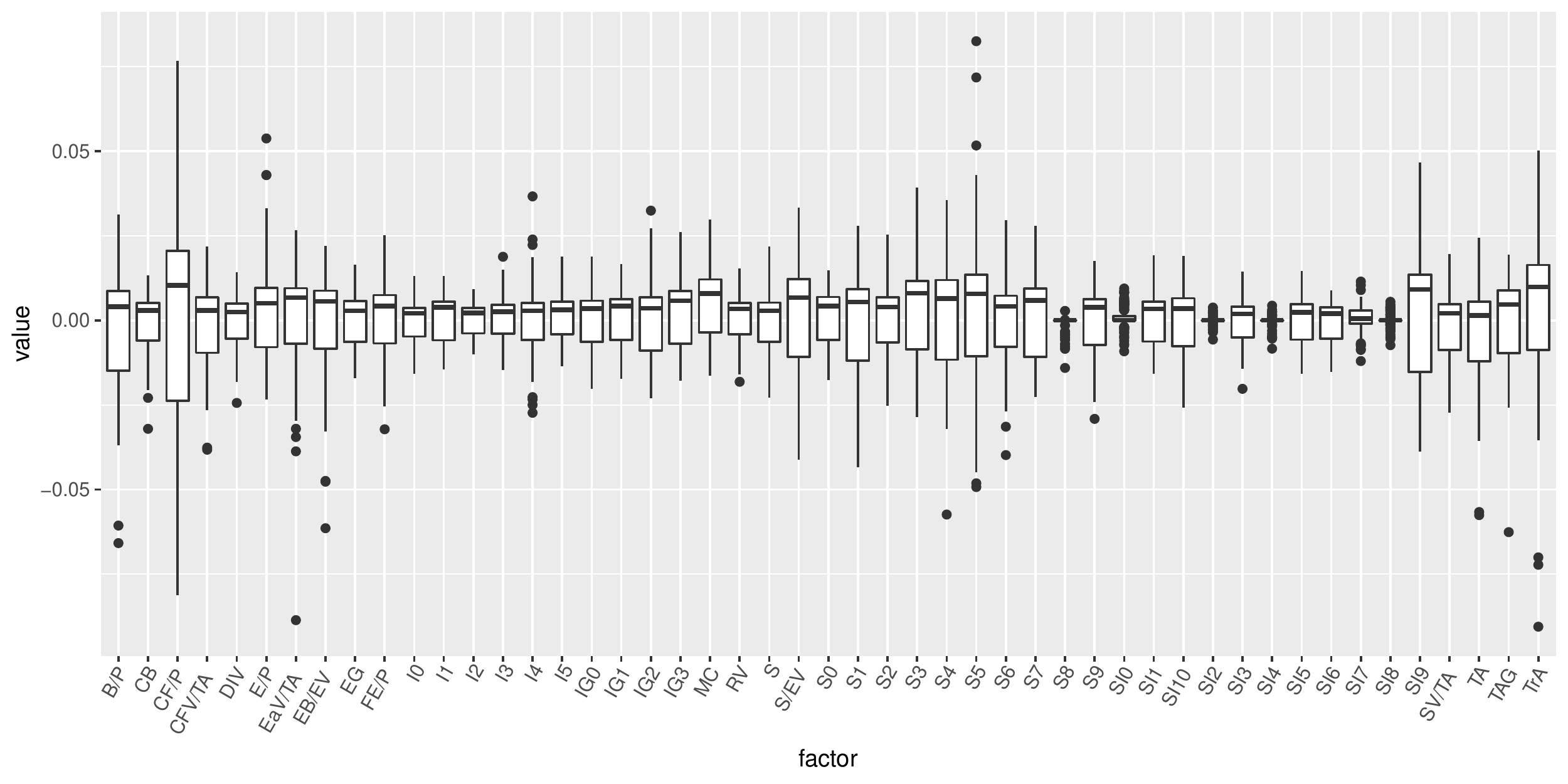}\\
(a) Temporal distribution of DPLS factor sensitivities

\caption{\textit{The temporal distribution of factor model sensitivities over the thirty year period using (a) NN and (b) DPLS regression applied to the constituents of the Russell 1000 index in each time period. The medians are shown by the horizontal solid lines. We observe that Cash flow to Price and sub-industry group 9 carry the most uncertainty - the inter-quartile range is the largest in both models. DPLS also collapses a number of the sector and sub-industry group variables closely around zero.}}
\label{fig:temporal_sensitivity}
\end{figure}

The time averages provide a limited view on the model sensitivities. We additionally plot the temporal distribution of the factor sensitivites.
We observe that Cash flow to Price and sub-industry group 9 carry the most uncertainty - the inter-quartile range is the largest in both models. DPLS also collapses a number of the sector and sub-industry group variables closely around zero.

We can also study the extent to which the most dominant factors persist over multiple periods. Figure \ref{fig:heatmap_sensitivities} compares the sensitivity of the predicted excess asset returns to the fundamental factors over the two year period between October 2007 and October 2009, covering the Lehman Brothers financial crash. We find consistency between the factor sensitivities and the September/October 2008 Lehman collapse period. In particular, DPLS is able to identify a number of prominent factors with large positive sensitivities in this period. In contrast, the NN model only identify one strongly positive sensitivity. We also see more variability from period to period in the NN model, whereas the DPLS exhibits more observations where the factor dominance persists over multiple periods.

\begin{figure}[H]
\centering
\begin{tabular}{cc}
\includegraphics[width=0.5\textwidth]{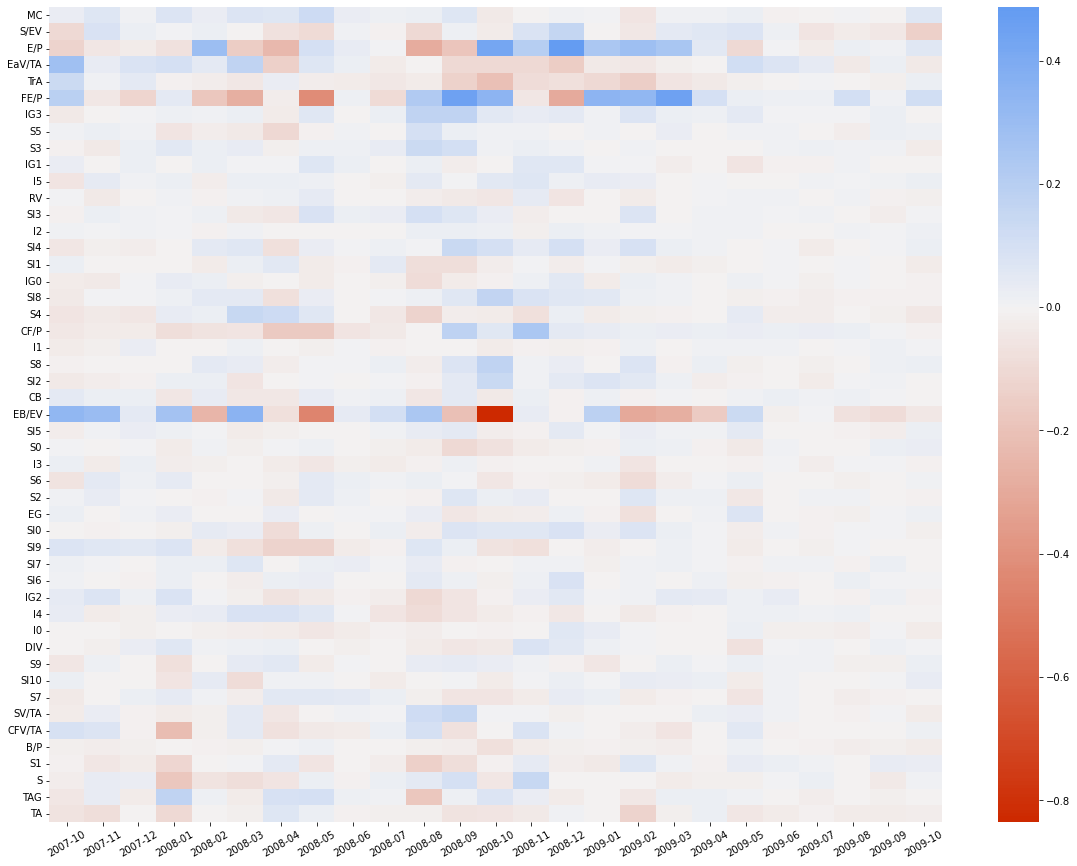} &
\includegraphics[width=0.5\textwidth]{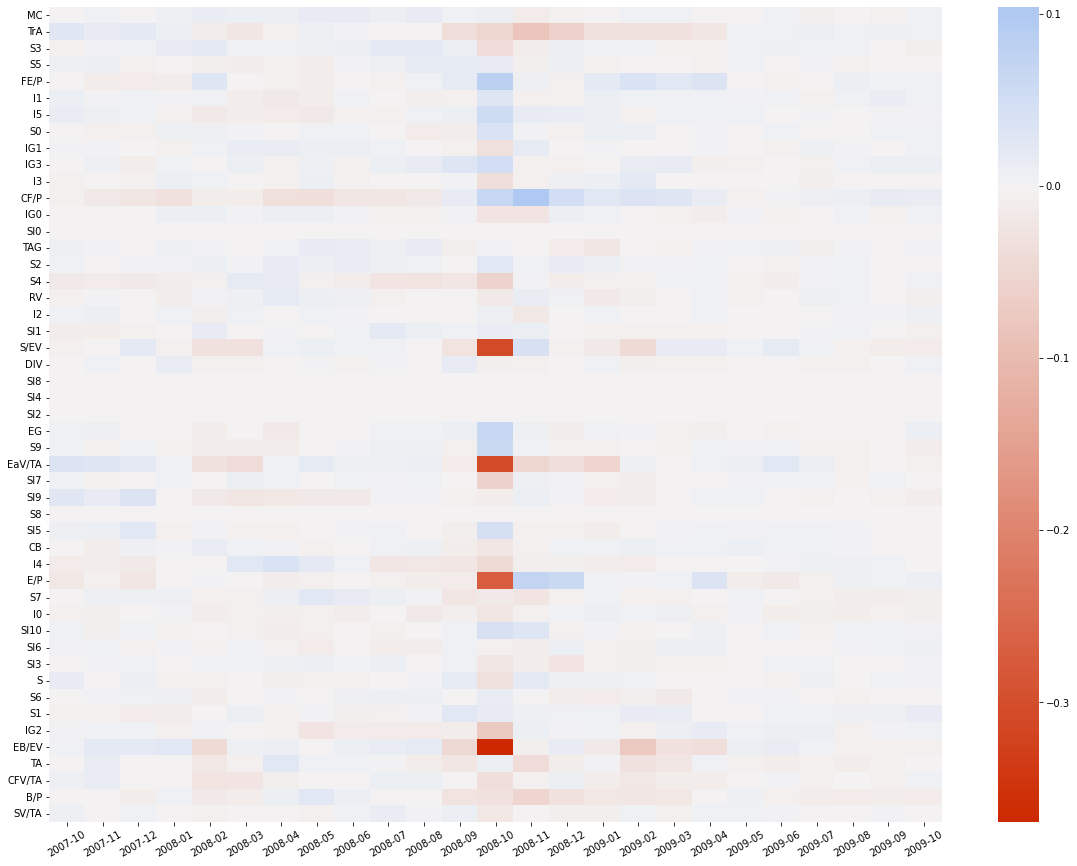}\\
(a) NN  & (b) DPLS\\
\end{tabular}
\caption{\textit{The dynamical model sensitivities over a two year period between October 2007 and October 2009 covering the Lehman Brothers financial crash. The factors are ranked by their time averaged sensitivities from top to bottom in descending order.}}
\label{fig:heatmap_sensitivities}
\end{figure}

Further insight can be gained by analyzing the interaction effects and quadratic terms in the NN and DPLS models, as shown in Figure \ref{fig:interaction}. These interaction terms are time averaged over the thirty year period using (a) NN and (b) DPLS regression applied to all stocks in the Russell 1000 index. $XX-YY$ denote pairwise interaction effects between factors XX and factors YY. 

In contrast to the sensitivities, the most important interaction terms are different across each model. A common pattern is for a model to find a fundamental ratio across multiple sectors and industry groups. For example, the DPLS model finds the quadratic Book-to-Price ratio positively dominant, but also finds Book-to-Price across SubIndustry 5 in addition to Industry Groups 0 and 3. The DPLS model also finds Cashflow/Price strongly negatively dominant across Industry 4, SubIndustry 7, and Sectors 7 and 9. The NN model finds Earnings Volatility/Total Assets (EaV/TA), Cash Flow Volatility/Total Assets (CFV/TA), Total Asset Growth (TAG), Total assets (TA), and Market Cap (MC) positively dominant in specific SubIndustries, Sectors, and Industry Groups. Whereas Dividends (DIV) are exclusively associated with negative effects in specific SubIndustries, Sectors, and Industry Groups. We also note that DPLS isolates a few particularly dominant interaction pairs - the interaction of Book-to-Price and Earnings-to-Price is a strong negative effect and the Industry 4-Sector 0 pair is strongly positive.

The top interaction effects in the NN model are also an order of magnitude larger than in the DPLS model, suggesting that the higher degree of parameterization renders the NN interaction effects as over-sensitive.

\begin{figure}[H]
\centering
\begin{tabular}{cc}
\includegraphics[width=0.5\textwidth]{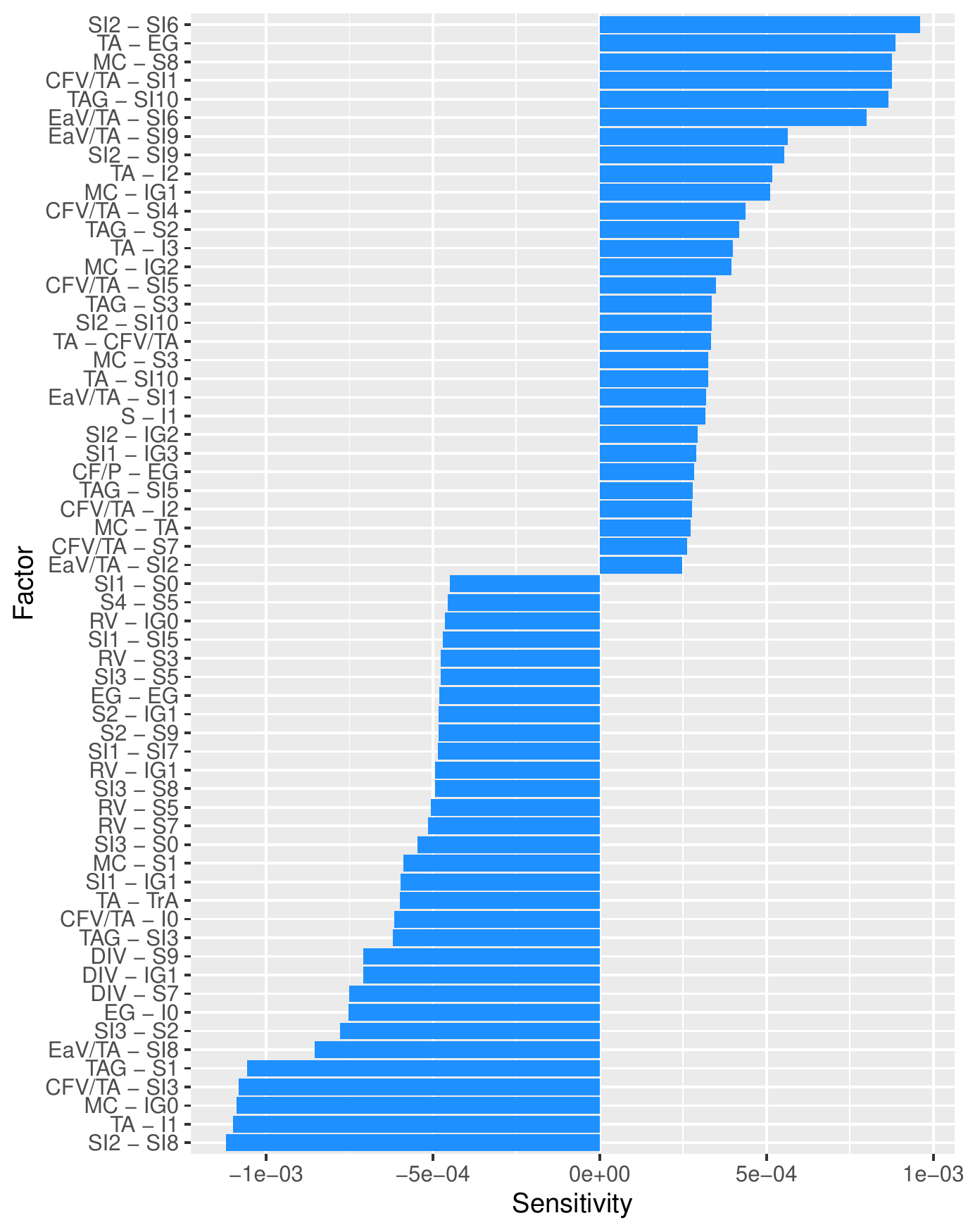}
&
\includegraphics[width=0.5\textwidth]{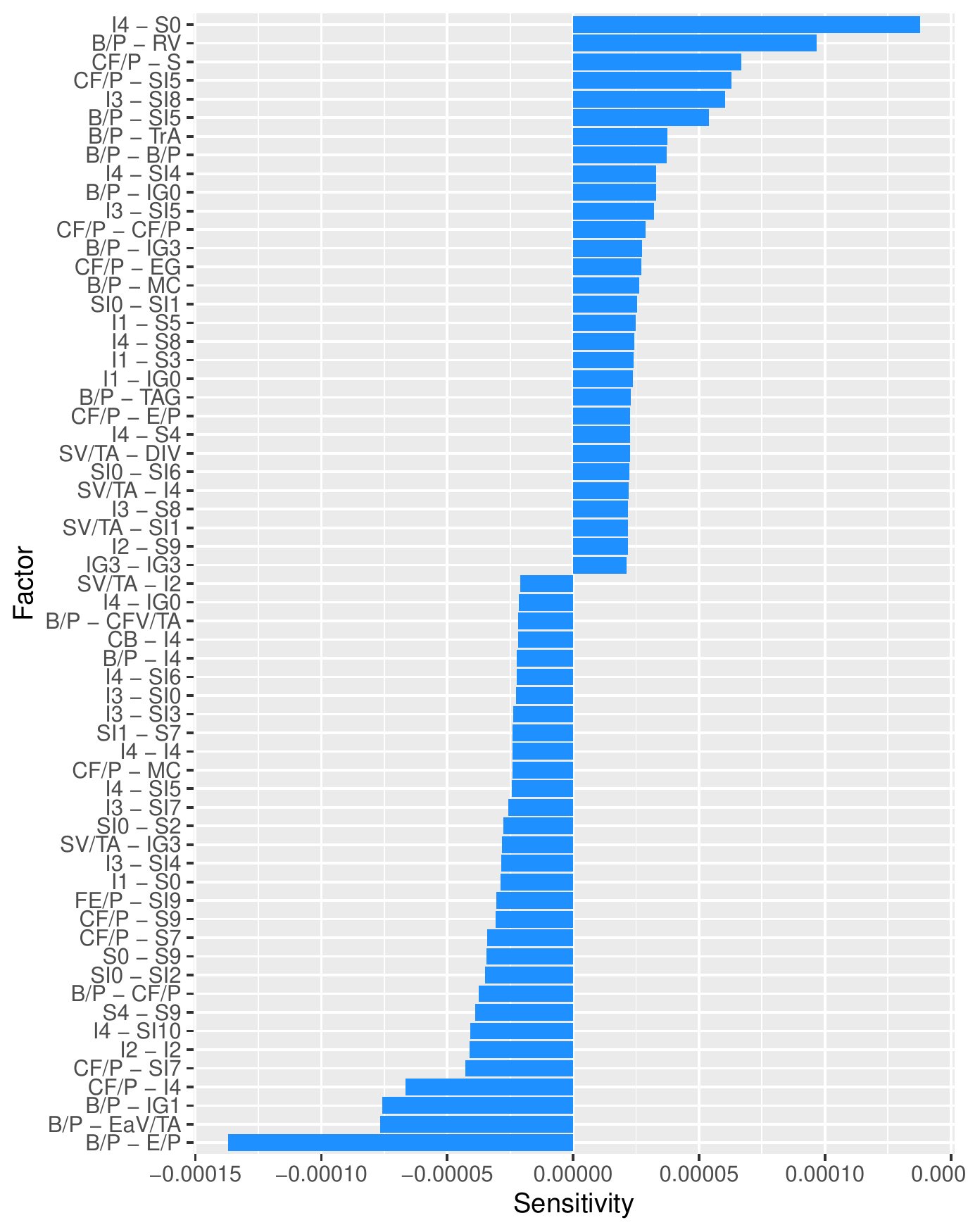}\\
(a)  NN &  (b) DPLS
\end{tabular}
\caption{\textit{The most important pairwise interaction effects, time averaged over a thirty year period using all stocks.}}
\label{fig:interaction}
\end{figure}

The temporal distribution of interaction effects can shed light on the degree of variability in interaction effects over time. Figure \ref{fig:temporal_interaction} compares the NN and DPLS interaction terms, listed in alphabetic order from left to right. In contrast to the NN model, most of the DPLS interaction effects are tightly distributed with a few notable exceptions - Book-to-Price$-$Earnings/Price (B/P-E/V), Cash flow to price$-$Industry 4 (CF/P-I4) and cash flow to price$-$SubIndustry 5 (CF/P-SI5). 

The NN interaction effects, on the other hand, are much more widely distributed and it more difficult, if not impossible, to draw meaningful conclusions from the time averaged ranking of the interaction effects, given there is so much variability. 
\begin{figure}[H]
\centering
\includegraphics[width=\textwidth]{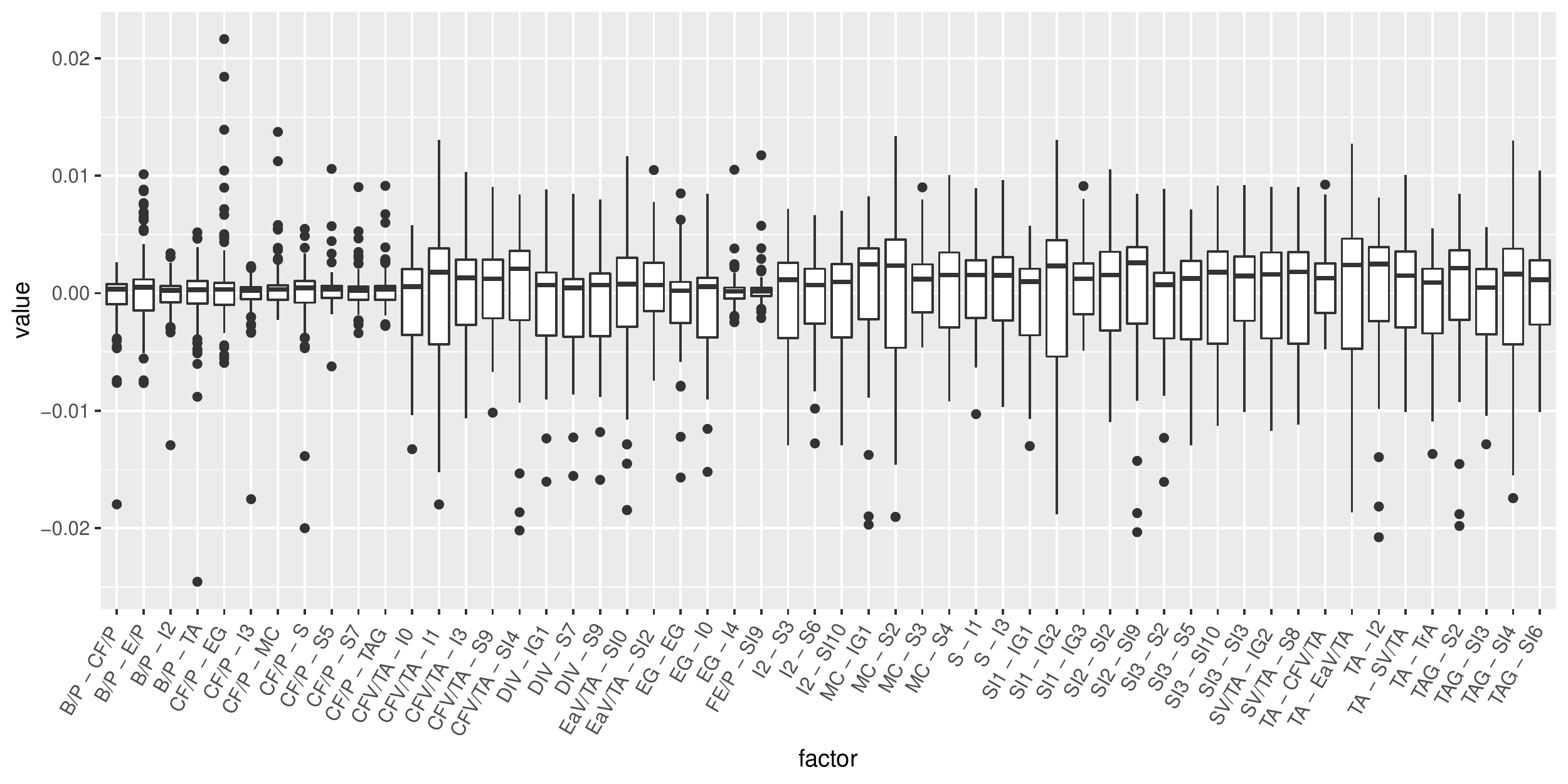}
(a) Temporal distribution of NN factor interactions 
\\
\includegraphics[width=\textwidth]{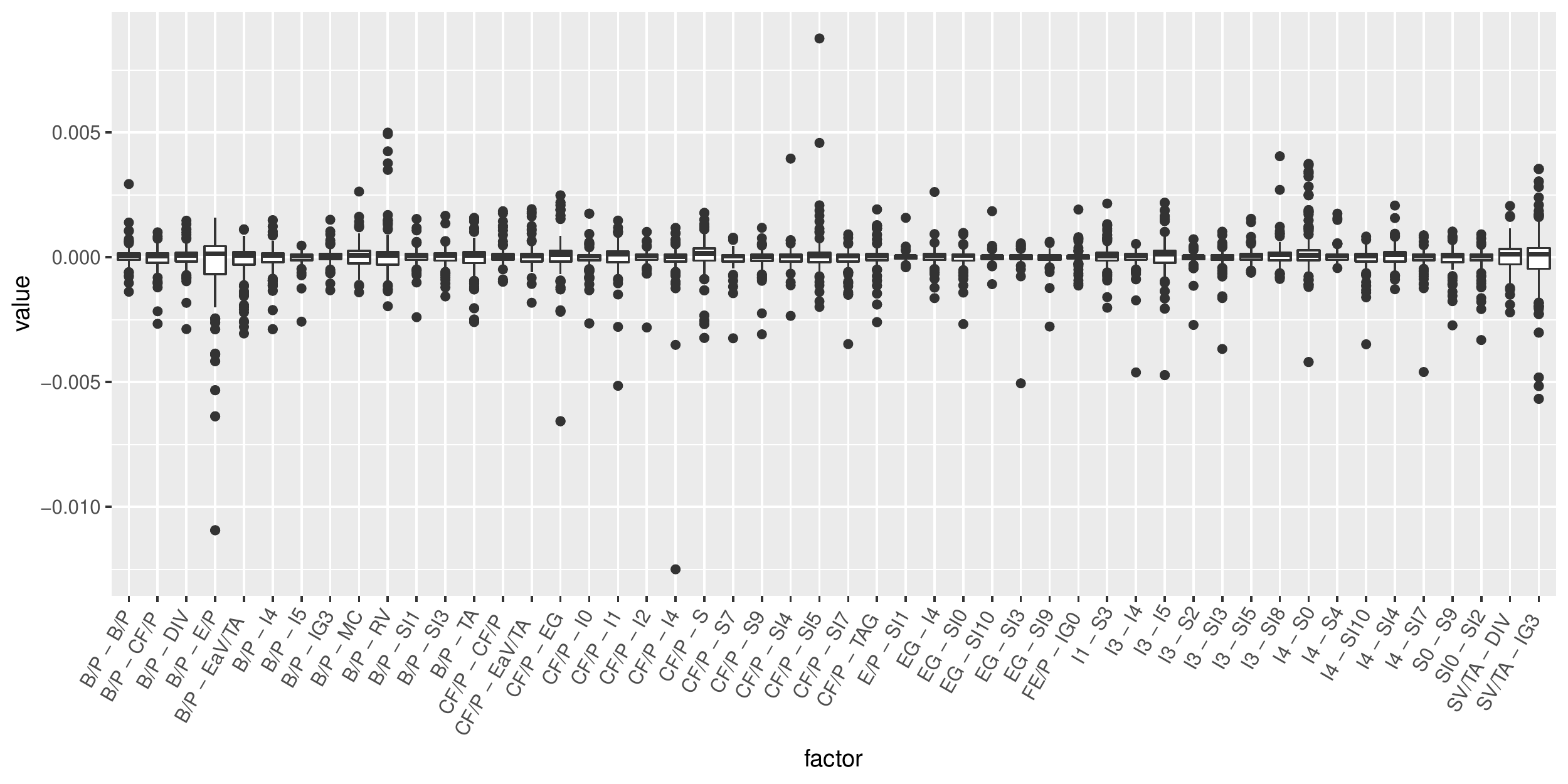}\\
(a) Temporal distribution of DPLS factor interactions

\caption{\textit{The temporal distribution of factor model interactions over the thirty year period using (a) NN and (b) DPLS regression applied to the constituents of the Russell 1000 index in each time period. The medians are shown by the horizontal solid lines.}}
\label{fig:temporal_interaction}
\end{figure}

 Here we observe much less consistency between the NN and the DPLS model. The top interaction effects in the NN model are also an order of magnitude larger than in the DPLS model, suggesting that the higher degree of parameterization renders the NN interaction effects as over-sensitive. 

\begin{figure}[H]
\centering
\begin{tabular}{cc}
\includegraphics[width=0.48\textwidth]{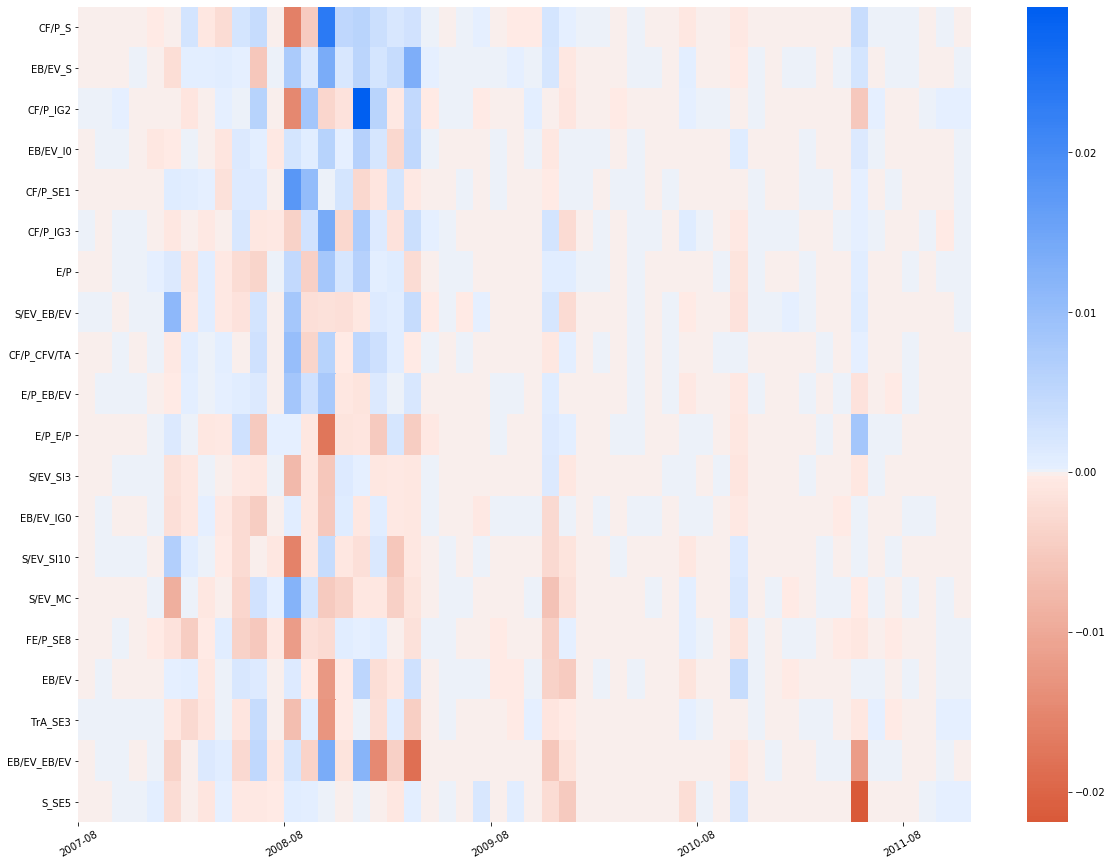}
&
\includegraphics[width=0.48\textwidth]{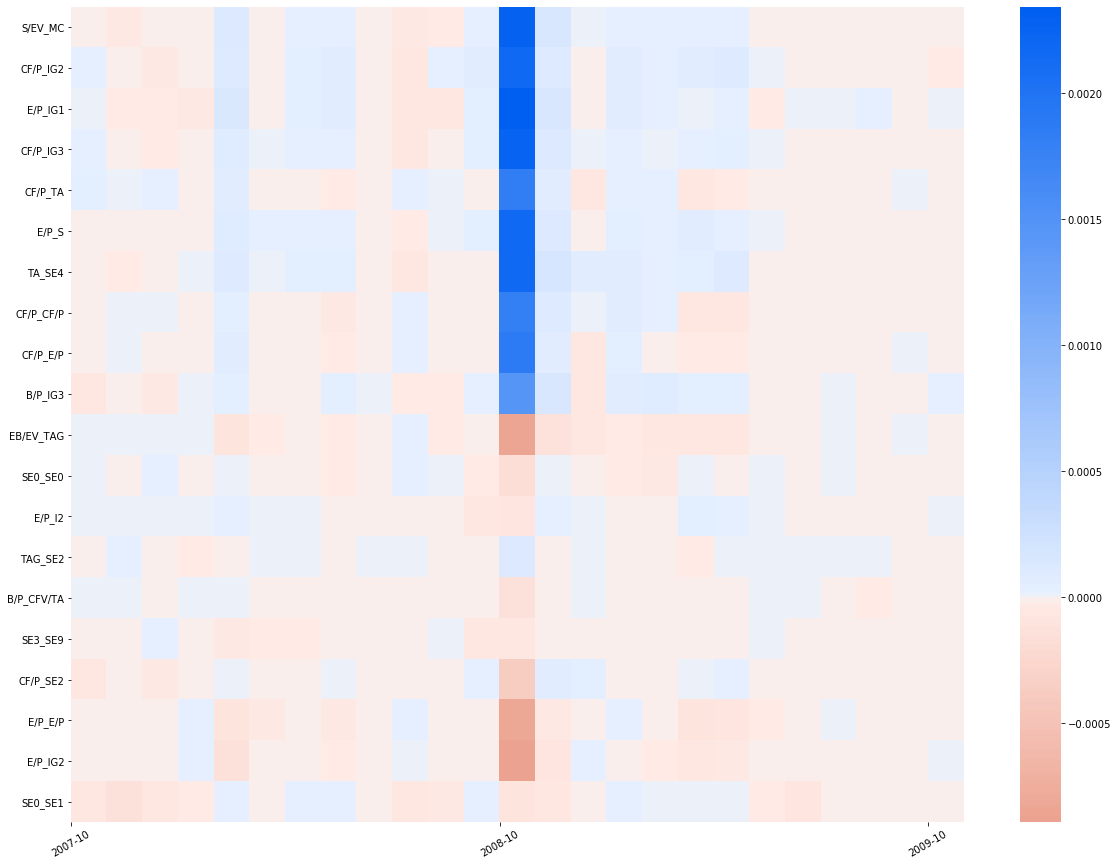}\\
(a) NN & (b) DPLS\\
\end{tabular}
\caption{\textit{The interaction effects (a) of the NN model over a four year period between October 2007 and October 2011 and (b) of DPLS over a two year period between October 2007 and 2009, both periods covering the Lehman Brothers financial crash. The factors are ranked by their time averaged interactions from top to bottom in descending order.}}
\label{fig:heatmap_interaction}
\end{figure}

Figure \ref{fig:heatmap_interaction} compares the most important interaction and quadratic effects over a multi-year period covering the Lehman Brothers financial crash. The DPLS interaction terms in October of 2008 dominant the other dates and largely govern the overall time averaged ranking. In contrast, the NN exhibits multiple outlier dates and more severe swings between an interaction effect being positive dominant and negative dominant from month to month. For example, in Figure \ref{fig:heatmap_interaction}(a), the interaction effects in the first and third row swing between dark red and dark blue in a relatively short period of time.

\clearpage
\subsection{Portfolio analysis}


We now evaluate the utility of the DPLS factor model as a predictive signal for constructing portfolios. 

To evaluate the relative portfolio performance under a DPLS factor model versus a PLS, LASSO or NN factor model, we construct an equally weighted, long-only, portfolio 
of $n$ stocks with the highest predicted monthly returns in each period. The portfolio is reconstructed each period, always remaining equal but varying in composition. This is repeated over the 30 year period in the data. We then estimate the information ratios from the mean and volatility of the excess monthly portfolio returns, using the Russell 1000 index as the benchmark.

This exercise is merely to assess the relative merits of DPLS factor models versus other methods as is not intended as an exhaustive study of the implications of using DPLS factor models fro portfolio construction. One important caveat is that the information ratio does not account for trading costs, which could erode performance due to high turnover. Additionally we do not consider more complex constraints such as sector balancing of the managed portfolios.

\begin{figure}[H]
\centering
 \includegraphics[width=\textwidth]{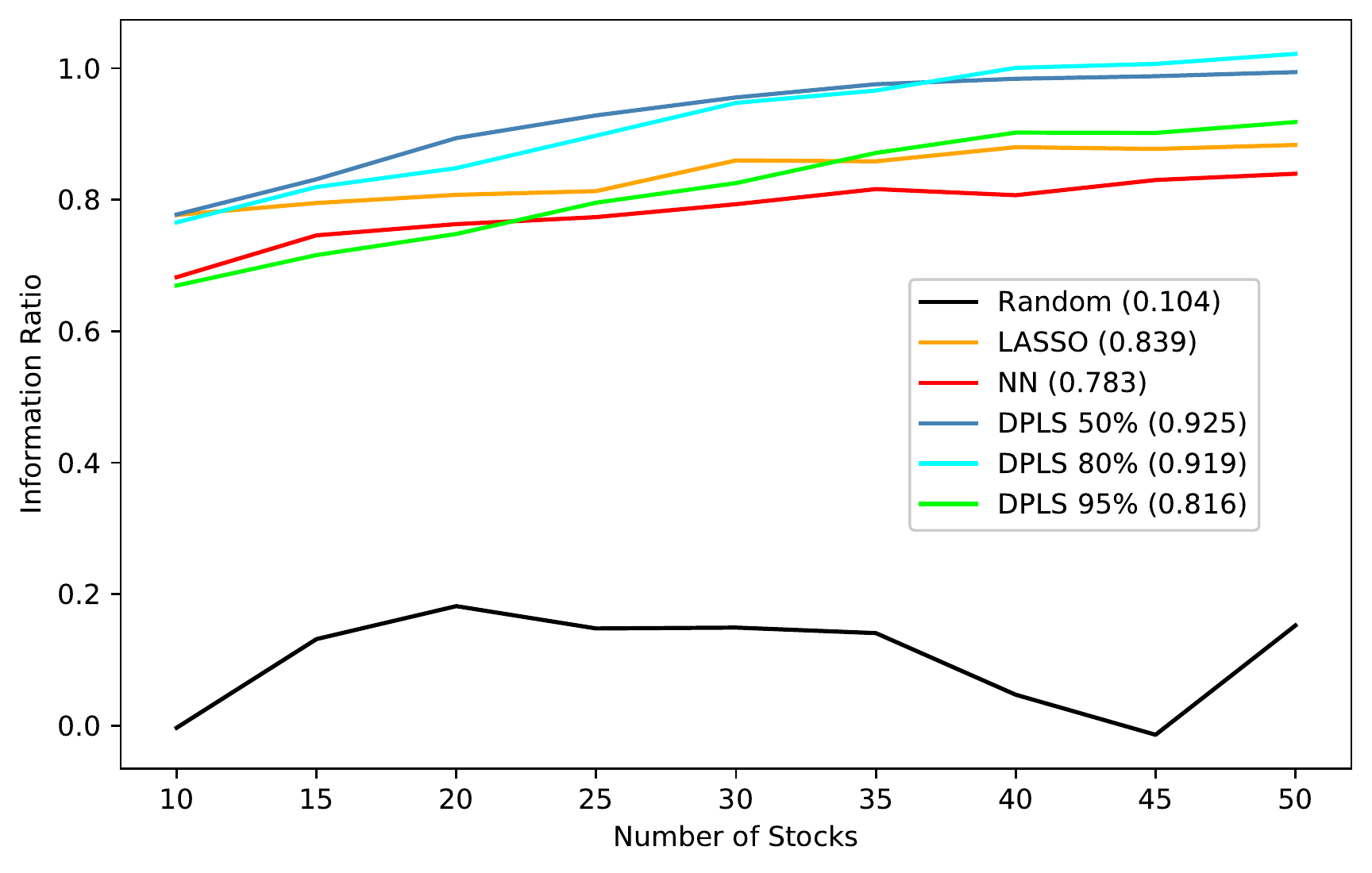}\\

\caption{\textit{The information ratios of a portfolio selection strategy which selects the $n$ stocks from the universe with the highest predicted monthly returns. The information ratios are evaluated for various equally weighted, long-only, portfolios whose size is shown by the x-axis. }}
\label{fig:IR}
\end{figure}

Figure \ref{fig:IR} compares the information ratios using DPLS, NN, and LASSO regression to identify the stocks with the highest predicted returns. The information ratios are evaluated for equally weighted portfolios of various sizes. Also shown, for control, are randomly selected portfolios, without the use of a predictive signal. The mean information ratio for each model, across all portfolios, is shown in parentheses. We observe that the information ratio of the portfolio returns, using DPLS regression, is approximately 1.2x greater than NN regression. NN regression, despite exhibiting substantially lower out-of-sample error than LASSO regression results in an inferior information ratio than LASSO, suggesting that stock selection based on more accurate predicted returns does not directly translate to higher information ratios on account of the stock return covariance. 

We also observe that the information ratio of the baseline random portfolio is small, but not negligible, suggesting sampling bias and estimation universe modification have a small effect.

Figure \ref{fig:factor_tilts_heatmap} shows the time averaged factor tilts of equally weighted portfolios constructed from the predicted top performing 50 stocks, in each monthly period, over all time periods. 

Most of the factor tilts are consistent across the models, however, some are not. For example, Market Cap (MC), Total Asset Growth (TAG), and Dividends (Div).  LASSO and NN regression overweight the TAG. LASSO also overweights Market Cap. DPLS (95\%) and NN regression overweight Dividends.  For completeness, the changes in factor tilts over each period are provided in Section \ref{sect:factor_heatmaps}.

\begin{figure}[H]
\centering
\includegraphics[width=0.7\textwidth]{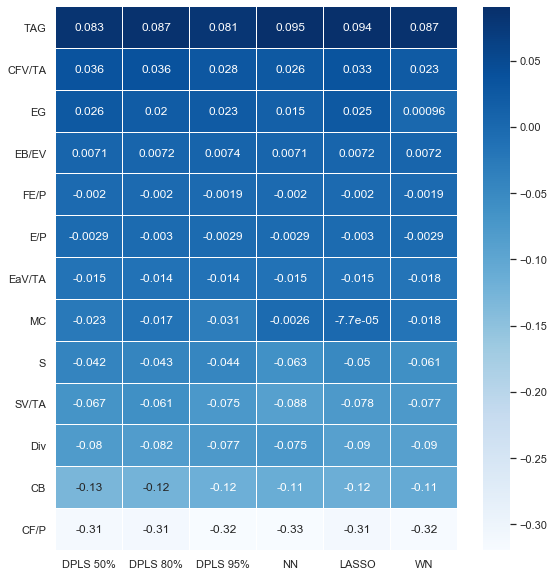}

\caption{\textit{The rescaled factors tilts of equally weighted portfolios constructed from the predicted top performing 50 stocks and then averaged over time.}}
\label{fig:factor_tilts_heatmap}
\end{figure}



Figure \ref{fig:sector_tilts_heatmap} shows the rescaled sector tilts of equally weighted portfolios constructed from the predicted top performing 50 stocks, and then averaged over all periods. The sectors are ranked by their time averaged ratios, but their tilts vary each month due to portfolios turn-over. WN denotes a random equally weighted portfolio. 

\begin{figure}[H]
\centering
\includegraphics[width=0.7\textwidth]{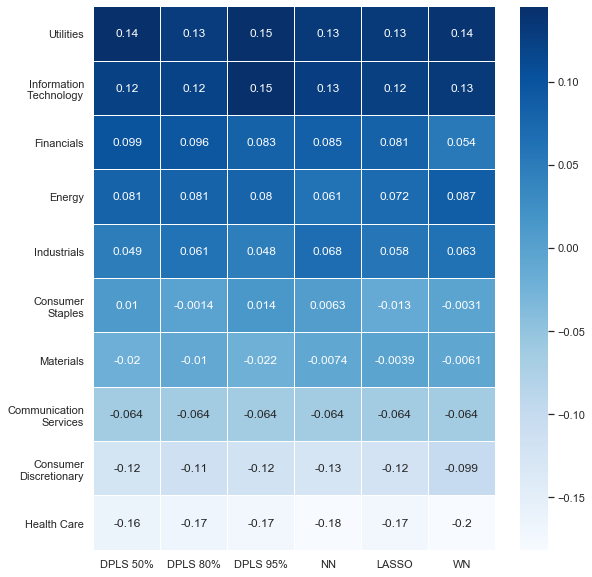}

\caption{\textit{The rescaled sector tilts of equally weighted portfolios constructed from the predicted top performing 50 stocks and then averaged over time.}}
\label{fig:sector_tilts_heatmap}
\end{figure}


Utilities and Information Technology are the most dominant sectors, i.e. with the highest time averaged representation. In contrast, Consumer Discretionary and Health Care are the least representative sectors shown. Note that the least representative sector, Real Estate, has been excluded. The sector tilts across DPLS, NN, and LASSO regression are found to generally concur. Notable differences include DPLS (95\%) favoring portfolios with a higher Information Technology and Consumer Staples tilt and lower Industrial tilt. For completeness, the changes in sector tilts over each period are provided in Section \ref{sect:sector_heatmaps}.

\subsection{Risk factor exposures}
We estimate the K-factor DPLS model for various $K$ using the in-sample total $R^2$:

$$R^{2,in}_{total}=1 -\frac{\sum^{N_t, T}_{i=1,t=1}(r_{i,t}- \hat{r}_{i,t}(Z_{t-1}; \hat{\theta}_t ))^2}{\sum^{N_t,T}_{i=1,t=1} r^2_{i,t}},$$ 
and the out-of-sample total $R^2$:

$$R^{2,out}_{total}=1 -\frac{\sum^{N_t, T-1}_{i=1,t=1}(r_{i,t+1}- \hat{r}_{i,t+1}(Z_t; \hat{\theta}_t ))^2}{\sum^{N_t,T-1}_{i=1,t=1} r^2_{i,t+1}},$$ 

over all periods. These total $R^2s$ measure the extent to which the latent risk factors capture the realized riskiness in the panel of individual stocks either in-sample or out-of-sample respectively. It can be interpreted as the amount of explained variance of the asset returns.  If the intercept is set to zero in the DPLS model, each of these $R^2s$ represent the model's ability to describe risk compensation solely through exposure to the latent risk factors. Otherwise, they represent a risk-free compensation and, for each $K$, incremental risk compensation adjustments. 

Table \ref{tab:factors} compares the performance of three different statistical factor models, both in-sample and out-of-sample, across all $N_t$ stocks in every period $t$.   The motivation for showing the in-sample and out-of-sample performance is to assess the bias-variance trade-off.  In particular, a drop in performance from in-sample to out-of-sample is indicative of over-fitting whereas poor performance in-sample suggests under fitting. The table also compares the performance of an equally weighted portfolio which is long the top ten performing stocks, as predicted by each model. In that case the sum of $i$ collapses as there is just one sum of $t$ for the portfolio returns.

The first model we consider is the asymptotic PCA model (APCA) which is fitted to panel data rather than period-by-period.  This model is solely based on excess returns and ignores the factor data. Hence, the model can not be used for prediction. When applied to the cross-section of stocks, we observe a monotonic increase in the in-sample total $R^2$ with increasing  $K$. 
The performance based on the total $R^2$ of the portfolio returns is very high and increasing factors brings the in-sample performance to an almost perfect level. However we note that only a few statistical factors in the PCA already substantially captures the portfolio risk.

The PLS and DPLS are different to APCA in two important ways: (i)  neither PLS nor DPLS adjust for heteroschedasticity as does APCA; and (ii) APCA fits to panel data whereas PLS ad DPLS fit to period-by-period cross-sectional data. Hence in PCA, the latent factors are the same across all periods, whereas in PLS and DPLS, the number of factors vary in each period. 

The methodology for selecting the number of latent risk factors is also more complex in PLS and DPLS as both are supervised learning methods.  In each period, $t$, the optimal number of factors, $K^*_t$, is chosen for each method using all stock data according to the hyper-parameter tuning details described earlier in this section. Each model is then trained with the optimal number of factors in each period. This is important, for two reasons: (i) the comparison between PLS and DPLS must show the attribution of the model error to factors given that the model fits the data well otherwise $K^*$ conflates quality of model fit with prediction attribution against the number of latent factors given the best fit; and (ii) it becomes computationally intractable to retrain the DPLS model in each every period over all values of $K$.  

Once $K^*_t$ is estimated, we predict the asset returns using $K_t\in\{1,\dots, K^*_t\}$ --- both PLS and DPLS quite naturally perform this since averaging over the first $K_t$ columns of $\hat{U}_{t}$ is a step in the prediction. However, it is not possible to predict with more than $K^*_t$ factors.  This procedure is repeated for all periods and the $R^{2}_{total}(K)$ is shown in the table for in-sample and out-of-sample forecasting.  Note that, because $K^*_t$ varies between each period, larger values of $K$ will exhibit fewer entries in the  $R^{2}_{total}$ formula prior to summing over time.  For example, suppose there are two periods only, and $K^*_1=1$ and  $K^*_{2}=2$, then there will be two error estimations at $K=1$ but only one at $K=2$ in the sum. We choose $K_{max}=10$ heuristically in the table based on a minimum sample size of 100 over all stocks and periods at $K=K_{max}$ in order to avoid statistical noise dominating the estimations. In other words, there must be at least 100 error terms to be ``double summed'' over $i$ and $t$ in the total $R^2$.

Across all stocks, the PLS and DPLS model's in-sample performance is substantially higher than PCA and monotonically increases with K, capturing as much as 25\% of the variance of the asset's excess returns. At first, this result seems counter intuitive given that PCA optimizes explained variance whereas PLS optimizes explained covariance between $X$ and $Y$. However, the tuning of $K_t$ is by out-of-sample MSE. Moreover, the neural network in DPLS is fitted by minimizing a MSE based loss function. Both PLS and DPLS perform comparably well in-sample across all stocks.

The out-of-sample performance of PLS starts to sharply deteriorate after 4 factors whereas DPLS peaks at 10 factors with as much as 15\% explained variance and almost monotonically grows with the number of factors. Note that monotonicity of MSE in the number of factors is only guaranteed in any one period --- over multiple periods, there is no reason why the monotonicity should be preserved. It just so happens, here, that $K^*_t$ tends to be close to 10.  

Applied to the portfolio, PLS and DPLS perform poorly in-sample compared to PCA, however, DPLS is shown to outperform PLS significantly out-of-sample.  Out-of-sample explained variances of portfolio returns are only available for $K=1$ using PLS due to too few samples for the total $R^2$ calculation.  In other words, there are too few periods where $K^*_t>1$ to generate sufficient terms to evaluate the $R^2$ over. 

A key observation is that the variance of asset returns is typically explained by a few latent risk factors in the PLS model whereas the DPLS model attributes the variance to a greater number of factors. One plausible explanation for this difference might be model reduction in a linear model--- for any stock in a given period, a linear model is characterized by its intercept and slope. Thus a small number of systemic factors are needed to capture the cross-sectional variability of the intercept and slope. In contrast, the DPLS model is not functionally equivalent nor can it be reduced to such a low dimensional set of risk factors due to the non-linearity. Thus one might arrive at different conclusions about the number of important systemic risk factors driving returns based on the inclusion of non-linearity or not.

\begin{table}[ht]
    \centering
    \resizebox{\columnwidth}{!}{%
    \begin{tabular}{|l|ccc|ccc|ccc|ccc|}
    \toprule    
    \hline
    & \multicolumn{3}{c|}{$K$=1} & \multicolumn{3}{c}{$K$=2} & \multicolumn{3}{|c|}{$K$=3} & \multicolumn{3}{c|}{$K$=4}\\
 \hline
    & PCA & PLS & DPLS & PCA & PLS & DPLS & PCA & PLS & DPLS & PCA & PLS & DPLS\\
\hline 

    Stocks (in-sample) & 0.0370 & 0.2182 & 0.1951 & 0.0589 & 0.2702 & 0.2280 & 0.0788 & 0.2887 & 0.2355 & 0.0976 & 0.2979 & 0.2382
\\
    Stocks (out-of-sample) & NA & 0.0858 & 0.0571 & NA & 0.0873 & 0.0735 & NA & 0.0790 & 0.0978 & NA & 0.0838 & 0.0971
 \\
    Portfolio (in-sample) & 0.6759 & 0.8993 & 0.7434 & 0.8301 & 0.7710 & 0.7630 & 0.8961 & 0.8303 & 0.7579 & 0.9132 & 0.8038 & 0.7351
 \\
    Portfolio (out-of-sample) & NA & 0.5070 & 0.5272 & NA & NA & 0.6097 & NA & NA & 0.5071 & NA & NA & 0.4656 
\\
\hline
     & \multicolumn{3}{c}{$K$=5} & \multicolumn{3}{|c|}{$K$=6}  & \multicolumn{3}{c|}{$K$=7} & \multicolumn{3}{c|}{$K$=8}\\
 \hline
    & PCA & PLS & DPLS & PCA & PLS & DPLS & PCA & PLS & DPLS & PCA & PLS & DPLS\\
\hline 

Stocks (in-sample) & 0.1132 &  0.3043 & 0.2320 & 0.1757 & 0.3108 & 0.2355 & 0.2697 & 0.3114 & 0.2453 & 0.3470 & 0.3114 & 0.2523\\

 Stocks (out-of-sample) & NA & 0.0489 & 0.1082 & NA &  0.0401 & 0.1240 & NA  & 0.0077 & 0.1395 & NA & -0.0039 & 0.1441\\
 
 Portfolio (in-sample) & 0.9359 & 0.8695 & 0.7357 & 0.9718 & 0.8878 & 0.7666 & 0.9653 & 0.8385 & 0.7757 & 0.9691 & 0.8301 & 0.7889\\
 
 Portfolio (out-of-sample) & NA & NA & 0.5496 & NA & NA & 0.5807 & NA & NA & 0.6281 & NA & NA & 0.6517\\

\hline
    \bottomrule    
    \end{tabular}
    }
    \caption{\textit{A comparison of the in-sample and out-of-sample total $R^2$ for individual stocks and managed equally weighted long-only portfolios for various numbers of latent risk factors across different statistical factor models. DPLS is shown to yield higher out-of-sample performance compared to PLS.}}
    \label{tab:factors}
\end{table}

In addition to understanding how the latent risk factors explain the asset returns and portfolio risk, we can use DPLS to explain the effect of non-linearity on the in-sample model prediction. After all, the non-linear factor structure is learned using the neural network composed with the PLS loadings. 

Figure \ref{fig:attribution} shows the excess monthly returns of an equally weighted, long-only, portfolio of top performing assets, as predicted by the DPLS factor model. The figures shows the attribution of the in-sample predicted excess monthly portfolio returns by the top three latent risk factors, quadratic term, higher order terms (H.O.T.), intercept and remaining risk factors. In any period, the quadratic term is found by a Taylor expansion about zero in the DPLS model. The higher order terms are found by subtracting the first two terms of the Taylor expansion from the DPLS prediction. The numbers in the parentheses denote the time averaged component values.

In periods of market dislocation, as exemplified by the collapse of Lehman Brothers in October 2008, we observe substantial quadratic effects (red) in the construction of the signal which highlight their importance in capturing the outliers. In other periods, where the volatility is lower, the effect of non-linearity is much more marginal.

 \begin{figure}
\centering
\includegraphics[width=\textwidth]{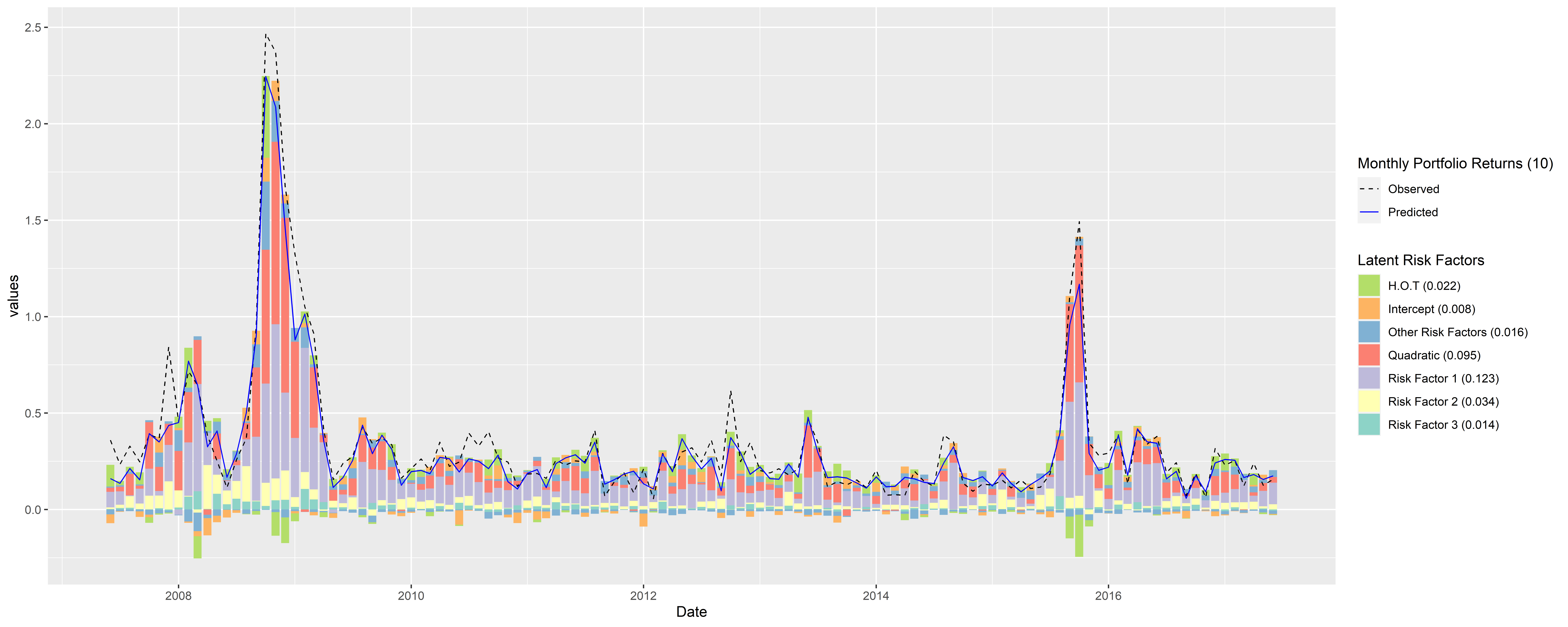} 
\caption{\textit{The attribution of the in-sample predicted excess monthly portfolio returns using the DPLS factor model. The predicted returns are attributed to the top three latent risk factors, quadratic term, higher order terms (H.O.T.), intercept and remaining risk factors. The numbers in the parentheses denote the time averaged component values.}}
\label{fig:attribution}
\end{figure}

\subsection{Computational considerations}
For completeness, we also comment on the computational advantages of using DPLS compared to NNs. Of course, PLS and NN require substantial computation to train in contrast with PLS or LASSO. 

Table \ref{tab:comp} below records the prediction performance and training times of the various methods, each cross-validated, for one period in the factor data. We see that the DPLS training times vary from approximately 2x up to 20x faster compared to a deep neural network and the number of parameters exceeds in one case more than 10x in reduction, providing substantial memory savings.
\begin{table}[!h]

\centering
\resizebox{\columnwidth}{!}{%
\begin{tabular}{|l|lll|ll|ll|l|}
\hline
Model Type& \#Predictors& \#Neurons per Layer& \#Parameters& MSE (in-sample)& MSE (out-of-sample)& $L_{\inf}$ Error (in-sample) & $L_{\inf}$ Error (out-of-sample)&Training Time (s)\\
\hline
DPLS (50\%)&14&100& 11701&0.022&0.033&0.908&0.963&3183.806\\
&&&&&&&&\\
DPLS (80\%)&28&50&4051&0.021&0.030&0.909&0.971&833.72\\
&&&&&&&&\\
DPLS (95\%)&37&100&14001&0.018&0.037&0.798&0.978&9870.077\\
&&&&&&&&\\
\hline
NN&49&200& 50401&0.019&0.028&0.798&1.065&19452.267\\
\hline
\end{tabular}
}
\caption{\textit{The prediction performance and training times of the various methods, each cross-validated, for one period in the factor data.}}
\label{tab:comp}
\end{table}

\section{Summary}\label{sect:summary}
This paper introduces a new approach to conditional latent factor modeling which combine PLS and deep learning to predict and explain stock returns. DPLS projects firm characteristics on to a smaller set of systemic dynamic risk factors by maximizing the covariance between characteristics and asset returns. Unlike other deep learning based factor models, DPLS factor models corresponds to a non-linear stochastic discount factor. Additionally the about of no-arbitrage can be estimated in the model over multiple periods and, if needed, the intercepts can be set to zero to enforce static no-arbitrage. 
DPLS is shown to identify the non-linear latent factor structure --- the convexity effects in the latent factor exposures can be quantified and used to estimate the extent to which linear risk premia fall short of explaining the risk reward trade-off.
Further more DPLS is shown to exhibit superior out-of-sample performance than OLS, LASSO, NNs, and PLS.

While DPLS is in its infancy, we comment on how it might gain adoption by portfolio managers. In our view, model risk is mitigated by running many models and establishing a consensus - this is a long established approach used in metereological forecasting for example.

The best use-case for the DPLS factor model is likely running it side-by-side a linear factor model and asking how the weights in the portfolios change under the inclusion of non-linearity. The non-linearity is likely to be a source of nuisance or a welcomed directional adjustment by portfolio managers as their exposure limits are adjusted up or down commensurately.

\bibliography{DPLS_paper} 

\begin{thebibliography}{}

\bibitem[\protect\citeauthoryear{Abadi, Barham, Chen, Chen, Davis, Dean,
  et~al.}{Abadi et~al.}{2016}]{abadi2016tensorflow}
Abadi, M., P.~Barham, J.~Chen, Z.~Chen, A.~Davis, J.~Dean, et~al. (2016).
\newblock Tensor{F}low: {A} {S}ystem for {L}arge-scale {M}achine {L}earning.
\newblock In {\em Proceedings of the 12th USENIX Conference on Operating
  Systems Design and Implementation}, OSDI'16, pp.\  265--283.

\bibitem[\protect\citeauthoryear{Algaba, Borms, Boudt, and Verbeken}{Algaba
  et~al.}{2021}]{ALGABA2021}
Algaba, A., S.~Borms, K.~Boudt, and B.~Verbeken (2021).
\newblock Daily news sentiment and monthly surveys: A mixed-frequency dynamic
  factor model for nowcasting consumer confidence.
\newblock {\em International Journal of Forecasting\/}.

\bibitem[\protect\citeauthoryear{Bai and Ng}{Bai and Ng}{2002}]{BaiandNg2002}
Bai, J. and S.~Ng (2002).
\newblock Determining the number of factors in approximate factor models.
\newblock {\em Econometrica\/}~{\em 70\/}(1), 191--221.

\bibitem[\protect\citeauthoryear{Bartlett, Harvey, Liaw, and
  Mehrabian}{Bartlett et~al.}{2017}]{DBLP:journals/corr/HarveyLM17}
Bartlett, P., N.~Harvey, C.~Liaw, and A.~Mehrabian (2017).
\newblock Nearly-tight vc-dimension bounds for piecewise linear neural
  networks.
\newblock {\em CoRR\/}~{\em abs/1703.02930}.

\bibitem[\protect\citeauthoryear{Bhadra, Datta, Li, and Polson}{Bhadra
  et~al.}{2020}]{RePEc:bla:istatr:v:88:y:2020:i:2:p:302-320}
Bhadra, A., J.~Datta, Y.~Li, and N.~Polson (2020, August).
\newblock {Horseshoe Regularisation for Machine Learning in Complex and Deep
  Models}.
\newblock {\em International Statistical Review\/}~{\em 88\/}(2), 302--320.

\bibitem[\protect\citeauthoryear{Brillinger}{Brillinger}{1977}]{brillinger77}
Brillinger, D.~R. (1977).
\newblock The identification of a particular nonlinear time series system.
\newblock {\em Biometrika\/}~{\em 64\/}(3), 509--515.

\bibitem[\protect\citeauthoryear{Brillinger}{Brillinger}{2012}]{brillinger_generalized_2012}
Brillinger, D.~R. (2012).
\newblock A {Generalized} {Linear} {Model} {With} “{Gaussian}” {Regressor}
  {Variables}.
\newblock In P.~Guttorp and D.~Brillinger (Eds.), {\em Selected {Works} of
  {David} {Brillinger}}, Selected {Works} in {Probability} and {Statistics},
  pp.\  589--606. New York, NY.

\bibitem[\protect\citeauthoryear{Carvalho, Lopes, Aguilar, and
  Mendoza}{Carvalho et~al.}{2012}]{Carvalho9}
Carvalho, C.~M., H.~Lopes, O.~Aguilar, and M.~Mendoza (2012, 01).
\newblock Dynamic stock selection strategies: A structured factor model
  framework.
\newblock {\em Bayesian Stats. 9\/}~{\em 9}.

\bibitem[\protect\citeauthoryear{Chen, Pelger, and Zhuz}{Chen
  et~al.}{2019}]{pelgery2019}
Chen, L., M.~Pelger, and J.~Zhuz (2019, March).
\newblock Deep learning in asset pricing.
\newblock Technical report, Stanford University.

\bibitem[\protect\citeauthoryear{Cheng, Liao, and Schorfheide}{Cheng
  et~al.}{2016}]{10.1093/restud/rdw005}
Cheng, X., Z.~Liao, and F.~Schorfheide (2016, 01).
\newblock {Shrinkage Estimation of High-Dimensional Factor Models with
  Structural Instabilities}.
\newblock {\em The Review of Economic Studies\/}~{\em 83\/}(4), 1511--1543.

\bibitem[\protect\citeauthoryear{Cochrane}{Cochrane}{2011}]{Cochrane2001}
Cochrane, J. (2011).
\newblock {\em Asset pricing}.
\newblock Princeton [u.a.]: Princeton Univ. Press.

\bibitem[\protect\citeauthoryear{Connor and Korajczyk}{Connor and
  Korajczyk}{1988}]{RePEc:eee:jfinec:v:21:y:1988:i:2:p:255-289}
Connor, G. and R.~Korajczyk (1988).
\newblock Risk and return in an equilibrium apt: Application of a new test
  methodology.
\newblock {\em Journal of Financial Economics\/}~{\em 21\/}(2), 255--289.

\bibitem[\protect\citeauthoryear{Dierckx, Davis, and Schoutens}{Dierckx
  et~al.}{2020}]{arxiv.2009.07947}
Dierckx, T., J.~Davis, and W.~Schoutens (2020).
\newblock Using machine learning and alternative data to predict movements in
  market risk.

\bibitem[\protect\citeauthoryear{Dixon}{Dixon}{2022}]{doi:10.1080/00401706.2021.1921035}
Dixon, M. (2022).
\newblock Industrial forecasting with exponentially smoothed recurrent neural
  networks.
\newblock {\em Technometrics\/}~{\em 64\/}(1), 114--124.

\bibitem[\protect\citeauthoryear{Dixon, Klabjan, and Bang}{Dixon
  et~al.}{2017}]{DBLP:journals/corr/DixonKB16}
Dixon, M., D.~Klabjan, and J.~H. Bang (2017).
\newblock Classification-based financial markets prediction using deep neural
  networks.
\newblock {\em Algorithmic Finance\/}~{\em 6\/}(3-4), 66--77.

\bibitem[\protect\citeauthoryear{Dixon and Polson}{Dixon and
  Polson}{2020}]{Dixon2020}
Dixon, M. and N.~Polson (2020).
\newblock Short communication: Deep fundamental factor models.
\newblock {\em SIAM Journal on Financial Mathematics\/}~{\em 11\/}(3),
  SC--26--SC--37.

\bibitem[\protect\citeauthoryear{Dixon and Polson}{Dixon and
  Polson}{2022}]{Dixon2022}
Dixon, M. and N.~Polson (2022).
\newblock Bayesian deep fundamental factor models.
\newblock In A.~Capponi and C.-A. Lehalle (Eds.), {\em Machine Learning in
  Financial Markets: A guide to contemporary practices}. Cambridge: Cambridge
  University Press.

\bibitem[\protect\citeauthoryear{Dobrev and Schaumburg}{Dobrev and
  Schaumburg}{2013}]{Dobrev2013RobustFB}
Dobrev, D. and E.~Schaumburg (2013).
\newblock Robust forecasting by regularization june 15 , 2013 preliminary and
  incomplete.

\bibitem[\protect\citeauthoryear{Fama and MacBeth}{Fama and
  MacBeth}{1973}]{10.2307/1831028}
Fama, E.~F. and J.~D. MacBeth (1973).
\newblock Risk, return, and equilibrium: Empirical tests.
\newblock {\em Journal of Political Economy\/}~{\em 81\/}(3), 607--636.

\bibitem[\protect\citeauthoryear{{Feng}, {He}, and {Polson}}{{Feng}
  et~al.}{2018}]{Feng2018}
{Feng}, G., J.~{He}, and N.~G. {Polson} (2018, Apr).
\newblock {Deep Learning for Predicting Asset Returns}.
\newblock {\em arXiv e-prints\/}, arXiv:1804.09314.

\bibitem[\protect\citeauthoryear{Frank and Friedman}{Frank and
  Friedman}{1993}]{frank_statistical_1993}
Frank, I.~E. and J.~H. Friedman (1993).
\newblock A {Statistical} {View} of {Some} {Chemometrics} {Regression} {Tools}.
\newblock {\em Technometrics\/}~{\em 35\/}(2), 109--135.

\bibitem[\protect\citeauthoryear{Golub and Van~Loan}{Golub and
  Van~Loan}{2013}]{golub2013matrix}
Golub, G.~H. and C.~F. Van~Loan (2013).
\newblock {\em Matrix computations}.
\newblock JHU press.

\bibitem[\protect\citeauthoryear{Gramacy and Lian}{Gramacy and
  Lian}{2012}]{gramacy2012gaussian}
Gramacy, R.~B. and H.~Lian (2012).
\newblock Gaussian process single-index models as emulators for computer
  experiments.
\newblock {\em Technometrics\/}~{\em 54\/}(1), 30--41.

\bibitem[\protect\citeauthoryear{Gu, Kelly, and Xiu}{Gu
  et~al.}{2021}]{GU2021429}
Gu, S., B.~Kelly, and D.~Xiu (2021).
\newblock Autoencoder asset pricing models.
\newblock {\em Journal of Econometrics\/}~{\em 222\/}(1, Part B), 429--450.
\newblock Annals Issue:Financial Econometrics in the Age of the Digital
  Economy.

\bibitem[\protect\citeauthoryear{Gu, Kelly, and Xiu}{Gu et~al.}{2018}]{Gu2018}
Gu, S., B.~T. Kelly, and D.~Xiu (2018).
\newblock Empirical asset pricing via machine learning.
\newblock Chicago Booth Research Paper 18-04.

\bibitem[\protect\citeauthoryear{Hahn, Carvalho, and Mukherjee}{Hahn
  et~al.}{2013}]{hahn_partial_2013}
Hahn, P.~R., C.~M. Carvalho, and S.~Mukherjee (2013).
\newblock Partial {Factor} {Modeling}: {Predictor}-{Dependent} {Shrinkage} for
  {Linear} {Regression}.
\newblock {\em Journal of the American Statistical Association\/}~{\em
  108\/}(503), 999--1008.

\bibitem[\protect\citeauthoryear{Harvey, Liu, and Zhu}{Harvey
  et~al.}{2015}]{10.1093/rfs/hhv059}
Harvey, C.~R., Y.~Liu, and H.~Zhu (2015, 10).
\newblock {… and the Cross-Section of Expected Returns}.
\newblock {\em The Review of Financial Studies\/}~{\em 29\/}(1), 5--68.

\bibitem[\protect\citeauthoryear{Helland}{Helland}{1990}]{helland_partial_1990}
Helland, I.~S. (1990).
\newblock Partial {Least} {Squares} {Regression} and {Statistical} {Models}.
\newblock {\em Scandinavian Journal of Statistics\/}~{\em 17\/}(2), 97--114.

\bibitem[\protect\citeauthoryear{Higdon, Gattiker, Williams, and
  Rightley}{Higdon et~al.}{2008}]{higdon2008computer}
Higdon, D., J.~Gattiker, B.~Williams, and M.~Rightley (2008).
\newblock Computer model calibration using high-dimensional output.
\newblock {\em Journal of the American Statistical Association\/}~{\em
  103\/}(482), 570--583.

\bibitem[\protect\citeauthoryear{Hoadley}{Hoadley}{2001}]{Hoadley2000}
Hoadley, B. (2001).
\newblock {Statistical Modeling: The Two Cultures: Comment}.
\newblock {\em Statistical Science\/}~{\em 16\/}(3), 220--224.

\bibitem[\protect\citeauthoryear{Horel and Giesecke}{Horel and
  Giesecke}{2019}]{enguerr2019significance}
Horel, E. and K.~Giesecke (2019).
\newblock Significance tests for neural networks.

\bibitem[\protect\citeauthoryear{Hornik, Stinchcombe, and White}{Hornik
  et~al.}{1989}]{Hornik1989}
Hornik, K., M.~Stinchcombe, and H.~White (1989, July).
\newblock Multilayer feedforward networks are universal approximators.
\newblock {\em Neural Netw.\/}~{\em 2\/}(5), 359--366.

\bibitem[\protect\citeauthoryear{Kelly and Pruitt}{Kelly and
  Pruitt}{2015}]{KELLY2015294}
Kelly, B. and S.~Pruitt (2015).
\newblock The three-pass regression filter: A new approach to forecasting using
  many predictors.
\newblock {\em Journal of Econometrics\/}~{\em 186\/}(2), 294--316.
\newblock High Dimensional Problems in Econometrics.

\bibitem[\protect\citeauthoryear{Kelly, Pruitt, and Su}{Kelly
  et~al.}{2019}]{KPS2019}
Kelly, B.~T., S.~Pruitt, and Y.~Su (2019).
\newblock Characteristics are covariances: A unified model of risk and return.
\newblock {\em Journal of Financial Economics\/}~{\em 134\/}(3), 501--524.

\bibitem[\protect\citeauthoryear{Kingma and Ba}{Kingma and
  Ba}{2014}]{https://doi.org/10.48550/arxiv.1412.6980}
Kingma, D.~P. and J.~Ba (2014).
\newblock Adam: A method for stochastic optimization.

\bibitem[\protect\citeauthoryear{Korteweg and Polson}{Korteweg and
  Polson}{2009}]{Korteweg2009CorporateCS}
Korteweg, A. and N.~G. Polson (2009).
\newblock Corporate credit spreads under parameter uncertainty.
\newblock {\em Chicago Booth Research Paper Series\/}.

\bibitem[\protect\citeauthoryear{Kozak, Nagel, and Santosh}{Kozak
  et~al.}{2018}]{KozakNagelSantosh2018JF}
Kozak, S., S.~Nagel, and S.~Santosh (2018).
\newblock Interpreting factor models.
\newblock {\em The Journal of Finance\/}~{\em 73\/}(3), 1183--1223.

\bibitem[\protect\citeauthoryear{Lettau and Pelger}{Lettau and
  Pelger}{2020}]{10.1093/rfs/hhaa020}
Lettau, M. and M.~Pelger (2020, 03).
\newblock {Factors That Fit the Time Series and Cross-Section of Stock
  Returns}.
\newblock {\em The Review of Financial Studies\/}~{\em 33\/}(5), 2274--2325.

\bibitem[\protect\citeauthoryear{Liland, Mevik, and Wehrens}{Liland
  et~al.}{2021}]{pls}
Liland, K.~H., B.-H. Mevik, and R.~Wehrens (2021).
\newblock {\em pls: Partial Least Squares and Principal Component Regression}.
\newblock R package version 2.8-0.

\bibitem[\protect\citeauthoryear{Manne}{Manne}{1987}]{manne_analysis_1987}
Manne, R. (1987).
\newblock Analysis of two partial-least-squares algorithms for multivariate
  calibration.
\newblock {\em Chemometrics and Intelligent Laboratory Systems\/}~{\em 2\/}(1),
  187--197.

\bibitem[\protect\citeauthoryear{Martin and Mahoney}{Martin and
  Mahoney}{2018}]{Martin2018}
Martin, C.~H. and M.~W. Mahoney (2018).
\newblock Implicit self-regularization in deep neural networks: Evidence from
  random matrix theory and implications for learning.
\newblock {\em CoRR\/}~{\em abs/1810.01075}.

\bibitem[\protect\citeauthoryear{Mhaskar, Liao, and Poggio}{Mhaskar
  et~al.}{2016}]{DBLP:journals/corr/MhaskarLP16}
Mhaskar, H., Q.~Liao, and T.~A. Poggio (2016).
\newblock Learning real and boolean functions: When is deep better than
  shallow.
\newblock {\em CoRR\/}~{\em abs/1603.00988}.

\bibitem[\protect\citeauthoryear{Moritz and Zimmermann}{Moritz and
  Zimmermann}{2016}]{moritz2016}
Moritz, B. and T.~Zimmermann (2016).
\newblock Tree-based conditional portfolio sorts: The relation between past and
  future stock returns.

\bibitem[\protect\citeauthoryear{Naik and Tsai}{Naik and Tsai}{2000}]{Naik2000}
Naik, P. and C.-L. Tsai (2000).
\newblock Partial least squares estimator for single-index models.
\newblock {\em Journal of the Royal Statistical Society: Series B (Statistical
  Methodology)\/}~{\em 62\/}(4), 763--771.

\bibitem[\protect\citeauthoryear{Nielsen and Bender}{Nielsen and
  Bender}{2010}]{Nielsen2010}
Nielsen, F. and J.~Bender (2010).
\newblock The fundamentals of fundamental factor models.
\newblock Technical Report~24, MSCI Barra Research Paper.

\bibitem[\protect\citeauthoryear{Poggio}{Poggio}{2016}]{poggio_deep_2016}
Poggio, T. (2016).
\newblock Deep {Learning}: {Mathematics} and {Neuroscience}.
\newblock {\em A Sponsored Supplement to Science\/}~{\em Brain-Inspired
  intelligent robotics: The intersection of robotics and neuroscience}, 9--12.

\bibitem[\protect\citeauthoryear{{Polson} and {Rockova}}{{Polson} and
  {Rockova}}{2018}]{2018arXiv180309138P}
{Polson}, N. and V.~{Rockova} (2018, Mar).
\newblock {Posterior Concentration for Sparse Deep Learning}.
\newblock {\em arXiv e-prints\/}, arXiv:1803.09138.

\bibitem[\protect\citeauthoryear{Polson, Sokolov, and Xu}{Polson
  et~al.}{2021}]{Polson2021}
Polson, N., V.~Sokolov, and J.~Xu (2021).
\newblock Deep learning partial least squares.

\bibitem[\protect\citeauthoryear{Polson and Scott}{Polson and
  Scott}{2010}]{polson2010}
Polson, N.~G. and J.~G. Scott (2010).
\newblock {\em Shrink {Globally}, {Act} {Locally}: {Sparse} {Bayesian}
  {Regularization} and {Prediction}}, Volume 105.
\newblock Oxford University Press.
\newblock Bayesian Statistics 9.

\bibitem[\protect\citeauthoryear{Polson and Scott}{Polson and
  Scott}{2012}]{polson2012}
Polson, N.~G. and J.~G. Scott (2012).
\newblock Local shrinkage rules, {L}évy processes and regularized regression.
\newblock {\em Journal of the Royal Statistical Society: Series B (Statistical
  Methodology)\/}~{\em 74\/}(2), 287--311.

\bibitem[\protect\citeauthoryear{Rosenberg and Marathe}{Rosenberg and
  Marathe}{1976}]{RePEc:ucb:calbrf:44}
Rosenberg, B. and V.~Marathe (1976).
\newblock Common factors in security returns: Microeconomic determinants and
  macroeconomic correlates.
\newblock Research Program in Finance Working Papers~44, University of
  California at Berkeley.

\bibitem[\protect\citeauthoryear{Rosipal, Girolami, Trejo, and
  Cichocki}{Rosipal et~al.}{2001}]{rosipal_kernel_2001}
Rosipal, R., M.~Girolami, L.~J. Trejo, and A.~Cichocki (2001).
\newblock Kernel {PCA} for {Feature} {Extraction} and {De}-{Noising} in
  {Nonlinear} {Regression}.
\newblock {\em Neural Computing \& Applications\/}~{\em 10\/}(3), 231--243.

\bibitem[\protect\citeauthoryear{Rosipal and Trejo}{Rosipal and
  Trejo}{2001}]{rosipal2001kernel}
Rosipal, R. and L.~J. Trejo (2001).
\newblock Kernel partial least squares regression in reproducing kernel hilbert
  space.
\newblock {\em Journal of machine learning research\/}~{\em 2\/}(Dec), 97--123.

\bibitem[\protect\citeauthoryear{Simon, Friedman, Hastie, and Tibshirani}{Simon
  et~al.}{2011}]{glmnet}
Simon, N., J.~Friedman, T.~Hastie, and R.~Tibshirani (2011).
\newblock Regularization paths for cox's proportional hazards model via
  coordinate descent.
\newblock {\em Journal of Statistical Software\/}~{\em 39\/}(5), 1--13.

\bibitem[\protect\citeauthoryear{Stein}{Stein}{2012}]{stein2012}
Stein, H. (2012).
\newblock Counterparty risk, cva, and basel iii.

\bibitem[\protect\citeauthoryear{Stock and Watson}{Stock and
  Watson}{2010}]{16320}
Stock, J. and M.~Watson (2010).
\newblock {\em Dynamic Factor Models}.
\newblock Oxford: Oxford University Press.

\bibitem[\protect\citeauthoryear{Swanson}{Swanson}{2014}]{swanson2014learning}
Swanson, T. (2014).
\newblock Learning from bitcoin's past to improve its future.

\bibitem[\protect\citeauthoryear{Telgarsky}{Telgarsky}{2016}]{DBLP:journals/corr/Telgarsky16}
Telgarsky, M. (2016).
\newblock Benefits of depth in neural networks.
\newblock {\em CoRR\/}~{\em abs/1602.04485}.

\bibitem[\protect\citeauthoryear{Tishby and Zaslavsky}{Tishby and
  Zaslavsky}{2015}]{TishbyZ15}
Tishby, N. and N.~Zaslavsky (2015).
\newblock Deep learning and the information bottleneck principle.
\newblock {\em CoRR\/}~{\em abs/1503.02406}.

\bibitem[\protect\citeauthoryear{Tsay}{Tsay}{2010}]{Tsay:2010}
Tsay, R.~S. (2010).
\newblock {\em {A}nalysis of {F}inancial {T}ime {S}eries\/} (3rd ed.).
\newblock Wiley.

\bibitem[\protect\citeauthoryear{Wold, Ruhe, Wold, and Dunn}{Wold
  et~al.}{1984}]{wold_collinearity_1984}
Wold, S., A.~Ruhe, H.~Wold, and W.~J. Dunn, III (1984).
\newblock The {Collinearity} {Problem} in {Linear} {Regression}. {The}
  {Partial} {Least} {Squares} ({PLS}) {Approach} to {Generalized} {Inverses}.
\newblock {\em SIAM Journal on Scientific and Statistical Computing\/}~{\em
  5\/}(3), 735--743.

\bibitem[\protect\citeauthoryear{Wood}{Wood}{2005}]{doi:10.1177/1094428105280059}
Wood, M. (2005).
\newblock Bootstrapped confidence intervals as an approach to statistical
  inference.
\newblock {\em Organizational Research Methods\/}~{\em 8\/}(4), 454--470.

\end{thebibliography}

\clearpage
\appendix

\section{Factor Model Description} \label{sect:appendix}
\begin{table}[h!]
\caption{\textit{A short description of the factors used in the Russell $1000$ deep learning factor model.}}
\resizebox{\columnwidth}{!}{
\begin{tabular}{l|l|l}
  
\hline
ID& Symbol & \textbf{Value Factors}\\

\hline
1 & B/P & Book to Price\\

2 & CF/P & Cash Flow to Price\\

3& E/P & Earning to Price\\

4 & S/EV & Sales to Enterprise Value (EV). EV is given by \\
 && EV=Market Cap + LT Debt + max(ST Debt-Cash,0), \\
 & & where LT (ST) stands for long (short) term\\

5& EB/EV&   EBIDTA to EV \\

6& FE/P & Forecasted E/P. Forecast Earnings are calculated from Bloomberg earnings consensus estimates data. \\
& & For coverage reasons, Bloomberg uses the 1-year and 2-year forward earnings.\\

17& DIV & Dividend yield. The exposure to this factor is just the most recently announced annual net dividends\\ &&  divided by the market price. \\
&& Stocks with high dividend yields have high exposures to this factor.\\

\hline

& & \textbf{Size Factors}\\

\hline

8 & MC & Log (Market Capitalization)\\

9& S & Log (Sales)\\

10 & TA & Log (Total Assets)\\

\hline

& &  \textbf{Trading Activity Factors}\\

\hline

11& TrA & Trading Activity is a turnover based measure. \\
& & Bloomberg focuses on turnover which is trading volume normalized by shares outstanding. \\
&& This indirectly controls for the Size effect. \\
&&The exponential weighted average (EWMA) of the ratio of shares traded to shares outstanding: \\
&& In addition, to mitigate the impacts of those sharp shortlived spikes in trading volume, \\
&& Bloomberg winsorizes the data: \\
&& first daily trading volume data is compared to the long-term EWMA volume(180 day half-life), \\
&& then the data is capped at 3 standard deviations away from the EWMA average.
\\

\hline

&& \textbf{Earnings Variability Factors}  \\

\hline

12 &EaV/TA & Earnings Volatility to Total Assets. \\
&& Earnings Volatility is measured \\

 && over the last 5 years/Median Total Assets over the last 5 years\\

13 & CFV/TA & Cash Flow Volatility to Total Assets. \\
&& Cash Flow Volatility is measured over the last 5 years/Median Total Assets over the last 5 years\\

14 & SV/TA &  Sales Volatility to Total Assets. \\
&& Sales Volatility over the last 5 years/Median Total Assets over the last 5 year\\

\hline 

& & \textbf{Volatility Factors}\\

\hline

15 & RV& Rolling Volatility which is the return volatility over the latest 252 trading days\\

16 & CB & Rolling CAPM Beta which is the regression coefficient\\
& & from the rolling window regression of stock returns on local index returns\\

\hline

&& \textbf{Growth Factors}\\
\hline
7& TAG & Total Asset Growth is the 5-year average growth in Total Assets\\ && divided by the Average Total Assets over the last 5 years\\
18 & EG & Earnings Growth is the 5-year average growth in Earnings\\ &&  divided by the Average Total Assets over the last 5 years\\
\hline
& & \textbf{GICS sectorial codes}\\
\hline
19-24 & (I)ndustry & $\{10, 20, 30, 40, 50, 60, 70\}$ \\
25-35 & (S)ub-(I)ndustry & $\{10, 15, 20, 25, 30, 35, 40, 45, 50, 60, 70, 80\}$\\
36-45 & (SE)ctor & $\{10, 15, 20, 25, 30, 35, 40, 45, 50, 55, 60\}$\\
46-49 & (I)ndustry (G)roup & $\{10, 20, 30, 40, 50\}$\\
\hline
\end{tabular}
}
\label{tab:r3000}
\end{table}
\section{Feedforward Network Sensitivities} \label{sect:sensitivities}
In a linear regression model
\be
\hat{Y} =F_{\mathbf{\beta}}(X):= \beta_0 + \beta_1X_1 + \dots +\beta_K X_K,
\ee
the model sensitivities are $\partial_{X_i} \hat{Y}=\beta_i$. In a feedforward neural network, we can use the chain rule to obtain the model sensitivities 
\be
J:=\partial_{X} \hat{Y} =W^{(L)}J(I^{(L-1)})=W^{(L)}D(I^{(L-1)})W^{(L-1)}\dots D(I^{(1)})W^{(1)},
\ee
where $D_{ii}(I):=\act'(I_i),~D_{ij}=0,~i\neq j$ is a diagonal matrix.

\subsection{Feedforward network interaction effects}\label{sect:interaction}  The pairwise interaction effects are readily available by evaluating the elements of the Hessian matrix. For a L layer network, we define the $(i,j)^{(th)}$ element of the Hessian of the response, $\hat{Y}=G_{\hat{\theta}}(X)$, as 
\be
\partial^2_{X_iX_j} \hat{Y}=\sum_{\ell=1}^{L-1} H_{i,j,\ell},~ H_{i,j,\ell}:=W^{(L)}D^{(L-1)}W^{(L-1)}\dots \partial_{X_j} D^{(\ell)}W^{(\ell)}\dots w_i^{(1)}.
\ee
where it is assumed that the activation function is at least twice differentiable everywhere, e.g. $tanh(x)$, softplus etc.

\section{Proof of Non-linear Stochastic Discount Factors}\label{sect:appendix:proof}

\begin{proof}
Denote the regression coefficients $a=g(\mathbf{0})\in\mathbb{R}, \mathbf{b}=g'(\mathbf{0}) \in\mathbb{R}^K, G=g''(\mathbf{0}) \in \mathbb{R}^{K \times K}$ are given by the Taylor expansion (to second order):
$$g(f)= a + \mathbf{b}^\top(\mathbf{f}-\mathbb{E}[\mathbf{f}]) + \frac{1}{2}(\mathbf{f}-\mathbb{E}[\mathbf{f}])^\top G (\mathbf{f}-\mathbb{E}[\mathbf{f}])+\dots$$

$$1=\mathbb{E}[mR]=a \mathbb{E}[R] + cov(\mathbf{f}^\top R)\Sigma^{-1}\Sigma \mathbf{b} + \frac{1}{2}\mathbb{E}[(\mathbf{f}-\mathbb{E}[\mathbf{f}])^\top G(\mathbf{f}-\mathbb{E}[\mathbf{f}])R] + \dots$$

Applying the trace trick, then using the invariance of the trace operator under cyclic permutations, and then the linearity of the trace operator

\begin{eqnarray*}
\mathbb{E}[(\mathbf{f}-\mathbb{E}[\mathbf{f}])^\top G(\mathbf{f}-\mathbb{E}[\mathbf{f}])R]&=& \mathbb{E}\left[tr[(\mathbf{f}-\mathbb{E}[\mathbf{f}])^\top G(\mathbf{f}-\mathbb{E}[\mathbf{f}])]R\right]\\
&=&  \mathbb{E}\left[tr[G(\mathbf{f}]-\mathbb{E}[\mathbf{f}])(\mathbf{f}]-\mathbb{E}[\mathbf{f}]^\top)]R\right]\\
&=& tr\left[\mathbb{E}[G(\mathbf{f}-\mathbb{E}[\mathbf{f}])(\mathbf{f}-\mathbb{E}[\mathbf{f}]^\top)R]\right]\\
&=& tr\left[G \mathbb{E}[(\mathbf{f}-\mathbb{E}[\mathbf{f}])(\mathbf{f}-\mathbb{E}[\mathbf{f}]^\top)R]\right]
\end{eqnarray*}

We can then write:
$$1=\mathbb{E}[mR]=a \mathbb{E}[R] + cov(\mathbf{f}^\top R)\Sigma^{-1}\Sigma \mathbf{b} + \frac{1}{2} \mathbf{b}\Sigma\Sigma^{-1}\mathbf{b}^{-1} tr\left[G \mathbb{E}[(\mathbf{f}-\mathbb{E}[\mathbf{f}])(\mathbf{f}-\mathbb{E}[\mathbf{f}])^\top R]\right]\mathbf{b}^{-\top}\Sigma^{-1}\Sigma \mathbf{b} + \dots, $$
where $x^{-\top}\equiv[x^{-1}]^\top$.
This expression can be simplified to 
$$\mathbb{E}[R]=\alpha  + \mathbf{\mybeta}^\top\mathbf{\mylambda} + \frac{1}{2} \mathbf{\mylambda}^\top\Gamma\mathbf{\mylambda} + \dots,$$

where $\alpha=\frac{1}{a}, \mathbf{\mybeta}=cov(\mathbf{f}R)\Sigma^{-1}, \mathbf{\mylambda}=-\Sigma \frac{\mathbf{b}}{a},$ and $\Gamma=\Sigma^{-1}\mathbf{b}^{-1} tr\left[aG \mathbb{E}[(\mathbf{f}-\mathbb{E}[\mathbf{f}])(\mathbf{f}-\mathbb{E}[\mathbf{f}])^\top R]\right]\mathbf{b}^{-\top}\Sigma^{-1}$ and the regression coefficients $\alpha=f(\mathbf{0})\in\mathbb{R}, \mathbf{\mybeta}=h'(\mathbf{0}) \in\mathbb{R}^K, \Gamma=h''(\mathbf{0}) \in \mathbb{R}^{K \times K}$ are given by the Taylor expansion (to second order):
$$g(\mathbf{\mylambda})= a + \mathbf{\mybeta}^\top\mathbb{\mylambda} + \frac{1}{2}\mathbf{\mylambda}^\top\Gamma\mathbf{\mylambda} + \dots $$

\end{proof}

\section{Additional Results} 
\subsection{Factor tilt heatmaps}\label{sect:factor_heatmaps}

\begin{figure}[H]
\centering
\begin{tabular}{cc}
\includegraphics[width=0.48\textwidth]{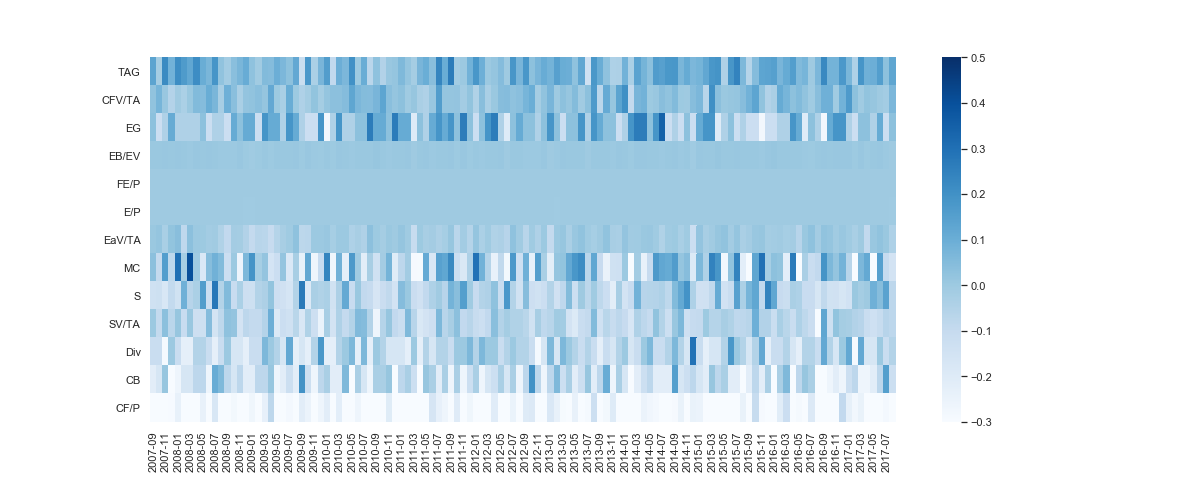} &
\includegraphics[width=0.48\textwidth]{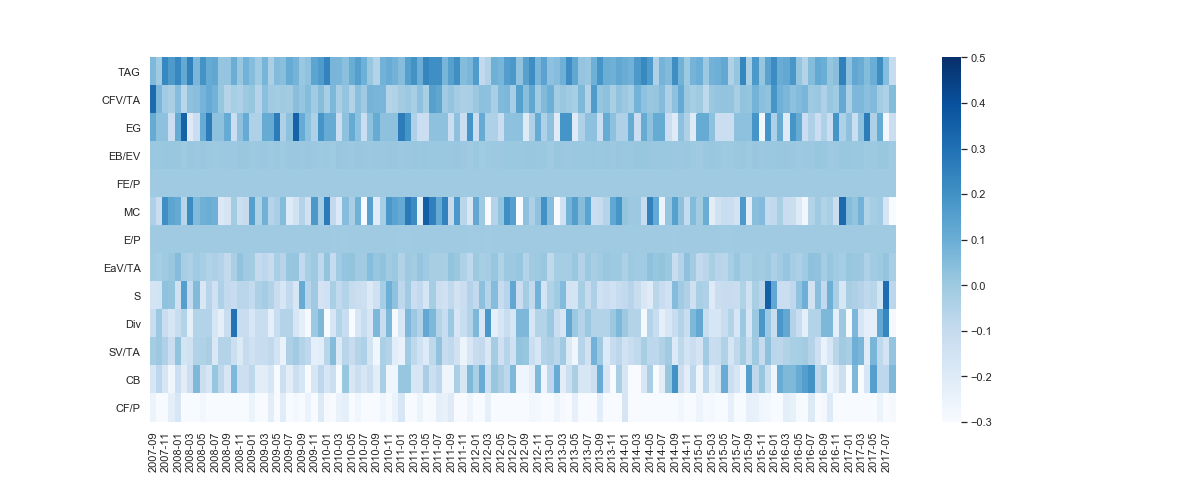}\\
(a)  DPLS(50\%) & (b)  NN\\
\includegraphics[width=0.48\textwidth]{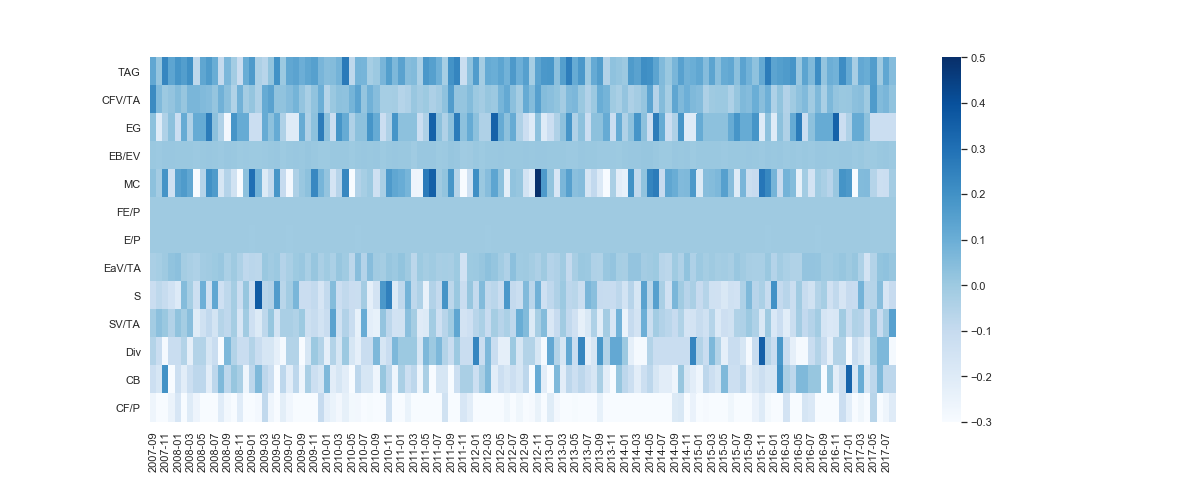} & 
\includegraphics[width=0.48\textwidth]{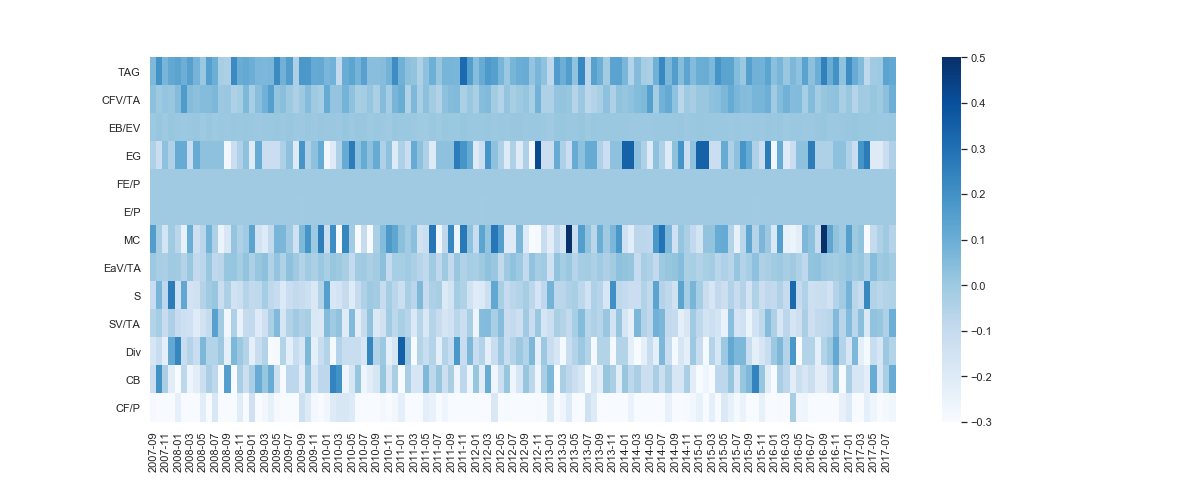}\\
(c)  LASSO & (d)  Random (WN)
\end{tabular}
\caption{\textit{The rescaled factor tilts of equally weighted portfolios constructed from the predicted top performing 50 stocks against time.}}
\label{fig:sector_tilts_heatmap}
\end{figure}

\subsection{Sector tilt heatmaps}\label{sect:sector_heatmaps}

\begin{figure}[H]
\centering
\begin{tabular}{cc}
\includegraphics[width=0.48\textwidth]{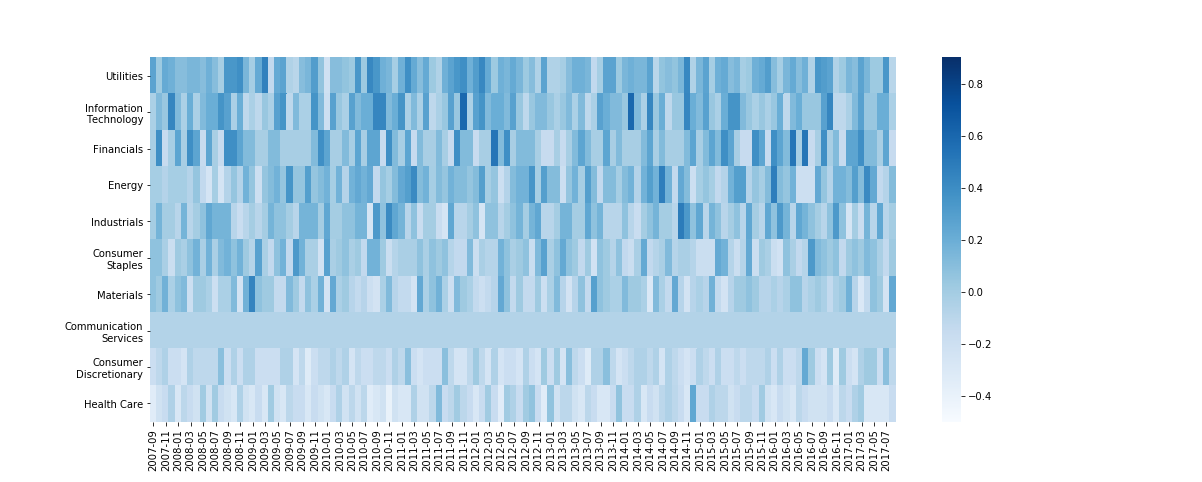} &
\includegraphics[width=0.48\textwidth]{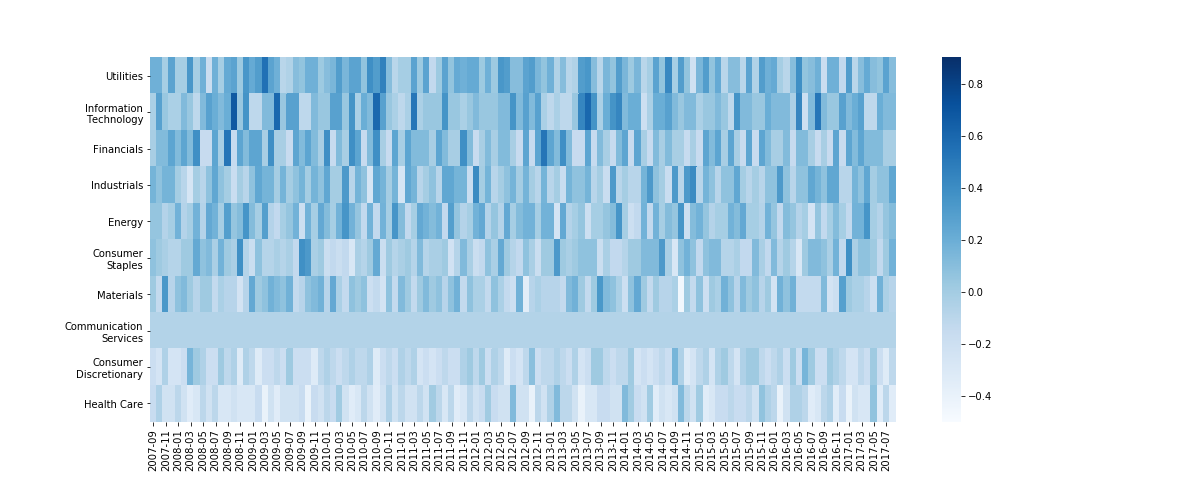}\\
(a)  DPLS(50\%) & (b)  NN\\
\includegraphics[width=0.48\textwidth]{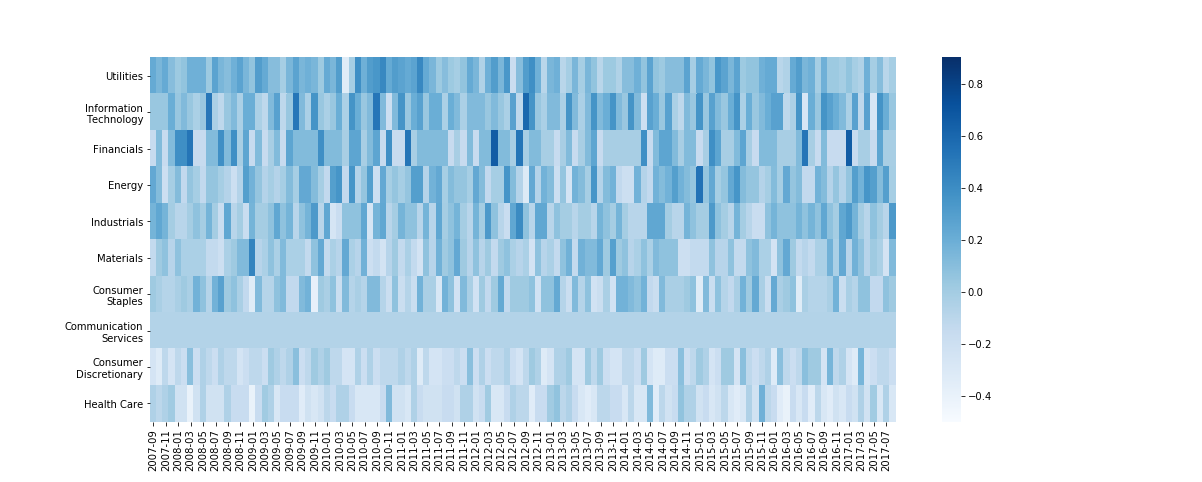} & 
\includegraphics[width=0.48\textwidth]{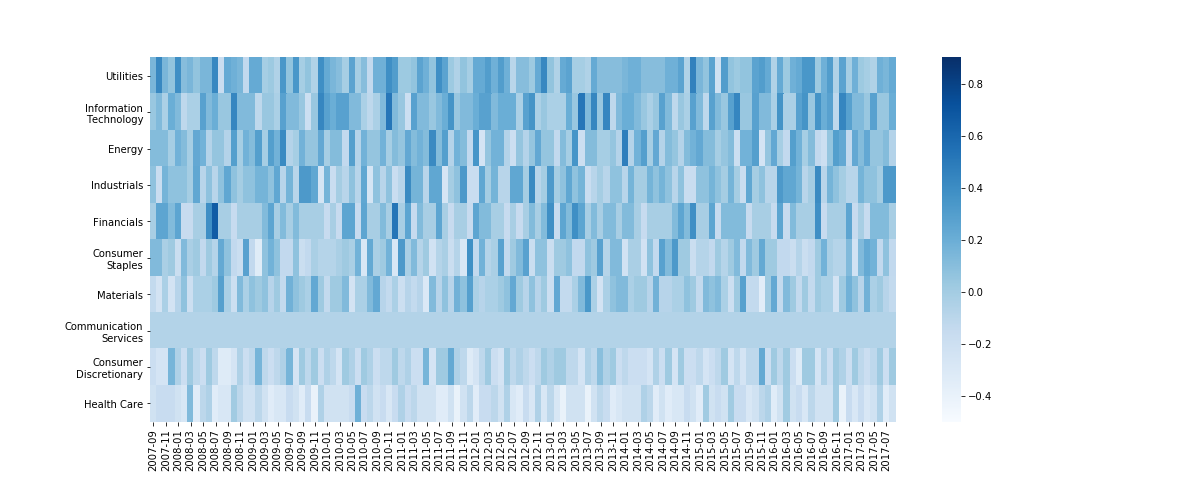}\\
(c)  LASSO & (d)  Random (WN)
\end{tabular}
\caption{\textit{The rescaled sector tilts of equally weighted portfolios constructed from the predicted top performing 50 stocks against time.}}
\label{fig:sector_tilts_heatmap}
\end{figure}

\end{document}